\tikzset{every picture/.append style={scale=.6}}
\tikzset{every node/.append style={scale=.6}}
\newtheorem{theorem}{Theorem}
\newtheorem{lemma}[theorem]{Lemma}
\newtheorem{remark}{Remark}[theorem]
\newtheorem{corollary}[theorem]{Corollary}
\def\squareforqed{\hbox{\rule{2.5mm}{2.5mm}}}
\def\QED{\ifmmode\squareforqed 
  \else{\nobreak\hfil   
    \penalty50                 
    \hskip1em                  
    \null                      
    \nobreak                   
    \hfil                      
    \squareforqed              
    \parfillskip=0pt           
    \finalhyphendemerits=0     
    \endgraf}                  
  \fi}
\def\blksquare{\rule{2mm}{2mm}}
\def\qedsymbol{\blksquare}
\newcommand{\bg}[1]{\medskip\noindent{\bf #1}}
\newcommand{\ed}{{\hfill\qedsymbol}\medskip}
\newenvironment{proofof}[1]{{\it{Proof of #1.}}}{\ed}
\newcommand{\R}{\ensuremath{\mathbb R}}
\newcommand{\argmin}{\ensuremath{\mathrm{argmin}}}
\newcommand{\comment}[1]{}
 {}
\newcommand{\junk}[1]{}
\newlength{\tmp} \newlength{\lpsx} \newlength{\lpsy} \newlength{\upsx} \newlength{\upsy}
\newcommand{\Omit}[1]{}
\newcommand{\allocv}{x^V}
\newcommand{\allocw}{x^W}
\newcommand{\pricev}{p^V}
\newcommand{\pricew}{p^W}
\begin{document}

\title{Truthful Multi-Parameter Auctions with Online Supply:\\ an Impossible Combination
}
\author{
Nikhil R. Devanur\thanks{Microsoft Research. \tt{nikdev@microsoft.com}} \and
Balasubramanian Sivan\thanks{Google research. \tt{balusivan@google.com}} \and
Vasilis Syrgkanis\thanks{Microsoft Research. \tt{vasy@microsoft.com}}
}
\maketitle{}

\begin{abstract}
We study a basic auction design problem with online supply. There are two
unit-demand bidders and two types of items. The first item type will arrive
first for sure, and the second item type may or may not arrive.  The auctioneer
has to decide the allocation of an item immediately after each item arrives, but
is allowed to compute payments after knowing how many items arrived. For this
problem we show that there is no deterministic truthful and individually
rational mechanism that, even with unbounded computational resources, gets any
\emph{finite} approximation factor to the optimal social welfare.  


\end{abstract}
%

\newpage
\setcounter{page}{1}

\section{Introduction}
We consider the following fundamental mechanism design problem with \emph{ online supply}. 
\begin{quote} 
	There are two unit-demand bidders\footnote{A unit demand bidder has a valuation of the following form: for a given set of items $S$, the valuation $v(S) = \max_{j \in S} v_{j} .$ She tries to maximize her utility, which is quasi linear, i.e., it is the valuation of the set of items allocated to her minus her payments. } and two types of items. To begin with, both bidders submit their bids for both items. 
	The first item type arrives first, for sure, and the mechanism has to allocate the item to one of the bidders right away. 
	Then the second item type may or may not arrive and is only allocated if it does. 
	The mechanism can compute payments at the end, after knowing whether the second item arrives. 
	The objective is to optimize social welfare, while ensuring that the mechanism is truthful in dominant strategies. 
\end{quote}

\paragraph{Relevance of the model} There has been a tremendous amount of work motivated by online advertising, but much of it studies either the online or truthfulness aspects separately. E.g., \citet{Mehta2007adwords} study the purely algorithmic problem, and \citet{Varian2007position,Edelman2005} consider just a single auction. 
 The goal here is to study these two aspects together, and the particular choices in our model reflect how online advertising works: 
 advertisers specify bids for possible item types ahead of time.  Items, which are ad opportunities, arrive online and must be allocated immediately upon arrival, as they perish otherwise. Payments can be computed at the end: advertisers are usually billed at regular intervals. The model can be easily extended to include capacity constraints larger than one and/or budget constraints. Optimizing social welfare is an important objective even for a for-profit seller in a competitive environment, as  keeping consumers happy in the long-run is crucial.

We focus on mechanisms that are deterministic, truthful in dominant strategies and individually rational (IR). 
Practical considerations require deterministic and IR mechanisms, while truthful in dominant strategies is the natural 
notion when no Bayesian assumptions are made about bidder types. 
Moreover, given the simplicity of the setting, it is natural to expect such a mechanism.

The  problem we consider (with two bidders and two items) is the simplest possible version that retains 
three crucial aspects: online supply, truthfulness, and multidimensional type spaces. 
Removing any of these aspects immediately yields a simple and easy solution, 
even with many bidders and many items. 
\begin{itemize}
	\item Offline version. If we know which items are available, we can simply run the VCG mechanism to get the optimal social welfare. 
	\item No truthfulness. If we only care about the algorithmic problem of maximizing welfare online, then this is a well known generalization of the online bipartite matching problem. The greedy algorithm is $\tfrac 1 2 $ competitive. 
	\item Single parameter types. There is only a single item type, and we may get an unknown number of copies of this item (\emph{identical items case}).  The type of an agent is just a single value. In this case, allocating the items to agents in the decreasing order of values optimizes social welfare. This is in fact the VCG allocation and is therefore truthful (but only because we can compute payments in the end). 
\end{itemize}
\vsedit{Existing work that combines both the online and then incentive nature of the problem (e.g. \citet{BabaioffBR15}) consider only non-trivial variants of the \emph{identical items case} and no prior work has addressed a multi-dimensional mechanism design setting (see Section \ref{sec:related} for detailed exposition).}

\subsection{Our Contribution}

\paragraph{Some examples} 
Given that there are only two bidders and two item types, 
one might expect some simple mechanism to give something like a $2$ approximation, 
but in reality it is otherwise.
Here are a few examples that we encourage the reader to work out and see what's good or bad about them before proceeding further. Let $v = (v_1, v_2)$ be the bids of bidder $V$ for items $1$ and $2$, and similarly let $w = (w_1, w_2)$ be the bids of bidder $W$. 

We start with some natural allocation algorithms. 
The question for these algorithms is whether there exist payments that make them truthful. 
\paragraph{Example 1: Greedy algorithm} It is well known  \cite{Feldman2009onlineFreeDisposal} that the following online greedy algorithm gets a competitive ratio of 2 for unit demand bidders, for the purely optimization problem without incentives, even with $n$ bidders and $m$ items. The algorithm assigns each item to the bidder with the highest marginal valuation, i.e., if bidder $i$ is currently assigned a set of items, the maximum item of which he values at $v_i'$, and his value for the new item is $v_i$, then his marginal valuation for item $i$ is $v_i - v_i'$.  

\paragraph{Example 2: A $\min\{m,n\}$ approximation algorithm} Assign each item to the highest bidder for that item. As an algorithm, this gets a factor of $\min\{m,n\}$, where $m$ is the number of items and $n$ is the number of agents. 


\paragraph{Example 3: Bundling} 
Both items are allocated to the bidder with the highest bid among $v_1,v_2,w_1,w_2$. 


\paragraph{Example 4: VCG-ish allocation}
Assign the first item to the highest bidder. Assign the second item according to the allocation that maximizes welfare, 
on hindsight, if could reallocate both items (i.e., the VCG allocation if we knew both items would arrive). 


With some effort, one can see that none of these algorithms have any payments that render them truthful.  
The one for which this is easiest to see is perhaps the Bundling example. 
Suppose that $v_2$ is the highest, with $w_1 >  v_1 \approx 0$, and only item 1 arrives.
We will allocate the item to $V$, but what should we charge her? 
IR forces the payment to be essentially 0. 
Then bidding very low on item 1 and very high on item 2 is a beneficial misreport when only item 1 arrives. 

One could ask if there are any reasonable allocation algorithms that are truthful. 
Indeed there are, and here's an example. 
\paragraph{Example 5: Discount mechanism} When item 1 arrives, assign it to the highest bidder. 
When item $2$ arrives, offer it to bidder $V$ at a price of $\max\{w_1,w_2\} + (v_1 - w_1)^+$, and offer it to bidder $W$ at a price of $\max\{v_1, v_2\} + (w_1 - v_1)^+$ (here $x^+ = \max(x,0)$). Whoever takes the second item, pays the posted price on the second item, and doesn't pay anything more even if they won the first item. If there is a bidder who won only the first item (either because second item didn't arrive, or he lost the second item), he pays the other bidder's bid for the first item. 


%
Unfortunately, Example $5$ does not get any finite approximation. 
In light of the above examples, we ask the following question:  


\begin{quote}
{\bf Main Question.} Does there exist a deterministic truthful+IR mechanism that, even with unbounded computational resources, gets a finite approximation factor to social welfare, when items arrive online?  
\end{quote}


Our main result is an {\em impossibility}: there is no deterministic truthful + IR mechanism that, even with unbounded computational resources, can get {\em any finite} approximation to social welfare. This is surprising, and is qualitatively different from other impossibilities, because optimal approximation ratios usually grow as a function of the number of bidders $n$, or the number of items $m$. This result shows a drastic contrast w.r.t the problem with a single item type (with identical copies), where we can implement the optimal allocation (for social welfare) when payments are computed at the end. 
Even if the payments are required to be computed on the fly for the single item case, as in~\citet{BabaioffBR15}, the approximation ratio is $\Theta(n)$, which is in stark contrast to the answer here,  which is $\infty$. 
We emphasize once again that this is the simplest possible problem for which one could have hoped for a positive answer, 
and it is quite surprising that the answer is an impossibility. 

\paragraph{Technical takeaways}
Traditional characterizations of truthfulness such as weak/cyclic monotonicity only consider the allocation to a single agent, fixing the reports of everyone else, and fixing the supply. Our first step is to extend this characterization when the allocation is required to satisfy sequential consistency: the allocation must satisfy weak monotonicity at every point of time, i.e., after the arrival of each item, the allocation for the set of items seen so far must satisfy weak monotonicity for every agent. 
Even this is not sufficient, as there are allocations such as the one in Example 5 that satisfy this sequential consistency. 
We further need to reason about how the allocation of one agent changes when we change the report of the other agent, 
under the constraint that it achieves a finite approximation. We give more details and intuition for how we do this in \Cref{sec:intuition}, and in the numerous figures throughout the paper.\footnote{We have generously added figures since they greatly help in understanding the structure, and as a result they significantly lengthen the paper. We estimate that the figures add about 4 pages.}

\paragraph{Randomization}
Our result implies that some sort of randomization is required to get any finite approximation to social welfare. 
Practically, randomization in the mechanism itself is not appealing. 
More practical options are to assume some randomization in the arrival process, as has been done for the algorithmic versions, 
or a Bayesian setting where there is randomization in the types. See \Cref{sec:conclusion} for a more detailed discussion on these options.

\subsection{Related Work}\label{sec:related}

\paragraph{Closely related work} The results closest to our work are the ones that consider non trivial variants of the \emph{identical items case}, since they still have to deal with the combination of truthfulness and online supply.

\begin{enumerate}
\item \citet{BabaioffBR15} study the variant where the mechanism has to 
\emph{compute payments immediately} after each item arrives. (As observed 
above, the problem is trivial when payments can be computed at the end.)
Even with this more stringent requirement, a trivial deterministic auction obtains an $n$ approximation in this setting, and they construct a simple universally truthful ``powers-of-2'' randomized mechanism that obtains an $O(\log n)$ approximation. On the negative side, they show that no randomized mechanism can achieve a factor better than $O(\log \log n)$ approximation. 
\item \citet{Goel2013clinching} study a variant where each bidder has a\emph{ budget constraint} which specifies a hard upper bound on the bidder's total payment across all items, and additive valuations. 
The goal is now to get a {\em Pareto optimal} allocation, rather than approximating the social welfare. 
They show that the adaptive clinching auction of \citet{Dobzinski2008} actually achieves this,  in the adversarial arrival model. This auction is truthful, and payments can be computed online too. 

\item \citet{Mahdian2006multi} and \citet{Devanur2009limited} consider the \emph{revenue} maximization objective, and 
show $O(1)$ approximation ratios. 
The payments here are allowed to be computed at the end.
\end{enumerate}

\paragraph{Other related Work} 
We now briefly discuss the broader landscape of online mechanisms and algorithms inspired by internet advertising.

There is a substantial body of work under \emph{dynamic mechanism design} that studies truthful mechanism design with online supply, but with the key difference being that the bidders' types themselves are evolving online 
\citep{athey2013efficient,bergemann2010dynamic,pavan2014dynamic,kakade2013optimal,papadimitriou2016complexity,ashlagi2016sequential}. 
The setting is typically Bayesian, and the emphasis is on truthful elicitation of these evolving types. This main source of difficulty is absent in our problem, since the bidder types are fixed and elicited once in the beginning; thus one would expect our problem to be easier. 
The other variant of online mechanisms is one where bidders arrive and depart online, while the supply is fixed and known
\citep{Lavi2000competitive,said2012auctions,gershkov2009dynamic,lavi2005online,Babaioff2007matroids,Hajiaghayi2004adaptive,Kleinberg2005multiple,Cole2008prompt}. 
Once again, the difficulties in these problems are orthogonal to that of ours; see \Cref{sec:intuition} for more discussion on this. 
For a survey on dynamic mechanism design, see \citet{bergemann2011dynamic,Parkes2007online,vohra2012dynamic}. 

Online advertising has been the dominant driver of the internet economy, accounting for almost all the revenue of two of the most important technology companies today, Google and Facebook. It has been the motivation for numerous lines of research, most prominently online matching and its generalizations \cite{Mehta2007adwords,Buchbinder2007online, Devanur2009adwords, Devanur2011near, Feldman2009onlineBeating, Feldman2009onlineFreeDisposal, Mahdian2007allocating}, and various aspects of auction design \cite{Varian2007position,Edelman2007,Aggarwal2006truthful,PaesLeme2010}. These two have largely remained separate: online algorithms ignore strategic considerations and auction design mostly considers a static setting.  It is natural to combine the two aspects, and ask for mechanisms that are both online and truthful, and many papers have done so, but capturing different aspects than us 
\cite{Balseiro2015repeated,Babaioff2010truthful,Babaioff2009characterizing,Devanur2009price}. 

\section{Model and Main Result}
We consider two bidders $V,W$ and two possible items, $1,2$. Bidder $V$ has type $v=(v_1,v_2)$, where $v_i$ is his value for item $i$. Similarly bidder $W$ has a type $w=(w_1,w_2)$. Both bidders are unit-demand. 
Item $1$ arrives in the first period. Item $2$ may or may not arrive in the second period.

 A mechanism is defined by its allocation and payment functions. We focus on deterministic mechanisms: the functions $\allocv_i:  \R_{+}^2 \times \R_{+}^2 \rightarrow \{0,1\}$  and $\allocw_i:  \R^2 \times \R^2 \rightarrow \{0,1\}$ denote the allocation of item $i$ to bidders $V$ and $W$ respectively, as a function of their ($4$-tuple) bids in $\R_{+}^2\times\R_{+}^2$. Similarly $\pricev(v,w)$ and $\pricew(v,w)$  denote the payments of bidders $V$ and $W$ as a function of their bids. We use $\R_{+}$ to denote the set of positive real numbers union $0$. 
 We require the mechanism to irrevocably decide the allocation of item 1 immediately after it arrives. However, the payment for the agents can be computed after two periods (i.e., after knowing how many items arrived).
 
 A  mechanism is dominant strategy incentive compatible (DSIC) if irrespective of whether $1$ or $2$ items arrive, irrespective of bidder $W$'s reports, it is a dominant strategy for bidder $V$ to report his true values for both the items at the beginning of the game, and vice versa. 
A mechanism is IR if similarly under no circumstances does the payment of an agent exceed her (reported) utility obtained by her allocation. 
Since we focus on DSIC mechanisms, we avoid different notation for bids and true values. 

\vsedit{We are interested in analyzing the welfare of the resulting outcome of the mechanism, which corresponds to the utility of the bidders plus the revenue of the auctioneer. This boils down to the total value of the resulting allocation. We are interested in comparing the latter with the welfare maximizing allocation, which in this setting corresponds to the maximum weighted matching between the bidders and the items. We say that a mechanism achieves an approximation factor $H$ if for any possible reports $v,w$, the allocation of the mechanism achieves welfare that is at least $1/H$ times the value of the welfare maximizing allocation for these reports.}

\vsedit{With these definitions at hand, we are now ready to state our main result:
\begin{theorem}[Main Theorem]
\label{thm:Main}
No deterministic DSIC+IR mechanism gets a finite approximation factor.
\end{theorem}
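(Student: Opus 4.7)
The plan is a three-stage argument that first characterizes each agent's allocation in isolation, then couples the two agents' allocations using global feasibility and finite approximation, and finally arrives at a contradiction by tracing a chain of perturbed type profiles.

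\textbf{Stage 1: sequential-consistency characterization.} Fix $w$; then bidder $V$ faces an online menu in which a reported type $v$ induces the item-$1$ allocation $\allocv_1(v,w)\in\{0,1\}$ (which must be decided before it is known whether item $2$ arrives), the possibly augmented allocation $\allocv(v,w)\in\{0,1\}^2$ when item $2$ does arrive, and two payments, one per scenario. I would show that DSIC is equivalent to weak/cyclic monotonicity holding simultaneously for the one-item outcome space and the two-item outcome space, with the side constraint that the first coordinate $\allocv_1$ is shared. Geometrically this says that $\{v:\allocv_1(v,w)=1\}$ is upward-closed in the $v_1$ direction, that the two-item allocation partitions the $(v_1,v_2)$ plane into at most four monotone regions, and that the two-item partition refines the one-item partition. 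The analogous statement holds with $V$ and $W$ interchanged.

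\textbf{Stage 2: coupling from global feasibility.} Any finite approximation together with IR forces item $1$ to be allocated in every profile (otherwise take $w=(0,0)$ with $v_1$ arbitrarily large), so $\allocv_1(v,w)+\allocw_1(v,w)=1$ identically, and similarly for item $2$ when it arrives. This entangles the monotone partitions of $V$ with those of $W$: whenever one agent's one-item region flips, the other's must flip oppositely, even though only one agent's report is varying. Finite approximation also pins down asymptotics --- when $w_2$ is very large, $W$ must receive item $2$ in the two-item scenario regardless of $v$, and symmetrically for the other boundaries of the plane. Together these yield local constraints on how the partition boundaries of $V$-space and $W$-space must move in tandem as one of the two types varies.

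\textbf{Stage 3: chain argument.} Finally, starting from a profile whose welfare-optimal matching assigns item $1$ to $W$ and item $2$ to $V$, I would perform a sequence of single-coordinate perturbations of $v$ and $w$, invoking the constraints of stages 1 and 2 at each step to conclude exactly how the allocation must change. Raising $v_1$ past $V$'s item-$1$ boundary forces $V$ to win item $1$ by sequential consistency, which by the coupling of stage 2 forces $W$ to win item $2$; monotonicity of $W$'s two-item menu then prevents $W$ from losing item $2$ when $w_2$ is subsequently reduced, while the finite-approximation constraint forces a different allocation once $v_2$ is raised enough --- eventually yielding two profiles differing only in a single coordinate of $w$ whose forced allocations violate weak monotonicity for $W$. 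The main obstacle will be stage 2: stage 1 is essentially a careful adaptation of Rochet-style cyclic monotonicity to an outcome space with an online consistency side-constraint, and stage 3 is a case analysis once the geometric picture is fixed, but stage 2 is where the multi-dimensional-with-online-supply difficulty lives. One has to turn the global feasibility identity into local constraints on how $V$'s and $W$'s boundary curves move in tandem as the off-diagonal type varies, and the large number of figures in the paper suggests the cleanest organization is to partition the $(v_1,v_2)$ plane into a bounded number of labelled regions and track how each region's boundary shifts as $w$ moves.
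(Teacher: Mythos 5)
Your plan has the right high-level shape (per-agent characterization, coupling via feasibility, contradiction by perturbation), and stage 1 matches the paper's opening lemmas: a vertical threshold $\pi_1(w)$ on $v_1$, a horizontal threshold $\pi_2(w)$ on $v_2$, and a diagonal boundary meeting the horizontal one continuously. But stage 2 overclaims. Finite approximation forces item $1$ to be allocated at every profile, yet not item $2$: if $v_2$ and $w_2$ are both tiny relative to the item-$1$ values, the mechanism may simply burn item $2$, so $\allocv_2+\allocw_2=1$ fails globally and your inference ``$V$ wins item $1$ $\Rightarrow$ $W$ wins item $2$'' does not follow. Your asymptotic claim ``when $w_2$ is very large, $W$ must receive item $2$ regardless of $v$'' is also false --- take $v_2$ larger still. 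The paper handles both problems by carving out a specific parameter box $S_{N,\epsilon}$ (large $w_2$, tiny $w_1$, small $v_1$) within which item $2$ is provably always allocated, and confining the coupling argument to that box.

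The larger issue is stage 3. You aim to produce two profiles differing in a single coordinate of $w$ whose forced allocations violate weak monotonicity for $W$; that is not where the contradiction lives. The paper's decisive step shows that the thresholds $\pi_1(w),\pi_2(w)$ cannot move \emph{at all} as $w_1$ ranges over $S_{N,\epsilon}$: any movement would force, at some $v$, a jump in $W$'s allocation from $\emptyset$ to $\{2\}$ or from $\emptyset$ to $\{1,2\}$, and both jumps are geometrically impossible when only $w_1$ (not $w_2$) varies, by the same region characterization applied to $W$. Frozen thresholds then imply $\allocw_1(v,w)$ is constant in $w_1$ over a whole interval, which contradicts finite approximation by a direct welfare-ratio calculation --- not by a monotonicity violation. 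Relatedly, your statement that monotonicity prevents $W$ from losing item $2$ when $w_2$ is reduced is wrong; lowering $w_2$ below $W$'s item-$2$ threshold does make $W$ lose item $2$. The missing ingredient is the impossible-transition argument; as sketched, your chain does not close.
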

Before delving into the technical exposition of the proof of the main theorem and in order to help the reader get a high level glimpse of our approach and of the main insights behind our proof, we first present in the next subsection a series of insights from our proof. Subsequently in Section \ref{sec:proof} we give the complete proof of Theorem \ref{thm:Main}}. 

\subsection{Intuition, Insights and Proof Map}\label{sec:intuition}

To build some intuition let's start with how an efficient DSIC+IR mechanism would look  had it had perfect foresight of the arrival sequence. If only item $1$ arrives, then the efficient DSIC+IR mechanism is the single item second price auction, which would allocate the item to player $V$ if:$v_1\geq w_1$ and to bidder $W$ otherwise (see Figure \ref{fig:perfect} (left)). However, if both items arrive then the efficient mechanism, i.e. the VCG mechanism, allocates item $1$ to player $V$ and item $2$ to player $W$ if: $v_1+w_2\geq v_2+w_1 \Leftrightarrow v_1-v_2\geq w_1-w_2$. Otherwise it exchanges the allocation of goods (see Figure \ref{fig:perfect} (right)).
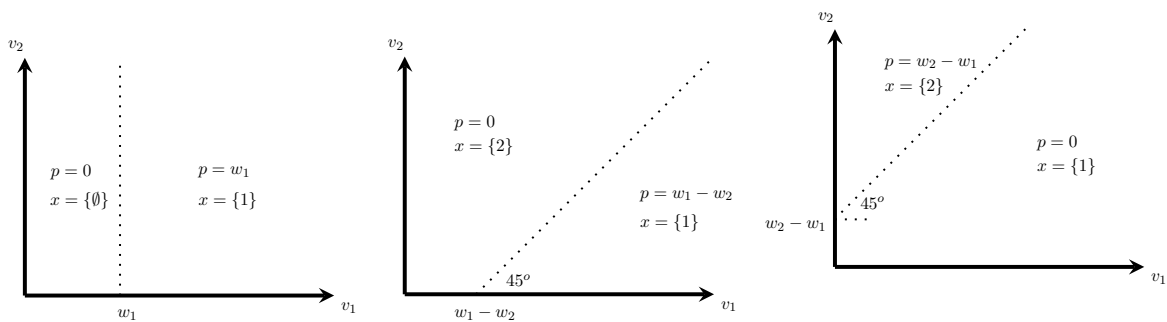
\begin{figure}[h]
\centering
\begin{subfigure}[b]{0.3\textwidth}
\centering
\ifx\du\undefined
  \newlength{\du}
\fi
\setlength{\du}{15\unitlength}
\begin{tikzpicture}
\pgftransformxscale{1.000000}
\pgftransformyscale{-1.000000}
\definecolor{dialinecolor}{rgb}{0.000000, 0.000000, 0.000000}
\pgfsetstrokecolor{dialinecolor}
\definecolor{dialinecolor}{rgb}{1.000000, 1.000000, 1.000000}
\pgfsetfillcolor{dialinecolor}
\pgfsetlinewidth{0.100000\du}
\pgfsetdash{}{0pt}
\pgfsetdash{}{0pt}
\pgfsetbuttcap
{
\definecolor{dialinecolor}{rgb}{0.000000, 0.000000, 0.000000}
\pgfsetfillcolor{dialinecolor}
\pgfsetarrowsend{stealth}
\definecolor{dialinecolor}{rgb}{0.000000, 0.000000, 0.000000}
\pgfsetstrokecolor{dialinecolor}
\draw (5.000000\du,20.000000\du)--(18.000000\du,20.000000\du);
}
\pgfsetlinewidth{0.100000\du}
\pgfsetdash{}{0pt}
\pgfsetdash{}{0pt}
\pgfsetbuttcap
{
\definecolor{dialinecolor}{rgb}{0.000000, 0.000000, 0.000000}
\pgfsetfillcolor{dialinecolor}
\pgfsetarrowsend{stealth}
\definecolor{dialinecolor}{rgb}{0.000000, 0.000000, 0.000000}
\pgfsetstrokecolor{dialinecolor}
\draw (5.000000\du,20.000000\du)--(5.000000\du,10.000000\du);
}
\definecolor{dialinecolor}{rgb}{0.000000, 0.000000, 0.000000}
\pgfsetstrokecolor{dialinecolor}
\node[anchor=west] at (12.000000\du,16.000000\du){$x=\{1\}$};
\definecolor{dialinecolor}{rgb}{0.000000, 0.000000, 0.000000}
\pgfsetstrokecolor{dialinecolor}
\node[anchor=west] at (6.000000\du,13.000000\du){};
\definecolor{dialinecolor}{rgb}{0.000000, 0.000000, 0.000000}
\pgfsetstrokecolor{dialinecolor}
\node[anchor=west] at (5.800000\du,14.800000\du){$p=0$};
\definecolor{dialinecolor}{rgb}{0.000000, 0.000000, 0.000000}
\pgfsetstrokecolor{dialinecolor}
\node[anchor=west] at (18.000000\du,20.500000\du){$v_1$};
\definecolor{dialinecolor}{rgb}{0.000000, 0.000000, 0.000000}
\pgfsetstrokecolor{dialinecolor}
\node[anchor=west] at (4.000000\du,9.500000\du){$v_2$};
\definecolor{dialinecolor}{rgb}{0.000000, 0.000000, 0.000000}
\pgfsetstrokecolor{dialinecolor}
\node[anchor=west] at (8.600000\du,20.800000\du){$w_1$};
\definecolor{dialinecolor}{rgb}{0.000000, 0.000000, 0.000000}
\pgfsetstrokecolor{dialinecolor}
\node[anchor=west] at (12.000000\du,14.800000\du){$p=w_1$};
\pgfsetlinewidth{0.0500000\du}
\pgfsetdash{{\pgflinewidth}{0.200000\du}}{0cm}
\pgfsetdash{{\pgflinewidth}{0.200000\du}}{0cm}
\pgfsetbuttcap
{
\definecolor{dialinecolor}{rgb}{0.000000, 0.000000, 0.000000}
\pgfsetfillcolor{dialinecolor}
\definecolor{dialinecolor}{rgb}{0.000000, 0.000000, 0.000000}
\pgfsetstrokecolor{dialinecolor}
\draw (9.000000\du,20.000000\du)--(9.000000\du,10.000000\du);
}
\definecolor{dialinecolor}{rgb}{0.000000, 0.000000, 0.000000}
\pgfsetstrokecolor{dialinecolor}
\node[anchor=west] at (5.800000\du,16.000000\du){$x=\{\emptyset\}$};
\end{tikzpicture}
\end{subfigure}
\begin{subfigure}[b]{0.3\textwidth}
\centering
\ifx\du\undefined
  \newlength{\du}
\fi
\setlength{\du}{15\unitlength}
\begin{tikzpicture}
\pgftransformxscale{1.000000}
\pgftransformyscale{-1.000000}
\definecolor{dialinecolor}{rgb}{0.000000, 0.000000, 0.000000}
\pgfsetstrokecolor{dialinecolor}
\definecolor{dialinecolor}{rgb}{1.000000, 1.000000, 1.000000}
\pgfsetfillcolor{dialinecolor}
\pgfsetlinewidth{0.100000\du}
\pgfsetdash{}{0pt}
\pgfsetdash{}{0pt}
\pgfsetbuttcap
{
\definecolor{dialinecolor}{rgb}{0.000000, 0.000000, 0.000000}
\pgfsetfillcolor{dialinecolor}
\pgfsetarrowsend{stealth}
\definecolor{dialinecolor}{rgb}{0.000000, 0.000000, 0.000000}
\pgfsetstrokecolor{dialinecolor}
\draw (5.000000\du,20.000000\du)--(18.000000\du,20.000000\du);
}
\pgfsetlinewidth{0.100000\du}
\pgfsetdash{}{0pt}
\pgfsetdash{}{0pt}
\pgfsetbuttcap
{
\definecolor{dialinecolor}{rgb}{0.000000, 0.000000, 0.000000}
\pgfsetfillcolor{dialinecolor}
\pgfsetarrowsend{stealth}
\definecolor{dialinecolor}{rgb}{0.000000, 0.000000, 0.000000}
\pgfsetstrokecolor{dialinecolor}
\draw (5.000000\du,20.000000\du)--(5.000000\du,10.000000\du);
}
\definecolor{dialinecolor}{rgb}{0.000000, 0.000000, 0.000000}
\pgfsetstrokecolor{dialinecolor}
\node[anchor=west] at (6.800000\du,20.800000\du){$w_1-w_2$};
\definecolor{dialinecolor}{rgb}{0.000000, 0.000000, 0.000000}
\pgfsetstrokecolor{dialinecolor}
\node[anchor=west] at (14.600000\du,17.000000\du){$x=\{1\}$};
\definecolor{dialinecolor}{rgb}{0.000000, 0.000000, 0.000000}
\pgfsetstrokecolor{dialinecolor}
\node[anchor=west] at (6.000000\du,13.000000\du){};
\definecolor{dialinecolor}{rgb}{0.000000, 0.000000, 0.000000}
\pgfsetstrokecolor{dialinecolor}
\node[anchor=west] at (6.800000\du,13.800000\du){$x=\{2\}$};
\definecolor{dialinecolor}{rgb}{0.000000, 0.000000, 0.000000}
\pgfsetstrokecolor{dialinecolor}
\node[anchor=west] at (18.000000\du,20.500000\du){$v_1$};
\definecolor{dialinecolor}{rgb}{0.000000, 0.000000, 0.000000}
\pgfsetstrokecolor{dialinecolor}
\node[anchor=west] at (4.000000\du,9.500000\du){$v_2$};
\definecolor{dialinecolor}{rgb}{0.000000, 0.000000, 0.000000}
\pgfsetstrokecolor{dialinecolor}
\node[anchor=west] at (6.800000\du,12.800000\du){$p=0$};
\definecolor{dialinecolor}{rgb}{0.000000, 0.000000, 0.000000}
\pgfsetstrokecolor{dialinecolor}
\node[anchor=west] at (14.600000\du,15.800000\du){$p=w_1-w_2$};
\pgfsetlinewidth{0.0500000\du}
\pgfsetdash{{\pgflinewidth}{0.200000\du}}{0cm}
\pgfsetdash{{\pgflinewidth}{0.200000\du}}{0cm}
\pgfsetbuttcap
{
\definecolor{dialinecolor}{rgb}{0.000000, 0.000000, 0.000000}
\pgfsetfillcolor{dialinecolor}
\definecolor{dialinecolor}{rgb}{0.000000, 0.000000, 0.000000}
\pgfsetstrokecolor{dialinecolor}
\draw (8.000000\du,20.000000\du)--(18.000000\du,10.000000\du);
}
\definecolor{dialinecolor}{rgb}{0.000000, 0.000000, 0.000000}
\pgfsetstrokecolor{dialinecolor}
\node[anchor=west] at (9.000000\du,19.400000\du){$45^o$};
\end{tikzpicture}
\end{subfigure}
\begin{subfigure}[b]{0.3\textwidth}
\centering
\ifx\du\undefined
  \newlength{\du}
\fi
\setlength{\du}{15\unitlength}
\begin{tikzpicture}
\pgftransformxscale{1.000000}
\pgftransformyscale{-1.000000}
\definecolor{dialinecolor}{rgb}{0.000000, 0.000000, 0.000000}
\pgfsetstrokecolor{dialinecolor}
\definecolor{dialinecolor}{rgb}{1.000000, 1.000000, 1.000000}
\pgfsetfillcolor{dialinecolor}
\pgfsetlinewidth{0.100000\du}
\pgfsetdash{}{0pt}
\pgfsetdash{}{0pt}
\pgfsetbuttcap
{
\definecolor{dialinecolor}{rgb}{0.000000, 0.000000, 0.000000}
\pgfsetfillcolor{dialinecolor}
\pgfsetarrowsend{stealth}
\definecolor{dialinecolor}{rgb}{0.000000, 0.000000, 0.000000}
\pgfsetstrokecolor{dialinecolor}
\draw (5.000000\du,20.000000\du)--(18.000000\du,20.000000\du);
}
\pgfsetlinewidth{0.100000\du}
\pgfsetdash{}{0pt}
\pgfsetdash{}{0pt}
\pgfsetbuttcap
{
\definecolor{dialinecolor}{rgb}{0.000000, 0.000000, 0.000000}
\pgfsetfillcolor{dialinecolor}
\pgfsetarrowsend{stealth}
\definecolor{dialinecolor}{rgb}{0.000000, 0.000000, 0.000000}
\pgfsetstrokecolor{dialinecolor}
\draw (5.000000\du,20.000000\du)--(5.000000\du,10.000000\du);
}
\definecolor{dialinecolor}{rgb}{0.000000, 0.000000, 0.000000}
\pgfsetstrokecolor{dialinecolor}
\node[anchor=west] at (13.200000\du,15.800000\du){$x=\{1\}$};
\definecolor{dialinecolor}{rgb}{0.000000, 0.000000, 0.000000}
\pgfsetstrokecolor{dialinecolor}
\node[anchor=west] at (6.000000\du,13.000000\du){};
\definecolor{dialinecolor}{rgb}{0.000000, 0.000000, 0.000000}
\pgfsetstrokecolor{dialinecolor}
\node[anchor=west] at (6.800000\du,12.400000\du){$x=\{2\}$};
\definecolor{dialinecolor}{rgb}{0.000000, 0.000000, 0.000000}
\pgfsetstrokecolor{dialinecolor}
\node[anchor=west] at (18.000000\du,20.500000\du){$v_1$};
\definecolor{dialinecolor}{rgb}{0.000000, 0.000000, 0.000000}
\pgfsetstrokecolor{dialinecolor}
\node[anchor=west] at (4.000000\du,9.500000\du){$v_2$};
\definecolor{dialinecolor}{rgb}{0.000000, 0.000000, 0.000000}
\pgfsetstrokecolor{dialinecolor}
\node[anchor=west] at (6.800000\du,11.400000\du){$p=w_2-w_1$};
\definecolor{dialinecolor}{rgb}{0.000000, 0.000000, 0.000000}
\pgfsetstrokecolor{dialinecolor}
\node[anchor=west] at (13.200000\du,14.800000\du){$p=0$};
\pgfsetlinewidth{0.0500000\du}
\pgfsetdash{{\pgflinewidth}{0.200000\du}}{0cm}
\pgfsetdash{{\pgflinewidth}{0.200000\du}}{0cm}
\pgfsetbuttcap
{
\definecolor{dialinecolor}{rgb}{0.000000, 0.000000, 0.000000}
\pgfsetfillcolor{dialinecolor}
\definecolor{dialinecolor}{rgb}{0.000000, 0.000000, 0.000000}
\pgfsetstrokecolor{dialinecolor}
\draw (4.977232\du,18.001073\du)--(13.000000\du,10.000000\du);
}
\definecolor{dialinecolor}{rgb}{0.000000, 0.000000, 0.000000}
\pgfsetstrokecolor{dialinecolor}
\node[anchor=west] at (5.800000\du,17.300000\du){$45^o$};
\definecolor{dialinecolor}{rgb}{0.000000, 0.000000, 0.000000}
\pgfsetstrokecolor{dialinecolor}
\node[anchor=west] at (1.800000\du,18.200000\du){$w_2-w_1$};
\pgfsetlinewidth{0.0500000\du}
\pgfsetdash{{\pgflinewidth}{0.200000\du}}{0cm}
\pgfsetdash{{\pgflinewidth}{0.200000\du}}{0cm}
\pgfsetbuttcap
{
\definecolor{dialinecolor}{rgb}{0.000000, 0.000000, 0.000000}
\pgfsetfillcolor{dialinecolor}
\definecolor{dialinecolor}{rgb}{0.000000, 0.000000, 0.000000}
\pgfsetstrokecolor{dialinecolor}
\draw (5.000000\du,18.000000\du)--(6.400000\du,18.000000\du);
}
\end{tikzpicture}
\end{subfigure}
\caption{VCG allocation and payment when only one item arrives (left) and when two items arrive (two right).}\label{fig:perfect}
\end{figure} 

Observe that the above two allocation rules are not sequentially consistent, i.e. if the mechanism decides the allocation of item $1$ to player $V$ be simply doing the comparison $v_1\geq w_1$, then there will be cases of $v,w$, where in the second stage, if item $2$ happens to arrive, the efficient mechanism would want to change the allocation of item $1$. In particular, the latter happens if $v_1\geq w_1$ but $v_1-v_2 < w_1-w_2$. Thus we have to change the way that we allocate goods on some of the two days. 

\vsedit{Since VCG is not sequentially consistent, an alternative approach would be to start from a sequentially consistent and approximately efficient allocation algorithm. The most well known and natural one is the \emph{greedy algorithm} which allocates each arriving item greedily to the bidder with the highest current marginal valuation for it. Thus the first item would be allocated to player $V$ if $v_1\geq w_1$. If player $V$ wins item $1$, the second item would also be allocated to him if $v_2-v_1\geq w_1$. If he lost item $1$, the he is allocated item $2$ if $v_2\geq w_2-w_1$. This allocation function is depicted in Figure \ref{fig:greedy}.
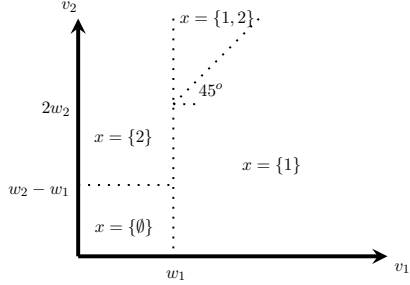
\begin{figure}[h]
\centering
\begin{subfigure}[b]{0.45\textwidth}
\centering
\ifx\du\undefined
  \newlength{\du}
\fi
\setlength{\du}{15\unitlength}
\begin{tikzpicture}
\pgftransformxscale{1.000000}
\pgftransformyscale{-1.000000}
\definecolor{dialinecolor}{rgb}{0.000000, 0.000000, 0.000000}
\pgfsetstrokecolor{dialinecolor}
\definecolor{dialinecolor}{rgb}{1.000000, 1.000000, 1.000000}
\pgfsetfillcolor{dialinecolor}
\pgfsetlinewidth{0.100000\du}
\pgfsetdash{}{0pt}
\pgfsetdash{}{0pt}
\pgfsetbuttcap
{
\definecolor{dialinecolor}{rgb}{0.000000, 0.000000, 0.000000}
\pgfsetfillcolor{dialinecolor}
\pgfsetarrowsend{stealth}
\definecolor{dialinecolor}{rgb}{0.000000, 0.000000, 0.000000}
\pgfsetstrokecolor{dialinecolor}
\draw (5.000000\du,20.000000\du)--(18.000000\du,20.000000\du);
}
\pgfsetlinewidth{0.100000\du}
\pgfsetdash{}{0pt}
\pgfsetdash{}{0pt}
\pgfsetbuttcap
{
\definecolor{dialinecolor}{rgb}{0.000000, 0.000000, 0.000000}
\pgfsetfillcolor{dialinecolor}
\pgfsetarrowsend{stealth}
\definecolor{dialinecolor}{rgb}{0.000000, 0.000000, 0.000000}
\pgfsetstrokecolor{dialinecolor}
\draw (5.000000\du,20.000000\du)--(5.000000\du,10.000000\du);
}
\definecolor{dialinecolor}{rgb}{0.000000, 0.000000, 0.000000}
\pgfsetstrokecolor{dialinecolor}
\node[anchor=west] at (8.400000\du,20.800000\du){$w_1$};
\definecolor{dialinecolor}{rgb}{0.000000, 0.000000, 0.000000}
\pgfsetstrokecolor{dialinecolor}
\node[anchor=west] at (11.600000\du,16.200000\du){$x=\{1\}$};
\definecolor{dialinecolor}{rgb}{0.000000, 0.000000, 0.000000}
\pgfsetstrokecolor{dialinecolor}
\node[anchor=west] at (6.000000\du,13.000000\du){};
\definecolor{dialinecolor}{rgb}{0.000000, 0.000000, 0.000000}
\pgfsetstrokecolor{dialinecolor}
\node[anchor=west] at (5.400000\du,18.800000\du){$x=\{\emptyset\}$};
\definecolor{dialinecolor}{rgb}{0.000000, 0.000000, 0.000000}
\pgfsetstrokecolor{dialinecolor}
\node[anchor=west] at (18.000000\du,20.500000\du){$v_1$};
\definecolor{dialinecolor}{rgb}{0.000000, 0.000000, 0.000000}
\pgfsetstrokecolor{dialinecolor}
\node[anchor=west] at (4.000000\du,9.500000\du){$v_2$};
\pgfsetlinewidth{0.0500000\du}
\pgfsetdash{{\pgflinewidth}{0.200000\du}}{0cm}
\pgfsetdash{{\pgflinewidth}{0.200000\du}}{0cm}
\pgfsetbuttcap
{
\definecolor{dialinecolor}{rgb}{0.000000, 0.000000, 0.000000}
\pgfsetfillcolor{dialinecolor}
\definecolor{dialinecolor}{rgb}{0.000000, 0.000000, 0.000000}
\pgfsetstrokecolor{dialinecolor}
\draw (9.000000\du,13.600000\du)--(12.600000\du,10.000000\du);
}
\definecolor{dialinecolor}{rgb}{0.000000, 0.000000, 0.000000}
\pgfsetstrokecolor{dialinecolor}
\node[anchor=west] at (9.800000\du,13.000000\du){$45^o$};
\pgfsetlinewidth{0.0500000\du}
\pgfsetdash{{\pgflinewidth}{0.200000\du}}{0cm}
\pgfsetdash{{\pgflinewidth}{0.200000\du}}{0cm}
\pgfsetbuttcap
{
\definecolor{dialinecolor}{rgb}{0.000000, 0.000000, 0.000000}
\pgfsetfillcolor{dialinecolor}
\definecolor{dialinecolor}{rgb}{0.000000, 0.000000, 0.000000}
\pgfsetstrokecolor{dialinecolor}
\draw (9.000000\du,10.000000\du)--(9.000000\du,20.000000\du);
}
\pgfsetlinewidth{0.0500000\du}
\pgfsetdash{{\pgflinewidth}{0.200000\du}}{0cm}
\pgfsetdash{{\pgflinewidth}{0.200000\du}}{0cm}
\pgfsetbuttcap
{
\definecolor{dialinecolor}{rgb}{0.000000, 0.000000, 0.000000}
\pgfsetfillcolor{dialinecolor}
\definecolor{dialinecolor}{rgb}{0.000000, 0.000000, 0.000000}
\pgfsetstrokecolor{dialinecolor}
\draw (9.000000\du,13.600000\du)--(10.000000\du,13.600000\du);
}
\pgfsetlinewidth{0.0500000\du}
\pgfsetdash{{\pgflinewidth}{0.200000\du}}{0cm}
\pgfsetdash{{\pgflinewidth}{0.200000\du}}{0cm}
\pgfsetbuttcap
{
\definecolor{dialinecolor}{rgb}{0.000000, 0.000000, 0.000000}
\pgfsetfillcolor{dialinecolor}
\definecolor{dialinecolor}{rgb}{0.000000, 0.000000, 0.000000}
\pgfsetstrokecolor{dialinecolor}
\draw (5.000000\du,17.000000\du)--(9.000000\du,17.000000\du);
}
\definecolor{dialinecolor}{rgb}{0.000000, 0.000000, 0.000000}
\pgfsetstrokecolor{dialinecolor}
\node[anchor=west] at (1.800000\du,17.200000\du){$w_2-w_1$};
\definecolor{dialinecolor}{rgb}{0.000000, 0.000000, 0.000000}
\pgfsetstrokecolor{dialinecolor}
\node[anchor=west] at (3.200000\du,13.800000\du){$2w_2$};
\definecolor{dialinecolor}{rgb}{0.000000, 0.000000, 0.000000}
\pgfsetstrokecolor{dialinecolor}
\node[anchor=west] at (5.400000\du,15.000000\du){$x=\{2\}$};
\definecolor{dialinecolor}{rgb}{0.000000, 0.000000, 0.000000}
\pgfsetstrokecolor{dialinecolor}
\node[anchor=west] at (9.000000\du,10.00000\du){$x=\{1,2\}$};
\end{tikzpicture}
\end{subfigure}
\caption{The allocation function of the online greedy algorithm.}\label{fig:greedy}
\end{figure}
However, this allocation cannot be made truthful by any payment scheme. The main reason behind this is the discontinuity of the allocation of item $2$ as the value of player $V$ for item $1$ ranges from $[0,\infty)$. In the region around $v_1\in [w_1-\delta,w_1+\delta]$, the threshold value for winning item $2$ is a discontinuous function of the value of player $V$ for item $1$. This creates an incentive for a type $v$ who lies around this point of discontinuity and loses item $2$, to lie about his value for item $1$ and win item $2$ instead, paying the lower of the two thresholds around the threshold discontinuity point.}

\vsedit{After examining how simple solutions fail, we now move to sketching the proof of our main result and outlaying our characterization of DSIC+IR mechanisms. Characterizing truthful mechanisms, even in a multi-dimensional setting, is fairly well understood 
in the case of a single agent  \cite{Rochet1987necessary,Archer2014truthful,Saks2005weak,Frongillo2014general}
(equivalently, when the bids of the other agents are fixed). As a first step, our characterization also needs to blend in the requirement of sequential consistency to these characterization.\footnote{We note that potentially some preliminary lemmas that we present could be derived by ``heavy hammers'' in mechanism design, such as weak monotonicity. However, we felt that invoking such results, for proving very preliminary basic lemmas would do more to distract the reader than to help in the understanding. Hence we prove them from first principles, since our setting of two bidders and two items is simple enough). The full characterization presented in Figure \ref{fig:char-main} cannot be shown by simply invoking such results and requires more subtle arguments that we present.}}

Our characterization starts from observing that since the mechanism has to be DSIC+IR even if one item arrives, then it has to be that for any value $w$ of $W$, the allocation of item $1$ to player $V$ has to look like a single item allocation, i.e. he gets the item if his value $v_1$ for item $1$ is above some threshold $\pi_1(w)$. Given this, DSIC+IR constraints, together with the fact that the mechanism must be achieving a finite approximation, gives us a complete characterization of how the allocation and payment function of a single player must look  for the second item, for any fixed valuation of his opponent (modulo a corner case that we deal with in the technical part). 

In particular, we show that for any fixed $w$ of player $W$, the allocation of item $2$ to player $V$, happens if:
\begin{itemize}
\item Player $V$ loses item $1$, i.e. $v_1\leq \pi_1(w)$ and $v_2\geq \pi_2(w)$ for some threshold $\pi_2(w)\geq \pi_1(w)$
\item Player $V$ wins item $1$ and $v_2-v_1 \geq \pi_2(w)-\pi_1(w)$.
\end{itemize}
In any case if player $V$ happens to win item $2$, then he has to pay $\pi_2(w)$, irrespective of whether he also won or not item $1$. This characterization is given pictorially in Figure \ref{fig:char-main}. The analogue characterization also holds for player $W$, keeping player $V$'s valuation $v$ fixed.
\begin{figure}[h]
\centering
\begin{subfigure}[b]{0.45\textwidth}
\centering
\ifx\du\undefined
  \newlength{\du}
\fi
\setlength{\du}{15\unitlength}
\begin{tikzpicture}
\pgftransformxscale{1.000000}
\pgftransformyscale{-1.000000}
\definecolor{dialinecolor}{rgb}{0.000000, 0.000000, 0.000000}
\pgfsetstrokecolor{dialinecolor}
\definecolor{dialinecolor}{rgb}{1.000000, 1.000000, 1.000000}
\pgfsetfillcolor{dialinecolor}
\pgfsetlinewidth{0.100000\du}
\pgfsetdash{}{0pt}
\pgfsetdash{}{0pt}
\pgfsetbuttcap
{
\definecolor{dialinecolor}{rgb}{0.000000, 0.000000, 0.000000}
\pgfsetfillcolor{dialinecolor}
\pgfsetarrowsend{stealth}
\definecolor{dialinecolor}{rgb}{0.000000, 0.000000, 0.000000}
\pgfsetstrokecolor{dialinecolor}
\draw (5.000000\du,20.000000\du)--(18.000000\du,20.000000\du);
}
\pgfsetlinewidth{0.100000\du}
\pgfsetdash{}{0pt}
\pgfsetdash{}{0pt}
\pgfsetbuttcap
{
\definecolor{dialinecolor}{rgb}{0.000000, 0.000000, 0.000000}
\pgfsetfillcolor{dialinecolor}
\pgfsetarrowsend{stealth}
\definecolor{dialinecolor}{rgb}{0.000000, 0.000000, 0.000000}
\pgfsetstrokecolor{dialinecolor}
\draw (5.000000\du,20.000000\du)--(5.000000\du,10.000000\du);
}
\pgfsetlinewidth{0.0500000\du}
\pgfsetdash{{\pgflinewidth}{0.200000\du}}{0cm}
\pgfsetdash{{\pgflinewidth}{0.200000\du}}{0cm}
\pgfsetbuttcap
{
\definecolor{dialinecolor}{rgb}{0.000000, 0.000000, 0.000000}
\pgfsetfillcolor{dialinecolor}
\definecolor{dialinecolor}{rgb}{0.000000, 0.000000, 0.000000}
\pgfsetstrokecolor{dialinecolor}
\draw (8.400000\du,19.800000\du)--(8.400000\du,9.800000\du);
}
\definecolor{dialinecolor}{rgb}{0.000000, 0.000000, 0.000000}
\pgfsetstrokecolor{dialinecolor}
\node[anchor=west] at (7.700000\du,20.800000\du){$\pi_1(w)$};
\definecolor{dialinecolor}{rgb}{0.000000, 0.000000, 0.000000}
\pgfsetstrokecolor{dialinecolor}
\node[anchor=west] at (12.381105\du,16.261912\du){$x=\{1\}$};
\definecolor{dialinecolor}{rgb}{0.000000, 0.000000, 0.000000}
\pgfsetstrokecolor{dialinecolor}
\node[anchor=west] at (6.000000\du,13.000000\du){};
\definecolor{dialinecolor}{rgb}{0.000000, 0.000000, 0.000000}
\pgfsetstrokecolor{dialinecolor}
\node[anchor=west] at (5.100000\du,14.000000\du){$x=\{2\}$};
\definecolor{dialinecolor}{rgb}{0.000000, 0.000000, 0.000000}
\pgfsetstrokecolor{dialinecolor}
\node[anchor=west] at (18.000000\du,20.500000\du){$v_1$};
\definecolor{dialinecolor}{rgb}{0.000000, 0.000000, 0.000000}
\pgfsetstrokecolor{dialinecolor}
\node[anchor=west] at (4.000000\du,9.500000\du){$v_2$};
\pgfsetlinewidth{0.0500000\du}
\pgfsetdash{{\pgflinewidth}{0.200000\du}}{0cm}
\pgfsetdash{{\pgflinewidth}{0.200000\du}}{0cm}
\pgfsetbuttcap
{
\definecolor{dialinecolor}{rgb}{0.000000, 0.000000, 0.000000}
\pgfsetfillcolor{dialinecolor}
\definecolor{dialinecolor}{rgb}{0.000000, 0.000000, 0.000000}
\pgfsetstrokecolor{dialinecolor}
\draw (5.000000\du,14.600000\du)--(8.300000\du,14.600000\du);
}
\definecolor{dialinecolor}{rgb}{0.000000, 0.000000, 0.000000}
\pgfsetstrokecolor{dialinecolor}
\node[anchor=west] at (5.100000\du,18.000000\du){$x=\{\emptyset\}$};
\definecolor{dialinecolor}{rgb}{0.000000, 0.000000, 0.000000}
\pgfsetstrokecolor{dialinecolor}
\node[anchor=west] at (5.100000\du,13.000000\du){$p=\pi_2(w)$};
\definecolor{dialinecolor}{rgb}{0.000000, 0.000000, 0.000000}
\pgfsetstrokecolor{dialinecolor}
\node[anchor=west] at (2.800000\du,14.600000\du){$\pi_2(w)$};
\definecolor{dialinecolor}{rgb}{0.000000, 0.000000, 0.000000}
\pgfsetstrokecolor{dialinecolor}
\node[anchor=west] at (12.381105\du,15.261912\du){$p=\pi_1(w)$};
\pgfsetlinewidth{0.0500000\du}
\pgfsetdash{{\pgflinewidth}{0.200000\du}}{0cm}
\pgfsetdash{{\pgflinewidth}{0.200000\du}}{0cm}
\pgfsetbuttcap
{
\definecolor{dialinecolor}{rgb}{0.000000, 0.000000, 0.000000}
\pgfsetfillcolor{dialinecolor}
\definecolor{dialinecolor}{rgb}{0.000000, 0.000000, 0.000000}
\pgfsetstrokecolor{dialinecolor}
\draw (8.400000\du,14.600000\du)--(14.000000\du,10.000000\du);
}
\pgfsetlinewidth{0.0500000\du}
\pgfsetdash{{\pgflinewidth}{0.200000\du}}{0cm}
\pgfsetdash{{\pgflinewidth}{0.200000\du}}{0cm}
\pgfsetbuttcap
{
\definecolor{dialinecolor}{rgb}{0.000000, 0.000000, 0.000000}
\pgfsetfillcolor{dialinecolor}
\definecolor{dialinecolor}{rgb}{0.000000, 0.000000, 0.000000}
\pgfsetstrokecolor{dialinecolor}
\draw (8.400000\du,14.600000\du)--(10.600000\du,14.600000\du);
}
\definecolor{dialinecolor}{rgb}{0.000000, 0.000000, 0.000000}
\pgfsetstrokecolor{dialinecolor}
\node[anchor=west] at (9.410850\du,14.073000\du){$45^o$};
\definecolor{dialinecolor}{rgb}{0.000000, 0.000000, 0.000000}
\pgfsetstrokecolor{dialinecolor}
\node[anchor=west] at (8.800000\du,11.000000\du){$x=\{1,2\}$};
\definecolor{dialinecolor}{rgb}{0.000000, 0.000000, 0.000000}
\pgfsetstrokecolor{dialinecolor}
\node[anchor=west] at (8.800000\du,10.200000\du){$p=\pi_2(w)$};
\end{tikzpicture}
\end{subfigure}
\caption{The complete characterization of allocation and payment functions of DSIC+IR mechanisms that achieve finite approximation factors.}\label{fig:char-main}
\end{figure}
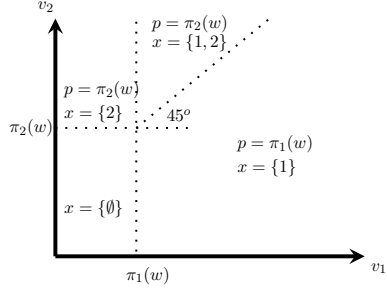
Unlike the VCG mechanism the latter figure has the vertical line that determines the allocation to item $1$. This line is what is causing most of the trouble in achieving constant factor approximations. Intuitively, what VCG would want to do in the case that item $2$ actually arrives, is erase the vertical line and continue the diagonal line until it hits the axes. Finally, unlike the greedy algorithm, a DSIC mechanism has to have the boundary that defines the region for allocation of item $2$, be a continuous function (i.e. the $45^o$ line must meet the horizontal line at $\pi_2(w)$). These are two crucial properties that any DSIC+IR and sequentially consistent mechanism has to satisfy and the reason why simple solutions do not work.

However, this single-player characterization is still not sufficient to draw the conclusion that no DSIC+IR mechanism can achieve a finite approximation.  \vsedit{To arrive at the impossibility, we need to argue about the allocation and payment of both players simultaneously. Given that this requires arguing about functions in the four dimensional space of $v_1,v_2,w_1,w_2$, this presents an uphill battle.} The key argument in the epilogue of the proof of our main result, which is also the hardest and most intricate part of our analysis is as follows: assume for now that all items are always allocated to some bidder, hence the allocation of one player is the complement of the allocation of the other. 
Now consider how the allocation changes as bidder $W$'s value for item $1$ shifts (while his value for item $2$ remains fixed). 
A crucial and non-trivial step in the proof is to show that if the allocation of bidder $V$ changes as a result, it must be that there are points in bidder $V$'s type space that go from  winning both items to losing item $2$. 
Hence, bidder $W$ goes from winning nothing to winning at least item $2$. This means that solely the value for item $1$ of bidder $W$ determined whether he wins nothing or item $2$. This cannot be truthful for bidder $W$, since he would want to slightly misreport his value for item $1$ and win item $2$.
Hence, bidder $V$'s allocation remains completely unchanged and subsequently bidder's $W$ allocation remains completely unchanged. However, this also implies that bidder's $W$ allocation for item $1$ is also insensitive to his bid for item $1$. The latter can lead to unbounded approximation ratios. 

Finally, the proof of the theorem requires dropping the assumption that all items are always allocated. To achieve this we focus only on small sub-regions in bidder $W$'s and $V$'s type space, in which both items have to be allocated to some bidder if we want to have finite approximation and which are sufficient for making the argument that insensitivity of bidder $W$'s allocation of item $1$ is insensitive to his report for item $1$ in this sub-region leads to unbounded approximation factor. 
Defining these sub-regions requires a series of delicate lemmas that argue about potential changes to one player's allocation function as the other player's type changes, under the assumption that the mechanism achieves some finite approximation.



\section{Proof of Main Result}\label{sec:proof}


\vsedit{In this section we provide the complete proof of our main theorem. We begin by some basic characterizations of properties that all DSIC and IR mechanisms need to satisfy. We then refine further this characterization when we also add the requirement that the mechanism achieves a constant factor approximation. Finally, we present the core final arguments of our proof which unlike the first two sections, requires arguing simultaneously about the allocation and payment functions of both players, rather than how the allocation of a single player looks like, holding the other player fixed.}

In most of the Lemma statements, we show that the allocation and payment functions are restricted in some ways. These restrictions apply to both $\allocv(\cdot,\cdot)$ and $\allocw(\cdot,\cdot)$, and similarly to both $\pricev(\cdot,\cdot)$ and $\pricew(\cdot, \cdot)$. However we prove the statements only for $\allocv(\cdot, \cdot)$ and $\pricev(\cdot, \cdot)$ as the corresponding proof for bidder $W$ is identical. 


\subsection{DSIC + IR Mechanisms}
In this section, we show necessary conditions for a deterministic online mechanism to be DSIC and IR. These conditions are very similar to characterizations for the offline case \cite{Chawla2010power,Thanassoulis2004haggling}. 
Essentially, what this amounts to is that every DSIC+ IR mechanism looks like this from the point of view of $V$: 
the two items are offered at prices $\pi_1(w)$ and $\pi_2(w)$, that do not depend on $v$. In case $V$ is allocated only one of the two items, he pays the corresponding price. We will deal with the case where she is allocated both items in the next part. 
In the next 3 lemmas, we only state the conclusion for $V$, and the {\em symmetric statement for $W$ is true as well}. 

\begin{lemma}
\label{lem:VerticalLine}
For every deterministic DSIC+IR mechanism, there must exist a threshold function $\pi_1(w)$  such that:
\begin{align*}
&\text{$\forall w \in \R_{+}^2$, $\forall v \in \R_{+}^2:$}\ \  \allocv_1(v,w) = 
\begin{cases}
1 & \text{ if } v_1 > \pi_1(w)\\
0 & \text{ if } v_1 < \pi_1(w)
\end{cases}
\comment{
&\text{$\forall v \in \R_{+}^2$, $\forall w \in \R_{+}^2:$}\ \ 
\allocw_1(v,w) = 
\begin{cases}
1 & \text{ if } w_1 > \phi_1(v)\\
0 & \text{ if } w_1 < \phi_1(v)
\end{cases}
}
\end{align*}
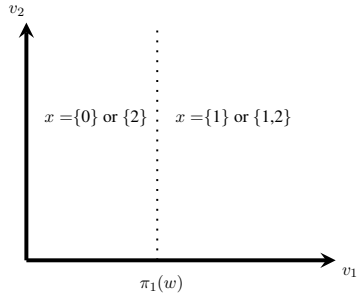
\begin{figure}[h]
\centering
\ifx\du\undefined
  \newlength{\du}
\fi
\setlength{\du}{15\unitlength}
\begin{tikzpicture}
\pgftransformxscale{1.000000}
\pgftransformyscale{-1.000000}
\definecolor{dialinecolor}{rgb}{0.000000, 0.000000, 0.000000}
\pgfsetstrokecolor{dialinecolor}
\definecolor{dialinecolor}{rgb}{1.000000, 1.000000, 1.000000}
\pgfsetfillcolor{dialinecolor}
\pgfsetlinewidth{0.100000\du}
\pgfsetdash{}{0pt}
\pgfsetdash{}{0pt}
\pgfsetbuttcap
{
\definecolor{dialinecolor}{rgb}{0.000000, 0.000000, 0.000000}
\pgfsetfillcolor{dialinecolor}
\pgfsetarrowsend{stealth}
\definecolor{dialinecolor}{rgb}{0.000000, 0.000000, 0.000000}
\pgfsetstrokecolor{dialinecolor}
\draw (5.000000\du,20.000000\du)--(18.000000\du,20.000000\du);
}
\pgfsetlinewidth{0.100000\du}
\pgfsetdash{}{0pt}
\pgfsetdash{}{0pt}
\pgfsetbuttcap
{
\definecolor{dialinecolor}{rgb}{0.000000, 0.000000, 0.000000}
\pgfsetfillcolor{dialinecolor}
\pgfsetarrowsend{stealth}
\definecolor{dialinecolor}{rgb}{0.000000, 0.000000, 0.000000}
\pgfsetstrokecolor{dialinecolor}
\draw (5.000000\du,20.000000\du)--(5.000000\du,10.000000\du);
}
\pgfsetlinewidth{0.0500000\du}
\pgfsetdash{{\pgflinewidth}{0.200000\du}}{0cm}
\pgfsetdash{{\pgflinewidth}{0.200000\du}}{0cm}
\pgfsetbuttcap
{
\definecolor{dialinecolor}{rgb}{0.000000, 0.000000, 0.000000}
\pgfsetfillcolor{dialinecolor}
\definecolor{dialinecolor}{rgb}{0.000000, 0.000000, 0.000000}
\pgfsetstrokecolor{dialinecolor}
\draw (10.500000\du,20.000000\du)--(10.500000\du,10.000000\du);
}
\definecolor{dialinecolor}{rgb}{0.000000, 0.000000, 0.000000}
\pgfsetstrokecolor{dialinecolor}
\node[anchor=west] at (9.500000\du,21.000000\du){$\pi_1(w)$};
\definecolor{dialinecolor}{rgb}{0.000000, 0.000000, 0.000000}
\pgfsetstrokecolor{dialinecolor}
\node[anchor=west] at (11.000000\du,14.000000\du){$x=$\{1\} or \{1,2\}};
\definecolor{dialinecolor}{rgb}{0.000000, 0.000000, 0.000000}
\pgfsetstrokecolor{dialinecolor}
\node[anchor=west] at (6.000000\du,13.000000\du){};
\definecolor{dialinecolor}{rgb}{0.000000, 0.000000, 0.000000}
\pgfsetstrokecolor{dialinecolor}
\node[anchor=west] at (5.500000\du,14.000000\du){$x=$\{0\} or \{2\}};
\definecolor{dialinecolor}{rgb}{0.000000, 0.000000, 0.000000}
\pgfsetstrokecolor{dialinecolor}
\node[anchor=west] at (18.000000\du,20.500000\du){$v_1$};
\definecolor{dialinecolor}{rgb}{0.000000, 0.000000, 0.000000}
\pgfsetstrokecolor{dialinecolor}
\node[anchor=west] at (4.000000\du,9.500000\du){$v_2$};
\end{tikzpicture}
\caption{Characterization of allocation for item $1$}
\end{figure}
\end{lemma}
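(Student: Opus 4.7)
\medskip\noindent\textbf{Proof Plan.}
The plan is to isolate the single-item scenario (only item $1$ arrives) and exploit the online constraint, which forces $\allocv_1$ to be the \emph{same} function of $(v,w)$ regardless of whether item $2$ subsequently arrives. Thus any condition on $\allocv_1$ derived from DSIC+IR in the single-item scenario must hold globally. Fix $w$ throughout, and let $p(v) := \pricev(v,w)$ denote the payment charged to $V$ when only item $1$ arrives (the two-item payment can and generally will be different, but is irrelevant for this lemma).

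First I would show that for each fixed $v_2$, the function $v_1 \mapsto \allocv_1(v_1,v_2,w)$ is monotone non-decreasing in $v_1$, giving a threshold $t(v_2,w)$ with $\allocv_1(v_1,v_2,w)=1$ for $v_1 > t(v_2,w)$ and $=0$ for $v_1 < t(v_2,w)$. This is the standard one-dimensional argument: if $v_1 < v_1'$ but $\allocv_1(v_1,v_2,w)=1$ and $\allocv_1(v_1',v_2,w)=0$, writing down the two DSIC inequalities (truth $(v_1,v_2)$ versus misreport $(v_1',v_2)$, and vice versa) and adding them yields $0 \ge 2(v_1'-v_1) > 0$, a contradiction.

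The main step is showing that $t(v_2,w)$ is in fact independent of $v_2$; this is the piece that actually uses the multidimensional structure and IR. Suppose for contradiction that $t(v_2,w) < t(v_2',w)$ for some $v_2,v_2'$, and pick any $v_1$ in the open interval $\bigl(t(v_2,w),\,t(v_2',w)\bigr)$. Then $\allocv_1(v_1,v_2',w)=0$, so IR (and non-negativity of payments, which we may assume WLOG since IR forces $p \le 0$ when nothing is received) gives $p(v_1,v_2') = 0$. DSIC with true type $(v_1,v_2')$ and misreport $(v_1,v_2)$ then yields $0 \ge v_1 - p(v_1,v_2)$, hence $p(v_1,v_2) \ge v_1$; combined with IR under true type $(v_1,v_2)$, I get $p(v_1,v_2) = v_1$ for every $v_1 \in \bigl(t(v_2,w),\,t(v_2',w)\bigr)$. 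To finish, pick two values $v_1^a < v_1^b$ in this interval: a bidder with true type $(v_1^b,v_2)$ who misreports $(v_1^a,v_2)$ pays $v_1^a$ and receives item $1$ with value $v_1^b$, obtaining strictly positive utility $v_1^b - v_1^a$, whereas truthful reporting yields utility $v_1^b - v_1^b = 0$. This violates DSIC and gives the desired contradiction, so $t(v_2,w)=t(v_2',w)$. Defining $\pi_1(w)$ to be this common value completes the proof; the allocation at the boundary $v_1 = \pi_1(w)$ can be either $0$ or $1$ and is left unspecified by the statement.

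The subtlety I expect to dominate the write-up is the non-trivial direction (independence from $v_2$): monotonicity is standard, but ruling out a $v_2$-dependent threshold is where IR, the non-negativity of payments, and a cross-type DSIC deviation must be combined. Everything else, including the symmetric statement for $W$, is essentially automatic.
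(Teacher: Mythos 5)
Your proof is correct and follows essentially the same approach as the paper's: restrict to the scenario where only item $1$ arrives (using the online constraint so that $\allocv_1$ is the same regardless of whether item $2$ later appears), derive a threshold $t(v_2,w)$ in $v_1$ for each fixed $v_2$ via standard one-dimensional monotonicity, and then rule out $v_2$-dependence of the threshold with a DSIC deviation argument. The only stylistic difference is in the last step: the paper invokes the critical-payment observation (a winner pays $\pi_1$) and has a winner above the larger threshold misreport the $v_2$ with the smaller threshold, while you work inside the gap interval $\bigl(t(v_2,w),t(v_2',w)\bigr)$ to first pin down $p(v_1,v_2)=v_1$ there and then obtain a contradiction between two points in that interval; both deviations rely on the same implicit normalization that a losing bidder pays exactly $0$, which you are right to flag.
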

Proof in Appendix~\ref{app:missing_proofs}.

\begin{remark}
It is good to remind oneself that while the statement of Lemma~\ref{lem:VerticalLine} is true always, the claims made in the proof of Lemma~\ref{lem:VerticalLine} regarding payments are true only in the case that was used in the proof, namely, where item $1$ alone arrives. This is because we allow the mechanism designer to compute payments after knowing whether one item arrived or two items arrived, and thus the conclusions regarding payments drawn from one case need not apply to the other. The allocation function on the other hand has to be determined immediately after each item arrives. Thus the conclusions drawn in the proof regarding allocation function by referring to the case where only one item arrives extends to the case where both items arrive too. Note that the Lemma statement refers only to the allocation function.  
\end{remark}


\begin{lemma}
\label{lem:HorizontalLine}
For every deterministic DSIC+IR mechanism, there must exist a  threshold function $\pi_2(w)$ such that:
\begin{align*}
\text{$\forall w \in \R_+^2$, } \forall v \in \R_{+}^2 \text{ s.t. } v_1 < \pi_1(w):   &~~\allocv_2(v,w) = 
\begin{cases}
1 & \text{ if  } v_2 > \pi_2(w) \\
0 & \text{ if } v_2 < \pi_2 (w)
\end{cases}\\
&~~
\pricev(v,w) = 
\begin{cases}
\pi_2(w) & \text{ if  } v_2 > \pi_2(w) \\
0  & \text{ if } v_2 < \pi_2(w)
\end{cases}
\end{align*}
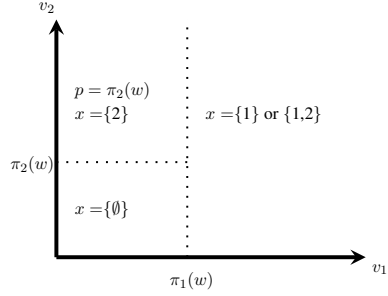
\begin{figure}[h]
\centering
\ifx\du\undefined
  \newlength{\du}
\fi
\setlength{\du}{15\unitlength}
\begin{tikzpicture}
\pgftransformxscale{1.000000}
\pgftransformyscale{-1.000000}
\definecolor{dialinecolor}{rgb}{0.000000, 0.000000, 0.000000}
\pgfsetstrokecolor{dialinecolor}
\definecolor{dialinecolor}{rgb}{1.000000, 1.000000, 1.000000}
\pgfsetfillcolor{dialinecolor}
\pgfsetlinewidth{0.100000\du}
\pgfsetdash{}{0pt}
\pgfsetdash{}{0pt}
\pgfsetbuttcap
{
\definecolor{dialinecolor}{rgb}{0.000000, 0.000000, 0.000000}
\pgfsetfillcolor{dialinecolor}
\pgfsetarrowsend{stealth}
\definecolor{dialinecolor}{rgb}{0.000000, 0.000000, 0.000000}
\pgfsetstrokecolor{dialinecolor}
\draw (5.000000\du,20.000000\du)--(18.000000\du,20.000000\du);
}
\pgfsetlinewidth{0.100000\du}
\pgfsetdash{}{0pt}
\pgfsetdash{}{0pt}
\pgfsetbuttcap
{
\definecolor{dialinecolor}{rgb}{0.000000, 0.000000, 0.000000}
\pgfsetfillcolor{dialinecolor}
\pgfsetarrowsend{stealth}
\definecolor{dialinecolor}{rgb}{0.000000, 0.000000, 0.000000}
\pgfsetstrokecolor{dialinecolor}
\draw (5.000000\du,20.000000\du)--(5.000000\du,10.000000\du);
}
\pgfsetlinewidth{0.0500000\du}
\pgfsetdash{{\pgflinewidth}{0.200000\du}}{0cm}
\pgfsetdash{{\pgflinewidth}{0.200000\du}}{0cm}
\pgfsetbuttcap
{
\definecolor{dialinecolor}{rgb}{0.000000, 0.000000, 0.000000}
\pgfsetfillcolor{dialinecolor}
\definecolor{dialinecolor}{rgb}{0.000000, 0.000000, 0.000000}
\pgfsetstrokecolor{dialinecolor}
\draw (10.500000\du,20.000000\du)--(10.500000\du,10.000000\du);
}
\definecolor{dialinecolor}{rgb}{0.000000, 0.000000, 0.000000}
\pgfsetstrokecolor{dialinecolor}
\node[anchor=west] at (9.500000\du,21.000000\du){$\pi_1(w)$};
\definecolor{dialinecolor}{rgb}{0.000000, 0.000000, 0.000000}
\pgfsetstrokecolor{dialinecolor}
\node[anchor=west] at (11.000000\du,14.000000\du){$x=$\{1\} or \{1,2\}};
\definecolor{dialinecolor}{rgb}{0.000000, 0.000000, 0.000000}
\pgfsetstrokecolor{dialinecolor}
\node[anchor=west] at (6.000000\du,13.000000\du){};
\definecolor{dialinecolor}{rgb}{0.000000, 0.000000, 0.000000}
\pgfsetstrokecolor{dialinecolor}
\node[anchor=west] at (5.500000\du,14.000000\du){$x=$\{2\}};
\definecolor{dialinecolor}{rgb}{0.000000, 0.000000, 0.000000}
\pgfsetstrokecolor{dialinecolor}
\node[anchor=west] at (18.000000\du,20.500000\du){$v_1$};
\definecolor{dialinecolor}{rgb}{0.000000, 0.000000, 0.000000}
\pgfsetstrokecolor{dialinecolor}
\node[anchor=west] at (4.000000\du,9.500000\du){$v_2$};
\pgfsetlinewidth{0.0500000\du}
\pgfsetdash{{\pgflinewidth}{0.200000\du}}{0cm}
\pgfsetdash{{\pgflinewidth}{0.200000\du}}{0cm}
\pgfsetbuttcap
{
\definecolor{dialinecolor}{rgb}{0.000000, 0.000000, 0.000000}
\pgfsetfillcolor{dialinecolor}
\definecolor{dialinecolor}{rgb}{0.000000, 0.000000, 0.000000}
\pgfsetstrokecolor{dialinecolor}
\draw (5.000000\du,16.000000\du)--(10.500000\du,16.000000\du);
}
\definecolor{dialinecolor}{rgb}{0.000000, 0.000000, 0.000000}
\pgfsetstrokecolor{dialinecolor}
\node[anchor=west] at (5.500000\du,18.000000\du){$x=$\{$\emptyset$\}};
\definecolor{dialinecolor}{rgb}{0.000000, 0.000000, 0.000000}
\pgfsetstrokecolor{dialinecolor}
\node[anchor=west] at (5.500000\du,13.000000\du){$p=\pi_2(w)$};
\definecolor{dialinecolor}{rgb}{0.000000, 0.000000, 0.000000}
\pgfsetstrokecolor{dialinecolor}
\node[anchor=west] at (2.800000\du,16.102776\du){$\pi_2(w)$};
\end{tikzpicture}
\caption{Characterization of allocation and payment for item $2$, conditional on losing item $1$.}
\end{figure}
\end{lemma}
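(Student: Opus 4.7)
Fix $w\in\R_{+}^2$, and assume $\pi_1(w)>0$ (otherwise the slice $\{v:v_1<\pi_1(w)\}$ is empty and the statement is vacuous). By Lemma~\ref{lem:VerticalLine}, every $v$ with $v_1<\pi_1(w)$ has $\allocv_1(v,w)=0$, so $V$'s only possible allocation is $\{2\}$ or $\emptyset$, and $\pricev(v,w)$ is just the payment for item~$2$ (if won). The plan is to apply a one-dimensional Myerson-style DSIC+IR characterization in $v_2$ at each fixed $v_1$, and then use a separate misreport argument across different $v_1$'s to show that the resulting threshold does not depend on $v_1$.

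\textbf{Step 1 (fixed $v_1$).} Fix $v_1<\pi_1(w)$ and vary $v_2$. If $V$ wins item~$2$ at $(v_1,v_2)$ with payment $p$, then IR gives $p\le v_2$, and for any $v_2'>v_2$ a truth-teller of type $(v_1,v_2')$ can misreport $(v_1,v_2)$ and pocket $v_2'-p>0$; since a losing truth-teller gets utility $0$, DSIC forces $V$ to win at $(v_1,v_2')$ as well. Hence the set of $v_2$ on which $V$ wins is upward-closed; denote its infimum $\tau(v_1)\in[0,\infty]$. A second symmetric exchange argument between any two winning $v_2,v_2'$ shows that all winning payments are equal to a common value $P(v_1)$; taking $v_2\downarrow\tau(v_1)$ and invoking IR yields $P(v_1)\le\tau(v_1)$, while the DSIC constraint that a losing type $v_2<\tau(v_1)$ must not profitably misreport a winning $v_2'$ gives $P(v_1)\ge\tau(v_1)$. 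Hence $P(v_1)=\tau(v_1)$; the losing payment is $0$ by the same convention/argument used in Lemma~\ref{lem:VerticalLine}.

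\textbf{Step 2 (independence of $v_1$).} Suppose for contradiction that there exist $v_1,v_1'\in[0,\pi_1(w))$ with $\tau(v_1)\neq\tau(v_1')$; without loss of generality $\tau(v_1)<\tau(v_1')$. Pick $v_2$ with $\tau(v_1)<v_2<\tau(v_1')$. At $(v_1,v_2,w)$ bidder $V$ wins item~$2$ and enjoys utility $v_2-\tau(v_1)>0$, whereas at $(v_1',v_2,w)$ he loses and gets utility~$0$. Thus the true type $(v_1',v_2)$ strictly prefers to misreport as $(v_1,v_2)$, contradicting DSIC. Therefore $\tau$ is constant on $[0,\pi_1(w))$; set $\pi_2(w)$ to be this common value, and the allocation/payment formulas in the statement follow from Step~1.

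\textbf{Main obstacle.} The technical heart of the lemma is Step~2, which is the online/multi-parameter specific piece: a single clean misreport across two different $v_1$'s in the losing-item-$1$ region pins the threshold for item~$2$ as a function of $w$ alone. Step~1 is essentially one-dimensional Myerson and its only subtleties are the boundary point $v_2=\tau(v_1)$ (irrelevant to the statement, which uses strict inequalities) and ruling out negative losing payments, both standard.
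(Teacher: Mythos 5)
Your proof is correct and takes essentially the same approach as the paper: fix $w$, use Lemma~\ref{lem:VerticalLine} to observe that only item~2 is in play on the slice $v_1<\pi_1(w)$, derive a per-$v_1$ threshold and payment via the standard one-dimensional DSIC+IR argument, and then pin down that the threshold does not depend on $v_1$ with a misreport argument across different $v_1$'s. The paper reaches the same conclusion slightly more compactly by first noting that the entire set $\{v:v_1<\pi_1(w),\ \allocv_2(v,w)=1\}$ receives the identical allocation bundle $\{2\}$ (so DSIC forces a single payment across the whole set, varying $v_1$ as well as $v_2$), but this is the same observation unfolded differently; your Step~2 misreport is exactly the paper's $\argmin$ deviation in disguise.
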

Proof in Appendix~\ref{app:missing_proofs}.

\begin{remark}
	A possible source of confusion here is for the reader to point out ``The proof of Lemma~\ref{lem:HorizontalLine} point 3 just proves that $\pi_2$ remains the same for all $v: v_1 < \pi_1$. But what about $v: v_1 \geq \pi_1$? How could you conclude that $\pi_2$ is purely a function of $w$ without arguing about $v: v_1 \geq \pi_1$?''. This is not a meaningful question because $\pi_2$ is a threshold function that was introduced to describe the allocation in the region $v: v_1 < \pi_1$, i.e., $\pi_2$ has a domain of $[0,\pi_1) \times \R_{+}^2$.  
	Thus the question of ``what happens to $\pi_2$ when $v_1 \geq \pi_1$'' is meaningless.
\end{remark}

\comment{\begin{figure}[htpb]
\centering{
\includegraphics[scale=.5]{charac2.png}
}
\caption{Characterization of allocation functions.} \label{fig:char}
\end{figure}
}

\begin{lemma}
\label{lem:1sPayPi1}
For every deterministic DSIC+IR mechanism,  
$$\forall w \in \R_+^2, \forall v \in \R_{+}^2  \text{ s.t. } \{\allocv_1(v,w) = 1, \ \allocv_2(v,w) = 0\},\text{ we have } \pricev(v,w) = \pi_1$$ 
\begin{figure}[h]
\centering
\ifx\du\undefined
  \newlength{\du}
\fi
\setlength{\du}{15\unitlength}
\begin{tikzpicture}
\pgftransformxscale{1.000000}
\pgftransformyscale{-1.000000}
\definecolor{dialinecolor}{rgb}{0.000000, 0.000000, 0.000000}
\pgfsetstrokecolor{dialinecolor}
\definecolor{dialinecolor}{rgb}{1.000000, 1.000000, 1.000000}
\pgfsetfillcolor{dialinecolor}
\pgfsetlinewidth{0.100000\du}
\pgfsetdash{}{0pt}
\pgfsetdash{}{0pt}
\pgfsetbuttcap
{
\definecolor{dialinecolor}{rgb}{0.000000, 0.000000, 0.000000}
\pgfsetfillcolor{dialinecolor}
\pgfsetarrowsend{stealth}
\definecolor{dialinecolor}{rgb}{0.000000, 0.000000, 0.000000}
\pgfsetstrokecolor{dialinecolor}
\draw (5.000000\du,20.000000\du)--(18.000000\du,20.000000\du);
}
\pgfsetlinewidth{0.100000\du}
\pgfsetdash{}{0pt}
\pgfsetdash{}{0pt}
\pgfsetbuttcap
{
\definecolor{dialinecolor}{rgb}{0.000000, 0.000000, 0.000000}
\pgfsetfillcolor{dialinecolor}
\pgfsetarrowsend{stealth}
\definecolor{dialinecolor}{rgb}{0.000000, 0.000000, 0.000000}
\pgfsetstrokecolor{dialinecolor}
\draw (5.000000\du,20.000000\du)--(5.000000\du,10.000000\du);
}
\pgfsetlinewidth{0.0500000\du}
\pgfsetdash{{\pgflinewidth}{0.200000\du}}{0cm}
\pgfsetdash{{\pgflinewidth}{0.200000\du}}{0cm}
\pgfsetbuttcap
{
\definecolor{dialinecolor}{rgb}{0.000000, 0.000000, 0.000000}
\pgfsetfillcolor{dialinecolor}
\definecolor{dialinecolor}{rgb}{0.000000, 0.000000, 0.000000}
\pgfsetstrokecolor{dialinecolor}
\draw (10.500000\du,20.000000\du)--(10.500000\du,10.000000\du);
}
\definecolor{dialinecolor}{rgb}{0.000000, 0.000000, 0.000000}
\pgfsetstrokecolor{dialinecolor}
\node[anchor=west] at (9.500000\du,21.000000\du){$\pi_1(w)$};
\definecolor{dialinecolor}{rgb}{0.000000, 0.000000, 0.000000}
\pgfsetstrokecolor{dialinecolor}
\node[anchor=west] at (16.000000\du,11.500000\du){$x=$\{1\}};
\definecolor{dialinecolor}{rgb}{0.000000, 0.000000, 0.000000}
\pgfsetstrokecolor{dialinecolor}
\node[anchor=west] at (6.000000\du,13.000000\du){};
\definecolor{dialinecolor}{rgb}{0.000000, 0.000000, 0.000000}
\pgfsetstrokecolor{dialinecolor}
\node[anchor=west] at (5.500000\du,14.000000\du){$x=$\{2\}};
\definecolor{dialinecolor}{rgb}{0.000000, 0.000000, 0.000000}
\pgfsetstrokecolor{dialinecolor}
\node[anchor=west] at (18.000000\du,20.500000\du){$v_1$};
\definecolor{dialinecolor}{rgb}{0.000000, 0.000000, 0.000000}
\pgfsetstrokecolor{dialinecolor}
\node[anchor=west] at (4.000000\du,9.500000\du){$v_2$};
\pgfsetlinewidth{0.0500000\du}
\pgfsetdash{{\pgflinewidth}{0.200000\du}}{0cm}
\pgfsetdash{{\pgflinewidth}{0.200000\du}}{0cm}
\pgfsetbuttcap
{
\definecolor{dialinecolor}{rgb}{0.000000, 0.000000, 0.000000}
\pgfsetfillcolor{dialinecolor}
\definecolor{dialinecolor}{rgb}{0.000000, 0.000000, 0.000000}
\pgfsetstrokecolor{dialinecolor}
\draw (5.000000\du,16.000000\du)--(10.500000\du,16.000000\du);
}
\definecolor{dialinecolor}{rgb}{0.000000, 0.000000, 0.000000}
\pgfsetstrokecolor{dialinecolor}
\node[anchor=west] at (5.500000\du,18.000000\du){$x=\{\emptyset\}$};
\definecolor{dialinecolor}{rgb}{0.000000, 0.000000, 0.000000}
\pgfsetstrokecolor{dialinecolor}
\node[anchor=west] at (5.500000\du,13.000000\du){$p=\pi_2(w)$};
\definecolor{dialinecolor}{rgb}{0.000000, 0.000000, 0.000000}
\pgfsetstrokecolor{dialinecolor}
\node[anchor=west] at (2.800000\du,16.12776\du){$\pi_2(w)$};
\pgfsetlinewidth{0.0500000\du}
\pgfsetdash{{\pgflinewidth}{0.200000\du}}{0cm}
\pgfsetdash{{\pgflinewidth}{0.200000\du}}{0cm}
\pgfsetmiterjoin
\pgfsetbuttcap
\definecolor{dialinecolor}{rgb}{1.000000, 1.000000, 1.000000}
\pgfsetfillcolor{dialinecolor}
\pgfpathmoveto{\pgfpoint{19.000000\du}{13.000000\du}}
\pgfpathcurveto{\pgfpoint{20.000000\du}{13.000000\du}}{\pgfpoint{19.000000\du}{17.500000\du}}{\pgfpoint{16.500000\du}{15.000000\du}}
\pgfpathcurveto{\pgfpoint{14.000000\du}{12.500000\du}}{\pgfpoint{18.000000\du}{13.000000\du}}{\pgfpoint{19.000000\du}{13.000000\du}}
\pgfusepath{fill}
\definecolor{dialinecolor}{rgb}{0.000000, 0.000000, 0.000000}
\pgfsetstrokecolor{dialinecolor}
\pgfpathmoveto{\pgfpoint{19.000000\du}{13.000000\du}}
\pgfpathcurveto{\pgfpoint{20.000000\du}{13.000000\du}}{\pgfpoint{19.000000\du}{17.500000\du}}{\pgfpoint{16.500000\du}{15.000000\du}}
\pgfpathcurveto{\pgfpoint{14.000000\du}{12.500000\du}}{\pgfpoint{18.000000\du}{13.000000\du}}{\pgfpoint{19.000000\du}{13.000000\du}}
\pgfusepath{stroke}
\pgfsetlinewidth{0.0500000\du}
\pgfsetdash{{\pgflinewidth}{0.200000\du}}{0cm}
\pgfsetdash{{\pgflinewidth}{0.200000\du}}{0cm}
\pgfsetmiterjoin
\pgfsetbuttcap
\definecolor{dialinecolor}{rgb}{1.000000, 1.000000, 1.000000}
\pgfsetfillcolor{dialinecolor}
\pgfpathmoveto{\pgfpoint{17.000000\du}{17.000000\du}}
\pgfpathcurveto{\pgfpoint{18.000000\du}{17.000000\du}}{\pgfpoint{16.000000\du}{21.000000\du}}{\pgfpoint{13.500000\du}{18.500000\du}}
\pgfpathcurveto{\pgfpoint{11.000000\du}{16.000000\du}}{\pgfpoint{16.000000\du}{17.000000\du}}{\pgfpoint{17.000000\du}{17.000000\du}}
\pgfusepath{fill}
\definecolor{dialinecolor}{rgb}{0.000000, 0.000000, 0.000000}
\pgfsetstrokecolor{dialinecolor}
\pgfpathmoveto{\pgfpoint{17.000000\du}{17.000000\du}}
\pgfpathcurveto{\pgfpoint{18.000000\du}{17.000000\du}}{\pgfpoint{16.000000\du}{21.000000\du}}{\pgfpoint{13.500000\du}{18.500000\du}}
\pgfpathcurveto{\pgfpoint{11.000000\du}{16.000000\du}}{\pgfpoint{16.000000\du}{17.000000\du}}{\pgfpoint{17.000000\du}{17.000000\du}}
\pgfusepath{stroke}
\pgfsetlinewidth{0.0500000\du}
\pgfsetdash{{\pgflinewidth}{0.200000\du}}{0cm}
\pgfsetdash{{\pgflinewidth}{0.200000\du}}{0cm}
\pgfsetmiterjoin
\pgfsetbuttcap
\definecolor{dialinecolor}{rgb}{1.000000, 1.000000, 1.000000}
\pgfsetfillcolor{dialinecolor}
\pgfpathmoveto{\pgfpoint{15.000000\du}{12.000000\du}}
\pgfpathcurveto{\pgfpoint{16.000000\du}{12.000000\du}}{\pgfpoint{15.000000\du}{16.500000\du}}{\pgfpoint{12.500000\du}{14.000000\du}}
\pgfpathcurveto{\pgfpoint{10.000000\du}{11.500000\du}}{\pgfpoint{14.000000\du}{12.000000\du}}{\pgfpoint{15.000000\du}{12.000000\du}}
\pgfusepath{fill}
\definecolor{dialinecolor}{rgb}{0.000000, 0.000000, 0.000000}
\pgfsetstrokecolor{dialinecolor}
\pgfpathmoveto{\pgfpoint{15.000000\du}{12.000000\du}}
\pgfpathcurveto{\pgfpoint{16.000000\du}{12.000000\du}}{\pgfpoint{15.000000\du}{16.500000\du}}{\pgfpoint{12.500000\du}{14.000000\du}}
\pgfpathcurveto{\pgfpoint{10.000000\du}{11.500000\du}}{\pgfpoint{14.000000\du}{12.000000\du}}{\pgfpoint{15.000000\du}{12.000000\du}}
\pgfusepath{stroke}
\definecolor{dialinecolor}{rgb}{0.000000, 0.000000, 0.000000}
\pgfsetstrokecolor{dialinecolor}
\node[anchor=west] at (16.000000\du,10.500000\du){$p=\pi_1(w)$};
\definecolor{dialinecolor}{rgb}{0.000000, 0.000000, 0.000000}
\pgfsetstrokecolor{dialinecolor}
\node[anchor=west] at (12.00000\du,13.000000\du){$x=$\{1,2\}};
\definecolor{dialinecolor}{rgb}{0.000000, 0.000000, 0.000000}
\pgfsetstrokecolor{dialinecolor}
\node[anchor=west] at (16.00000\du,14.000000\du){$x=$\{1,2\}};
\definecolor{dialinecolor}{rgb}{0.000000, 0.000000, 0.000000}
\pgfsetstrokecolor{dialinecolor}
\node[anchor=west] at (13.500000\du,18.000000\du){$x=$\{1,2\}};
\end{tikzpicture}
\caption{Characterization of payment when only item $1$ is allocated.}
\end{figure}
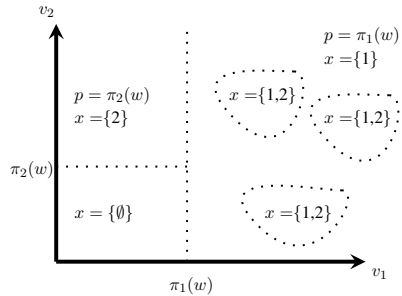
\end{lemma}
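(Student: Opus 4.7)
The plan is to use two-sided DSIC deviations to bound the payment $p := \pricev(v,w)$ for $v$ in the region $R_{10} := \{v \in \R_+^2 : \allocv(v,w) = (1,0)\}$ from above and below by $\pi_1(w)$. The first step is to observe that $p$ depends only on $w$, not on the particular $v \in R_{10}$: for any $v, v' \in R_{10}$, exchanging reports leaves the allocation at $(1,0)$, so DSIC in both directions forces $\pricev(v,w) = \pricev(v',w)$. If $R_{10}$ is empty the lemma is vacuous, so I will assume otherwise throughout.

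For the lower bound $p \geq \pi_1(w)$, I would pick a type $v'' = (v_1'', v_2'')$ with $v_1'' < \pi_1(w)$ and $v_2'' < \pi_2(w)$; by Lemmas~\ref{lem:VerticalLine}--\ref{lem:HorizontalLine} its truthful outcome is allocation $(0,0)$ at price $0$, giving utility $0$. Deviating from $v''$ to any $v \in R_{10}$ yields true utility $v_1'' - p$, and DSIC forces $p \geq v_1''$; letting $v_1'' \nearrow \pi_1(w)$ gives the bound. (The corner case $\pi_2(w) = 0$ is handled analogously using $v_2'' \searrow 0^+$, so that the truthful utility from Lemma~\ref{lem:HorizontalLine} vanishes.)

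The upper bound would be just as easy \emph{if} $R_{10}$ extended all the way down to the $v_1$-axis, since I could then take $(v_1,0) \in R_{10}$ with $v_1$ arbitrarily close to $\pi_1(w)^+$ and deviate to a type in $R_{00} = \{v : \allocv(v,w)=(0,0)\}$, obtaining $p \leq v_1 \to \pi_1(w)$. Proving this structural fact is the main obstacle. Let $R_{11} := \{v : \allocv(v,w) = (1,1)\}$; if $R_{11}$ is empty the fact is immediate. Otherwise the same DSIC-within-region argument shows the payment on $R_{11}$ is a constant $p'$, and applying DSIC between arbitrary $v \in R_{10}$ and $v^* \in R_{11}$ in both directions gives $v_1 - p \geq v_1 + v_2 - p'$ (i.e.\ $v_2 \leq p'-p$) and $v_1^* + v_2^* - p' \geq v_1^* - p$ (i.e.\ $v_2^* \geq p'-p$). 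Hence $R_{10}$ and $R_{11}$ are separated by the horizontal threshold $t := p'-p$ in the $v_2$ coordinate; since $R_{10}$ is nonempty and $v_2 \geq 0$, we have $t \geq 0$, so $(v_1,0) \in R_{10}$ for every $v_1 > \pi_1(w)$. With this structural fact in hand the upper bound $p \leq \pi_1(w)$ follows from the DSIC deviation sketched above, and combining with the lower bound yields $p = \pi_1(w)$.
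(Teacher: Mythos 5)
Your proof takes the same route as the paper's: establish that the payment is constant on $R_{10}$, then squeeze it between $\pi_1(w)$ from both sides with DSIC deviations involving points on (or near) the $v_1$-axis. You are in fact more careful than the paper in one respect: the upper-bound step needs the deviating type $(v_1,0)$ with $v_1 > \pi_1(w)$ to actually lie in $R_{10}$ (not $R_{11}$, where it would pay $p'$ rather than $p$), and you prove this explicitly, whereas the paper's proof leaves it implicit.

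However, the separation argument you use to establish this contains an error. Writing the DSIC comparison between a point in $R_{10}$ and a point in $R_{11}$ as $v_1 - p \geq v_1 + v_2 - p'$ treats the bundle value as \emph{additive}. The bidders in this model are unit-demand, so $v(\{1,2\}) = \max(v_1, v_2)$, and the correct inequalities are $v_1 - p \geq \max(v_1,v_2) - p'$ for $v\in R_{10}$ and $\max(v_1^*,v_2^*) - p' \geq v_1^* - p$ for $v^*\in R_{11}$. These yield $(v_2-v_1)^+ \leq p'-p \leq (v_2^*-v_1^*)^+$, so the boundary between $R_{10}$ and $R_{11}$ is the $45^\circ$ line $v_2 - v_1 = p'-p$, not the horizontal line $v_2 = p'-p$ that you claim; compare with Lemma~\ref{lem:1Below45Deg} and Corollary~\ref{cor:AllocationRegions}, which establish precisely this diagonal boundary. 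Fortunately the consequence you need still follows from the corrected inequalities: $p'-p\geq 0$ because $R_{10}$ is nonempty and $(v_2-v_1)^+\geq 0$; and since $(0-v_1)^+ = 0$, the point $(v_1,0)$ can lie in $R_{11}$ only when $p'=p$, in which case the payment there equals $p$ anyway. So the upper bound $p\leq\pi_1(w)$ goes through, but you should replace the additive inequality by the unit-demand one and explicitly acknowledge the degenerate case $p'=p$ rather than asserting $(v_1,0)\in R_{10}$ outright.
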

Proof in Appendix~\ref{app:missing_proofs}.

\subsection{Refining Characterization under Finite Approximation}
From now on, we assume that the mechanism also gets a finite approximation, and further refine the necessary conditions we have so far. Basically, we show finiteness and positivity of the thresholds. 
Further we show that $\pi_2 \geq \pi_1$ always, and when $\pi_2 > \pi_1$ these two thresholds together completely determine the allocation function for bidder $V$: in case he wins the first item, he is also allocated the second item if 
$v_2 - \pi_2 > v_1 - \pi_1$, and his payment is only $\pi_2$. 
If $\pi_2 = \pi_1$, then in one of the $4$ regions, namely $B_R^V$, the mechanism has the freedom to allocate to bidder $V$ 
just item $1$ or both items. This ambiguity in allocation is eliminated when $\pi_2 > \pi_1$, and  $V$ does not get item 2 in   
$B_R^V$, and pays $\pi_1$. 
All this is crucial when we construct the main argument in Section~\ref{sec:HApprox}.

The next two lemmas follow by considering $v,w$ values that would lead to unbounded approximations if the thresholds were 
either $\infty$ or $0$. Once again, for Lemmas \ref{lem:Pi1NotZero}-\ref{lem:1Below45Deg}, we only state the conclusion for $V$ and 
the symmetric statement for $W$ holds as well. 
\begin{lemma}
\label{lem:Pi1NotZero}
For every deterministic DSIC+IR mechanism that gets a finite approximation factor, 
\begin{enumerate}
\item It always allocates item 1 to some player. 
\item $\forall w \in \R_{+}^2$, we have $\pi_1 < \infty$. Further, $\forall w \in \R_{+}^2: w_1 > 0$, we have $\pi_1 > 0$. 
\end{enumerate}
\end{lemma}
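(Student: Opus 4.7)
The plan is to prove each of the three statements by contradiction, identifying a profile at which the stated violation forces the mechanism's welfare on the single-item-arrival branch to be vanishingly small compared to the optimum. The key enabler throughout is that by Lemma~\ref{lem:VerticalLine} the receipt of item $1$ by bidder $V$ is governed by a threshold $\pi_1(w)$ depending only on $w$, and the allocation of item $1$ is committed immediately after its arrival (before the existence of item $2$ is revealed). Hence the approximation guarantee $H$ must already hold in the scenario where only item $1$ arrives, in which the mechanism's welfare is simply $v_1 \allocv_1(v,w) + w_1 \allocw_1(v,w)$.

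For finiteness, suppose $\pi_1(w^\star) = \infty$ for some $w^\star$. Fix any $v = (M, 0)$. Lemma~\ref{lem:VerticalLine} forces $\allocv_1(v, w^\star) = 0$ no matter how large $M$ is, so in the only-item-$1$ scenario the mechanism's welfare is at most $w^\star_1$, a constant, while the optimum is $\max(M, w^\star_1)$; sending $M \to \infty$ breaks any finite $H$. For positivity when $w^\star_1 > 0$, suppose $\pi_1(w^\star) = 0$ and pick $v = (\epsilon, 0)$ with $\epsilon > 0$. Then $V$ wins item $1$, so in the only-item-$1$ branch the mechanism achieves welfare $\epsilon$, whereas the optimum is at least $w^\star_1$; sending $\epsilon \to 0$ again contradicts finite approximation.

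For the ``item $1$ is always allocated'' claim, suppose $(v^\star, w^\star)$ is a profile at which item $1$ is left unallocated. In the only-item-$1$ scenario the mechanism's welfare is $0$, whereas the optimum is $\max(v^\star_1, w^\star_1)$; whenever this maximum is strictly positive we obtain an immediate contradiction with finite approximation. The only remaining case is the degenerate $v^\star_1 = w^\star_1 = 0$, where the optimum in the one-item branch is also zero; here we can resolve the tie by convention (say, always assign item $1$ to bidder $V$) without affecting any DSIC or IR inequality, because no bidder's utility changes at a zero value. This lets us assume throughout the subsequent proof that item $1$ is always allocated.

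I do not foresee any substantive obstacle in this lemma; the only care required is to verify that each contradiction really can be extracted from the single-item-arrival branch. This is exactly what the irrevocable first-period commitment of the mechanism guarantees: the value of $\allocv_1(v,w) + \allocw_1(v,w)$ at the profile we pick is the same whether or not item $2$ subsequently arrives, so the welfare computed in the one-item scenario is a legitimate lower bound on the mechanism's performance to be compared against the one-item optimum.
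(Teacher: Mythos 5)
Your proof takes essentially the same route as the paper's: fix a profile that violates the claimed property and analyze the branch in which only item~$1$ arrives, using the irrevocable first-period commitment to conclude that the mechanism's welfare there already breaks the approximation guarantee. The only cosmetic differences are that the paper plugs in a single profile achieving ratio exactly $H+1$ (e.g.\ $v_1=(H+1)w_1$ or $v_1=w_1/(H+1)$) rather than passing to a limit, and it glosses over the degenerate $v_1^\star=w_1^\star=0$ tie that you flag, neither of which is a substantive departure.
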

\begin{proof}
Consider a mechanism that gets a finite approximation factor $H$, and consider the case where item 2 does not arrive.
	If for some $v,w$, the mechanism does not allocate item 1, then the approximation factor for the mechanism is unbounded.
	
	If at some $w$ the mechanism had $\pi_1 = \infty$,  let $v_1 = (H+1)w_1$. In this case, bidder $V$ still doesn't get the item, making the approximation factor to be $H+1$ which is larger than $H$. 
	
	If for some $w: w_1 \neq 0$, the said mechanism had $\pi_1 = 0$, let $v_1 = \frac{w_1}{H+1}$. In this case bidder $V$ still gets the item, making the approximation factor to be $H+1$ which is larger than $H$. 
\end{proof}

\begin{lemma}
\label{lem:Pi2NotZero}
For every deterministic DSIC+IR mechanism that gets a finite approximation factor,
$ \forall w \in \R_{+}^2: w_1 > 0, \text{ we  have } \pi_2 < \infty .$
\comment{\begin{enumerate}
\item $\forall w \in \R_{+}^2: w_1 > 0$, we have $\pi_2 < \infty$. 
\item $\forall v \in \R_{+}^2: v_1 > 0$, we have $\phi_2 < \infty$. 
\end{enumerate}
}
\end{lemma}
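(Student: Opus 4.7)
The plan is to argue by contradiction in the style of Lemma \ref{lem:Pi1NotZero}: if $\pi_2(w) = \infty$ for some $w$ with $w_1 > 0$, then I will exhibit a report $(v,w)$ and an arrival sequence on which the mechanism achieves welfare that is an arbitrarily small fraction of the optimum. Fix any $w$ with $w_1 > 0$, assume for contradiction that $\pi_2(w) = \infty$, and let $H$ be the supposed finite approximation factor. Since $w_1 > 0$, Lemma \ref{lem:Pi1NotZero} yields $\pi_1(w) > 0$, so I can choose $v_1 \in [0, \pi_1(w))$; take for instance $v_1 = 0$. Next, pick $v_2 > H \cdot \max(w_1, w_2)$, which is possible because $\pi_2(w) = \infty$ places no upper bound on the values of $v_2$ I am allowed to consider. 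Finally, I consider the instance in which both items arrive.

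By Lemma \ref{lem:VerticalLine}, $v_1 < \pi_1(w)$ forces $\allocv_1(v,w) = 0$, and since $v_2 < \infty = \pi_2(w)$, Lemma \ref{lem:HorizontalLine} forces $\allocv_2(v,w) = 0$. Hence bidder $V$ receives no item. Because $W$ is unit-demand, her contribution to the social welfare is at most $\max(w_1, w_2)$ regardless of which subset of items she is allocated, so the mechanism's total welfare is at most $\max(w_1, w_2)$. On the other hand, the matching that assigns item $2$ to $V$ and item $1$ to $W$ achieves welfare $v_2 + w_1 > H \cdot \max(w_1, w_2)$, contradicting the $H$-approximation guarantee. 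I do not expect any substantive obstacle; the only things to be careful about are to invoke the two-items-arrive scenario (so that Lemma \ref{lem:HorizontalLine}'s characterization of $\allocv_2$ actually applies) and to remember that the unit-demand assumption caps $W$'s contribution at $\max(w_1, w_2)$ rather than $w_1 + w_2$.
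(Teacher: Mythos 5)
Your proof is correct and follows essentially the same route as the paper's: pick $v$ with $v_1<\pi_1(w)$ and $v_2$ large, observe that $\pi_2(w)=\infty$ forces $V$ to receive nothing when both items arrive, and contrast with the optimal matching $v_2+w_1$. The paper states this more tersely (without spelling out the unit-demand bound $\max(w_1,w_2)$ on $W$'s contribution or the explicit invocation of Lemmas~\ref{lem:VerticalLine} and~\ref{lem:HorizontalLine}), but the argument is identical.
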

\begin{proof}
Fix a $w \in \R_+^2: w_1 > 0$. By Lemma~\ref{lem:Pi1NotZero}, note that $0 < \pi_1 < \infty$ for such a $w$. Suppose $\pi_2 = \infty$ for the said point $w$. Now, consider a point $v$ s.t. $v_2 >> \max\{v_1, w_1, w_2\}$ and $v_1 < \pi_1$. At such a $v$, item $2$ should necessarily go to bidder $V$ for a finite approximation. Having $\pi_2 = \infty$ implies this will not happen. 
\end{proof}

\paragraph{Notation} 
For every deterministic DSIC+IR mechanism that gets a finite approximation factor, the following four sets are well defined. Fix a bid $w \in \R_{+}^2: w_1 > 0 $ of bidder $W$. By Lemma~\ref{lem:Pi1NotZero}, such a mechanism has $\pi_1 > 0$, and thus the domain of $\pi_2$, namely $[0,\pi_1)\times \R_{+}^2$ is non-empty. Thus the value of $\pi_2$ is well-defined for any fixed $w \in \R_+^2$ with $w_1>0$, making the $4$ sets below well-defined. \begin{enumerate}
\item Bottom-Left-V: $B_L^V(w) = \{v \in \R_{+}^2: v_1 < \pi_1, v_2 < \pi_2\}$ . 
\item Bottom-Right-V: $B_R^V(w) = \{v \in \R_{+}^2: v_1 > \pi_1, v_2 - \pi_2 < v_1 - \pi_1\}$ .
\item Top-Left-V: $T_L^V(w) = \{v \in \R_{+}^2: v_1 < \pi_1, v_2 > \pi_2\}$ . 
\item Top-Right-V: $T_R^V(w) = \{v  \in \R_{+}^2: v_1 > \pi_1, v_2 - \pi_2 > v_1 - \pi_1\}$.
\end{enumerate}
Further both $\pi_1$ and $\pi_2$ are finite, making all four sets non-empty. We will use the non-emptiness of $T_R^V$ later in the proof. 
\begin{figure}[h]
\centering
\ifx\du\undefined
  \newlength{\du}
\fi
\setlength{\du}{15\unitlength}
\begin{tikzpicture}
\pgftransformxscale{1.000000}
\pgftransformyscale{-1.000000}
\definecolor{dialinecolor}{rgb}{0.000000, 0.000000, 0.000000}
\pgfsetstrokecolor{dialinecolor}
\definecolor{dialinecolor}{rgb}{1.000000, 1.000000, 1.000000}
\pgfsetfillcolor{dialinecolor}
\pgfsetlinewidth{0.100000\du}
\pgfsetdash{}{0pt}
\pgfsetdash{}{0pt}
\pgfsetbuttcap
{
\definecolor{dialinecolor}{rgb}{0.000000, 0.000000, 0.000000}
\pgfsetfillcolor{dialinecolor}
\pgfsetarrowsend{stealth}
\definecolor{dialinecolor}{rgb}{0.000000, 0.000000, 0.000000}
\pgfsetstrokecolor{dialinecolor}
\draw (5.000000\du,20.000000\du)--(18.000000\du,20.000000\du);
}
\pgfsetlinewidth{0.100000\du}
\pgfsetdash{}{0pt}
\pgfsetdash{}{0pt}
\pgfsetbuttcap
{
\definecolor{dialinecolor}{rgb}{0.000000, 0.000000, 0.000000}
\pgfsetfillcolor{dialinecolor}
\pgfsetarrowsend{stealth}
\definecolor{dialinecolor}{rgb}{0.000000, 0.000000, 0.000000}
\pgfsetstrokecolor{dialinecolor}
\draw (5.000000\du,20.000000\du)--(5.000000\du,10.000000\du);
}
\pgfsetlinewidth{0.0500000\du}
\pgfsetdash{{\pgflinewidth}{0.200000\du}}{0cm}
\pgfsetdash{{\pgflinewidth}{0.200000\du}}{0cm}
\pgfsetbuttcap
{
\definecolor{dialinecolor}{rgb}{0.000000, 0.000000, 0.000000}
\pgfsetfillcolor{dialinecolor}
\definecolor{dialinecolor}{rgb}{0.000000, 0.000000, 0.000000}
\pgfsetstrokecolor{dialinecolor}
\draw (10.500000\du,20.000000\du)--(10.500000\du,10.000000\du);
}
\definecolor{dialinecolor}{rgb}{0.000000, 0.000000, 0.000000}
\pgfsetstrokecolor{dialinecolor}
\node[anchor=west] at (9.500000\du,21.000000\du){$\pi_1(w)$};
\definecolor{dialinecolor}{rgb}{0.000000, 0.000000, 0.000000}
\pgfsetstrokecolor{dialinecolor}
\node[anchor=west] at (6.000000\du,13.000000\du){};
\definecolor{dialinecolor}{rgb}{0.000000, 0.000000, 0.000000}
\pgfsetstrokecolor{dialinecolor}
\node[anchor=west] at (18.000000\du,20.500000\du){$v_1$};
\definecolor{dialinecolor}{rgb}{0.000000, 0.000000, 0.000000}
\pgfsetstrokecolor{dialinecolor}
\node[anchor=west] at (4.000000\du,9.500000\du){$v_2$};
\pgfsetlinewidth{0.0500000\du}
\pgfsetdash{{\pgflinewidth}{0.200000\du}}{0cm}
\pgfsetdash{{\pgflinewidth}{0.200000\du}}{0cm}
\pgfsetbuttcap
{
\definecolor{dialinecolor}{rgb}{0.000000, 0.000000, 0.000000}
\pgfsetfillcolor{dialinecolor}
\definecolor{dialinecolor}{rgb}{0.000000, 0.000000, 0.000000}
\pgfsetstrokecolor{dialinecolor}
\draw (5.000000\du,16.000000\du)--(10.500000\du,16.000000\du);
}
\definecolor{dialinecolor}{rgb}{0.000000, 0.000000, 0.000000}
\pgfsetstrokecolor{dialinecolor}
\node[anchor=west] at (2.500000\du,16.027382\du){$\pi_2(w)$};
\pgfsetlinewidth{0.0500000\du}
\pgfsetdash{{\pgflinewidth}{0.200000\du}}{0cm}
\pgfsetdash{{\pgflinewidth}{0.200000\du}}{0cm}
\pgfsetbuttcap
{
\definecolor{dialinecolor}{rgb}{0.000000, 0.000000, 0.000000}
\pgfsetfillcolor{dialinecolor}
\definecolor{dialinecolor}{rgb}{0.000000, 0.000000, 0.000000}
\pgfsetstrokecolor{dialinecolor}
\draw (10.500000\du,16.000000\du)--(17.500000\du,9.500000\du);
}
\pgfsetlinewidth{0.0500000\du}
\pgfsetdash{{\pgflinewidth}{0.200000\du}}{0cm}
\pgfsetdash{{\pgflinewidth}{0.200000\du}}{0cm}
\pgfsetbuttcap
{
\definecolor{dialinecolor}{rgb}{0.000000, 0.000000, 0.000000}
\pgfsetfillcolor{dialinecolor}
\definecolor{dialinecolor}{rgb}{0.000000, 0.000000, 0.000000}
\pgfsetstrokecolor{dialinecolor}
\draw (10.500000\du,16.000000\du)--(12.500000\du,16.000000\du);
}
\pgfsetlinewidth{0.0500000\du}
\pgfsetdash{{\pgflinewidth}{0.200000\du}}{0cm}
\pgfsetdash{{\pgflinewidth}{0.200000\du}}{0cm}
\pgfsetbuttcap
{
\definecolor{dialinecolor}{rgb}{0.000000, 0.000000, 0.000000}
\pgfsetfillcolor{dialinecolor}
\definecolor{dialinecolor}{rgb}{0.000000, 0.000000, 0.000000}
\pgfsetstrokecolor{dialinecolor}
\pgfpathmoveto{\pgfpoint{11.999976\du}{16.000078\du}}
\pgfpatharc{18}{-70}{0.810575\du and 0.810575\du}
\pgfusepath{stroke}
}
\definecolor{dialinecolor}{rgb}{0.000000, 0.000000, 0.000000}
\pgfsetstrokecolor{dialinecolor}
\node[anchor=west] at (12.000000\du,15.200000\du){$45^o$};
\definecolor{dialinecolor}{rgb}{0.000000, 0.000000, 0.000000}
\pgfsetstrokecolor{dialinecolor}
\node[anchor=west] at (6.500000\du,11.500000\du){$T_L^V$};
\definecolor{dialinecolor}{rgb}{0.000000, 0.000000, 0.000000}
\pgfsetstrokecolor{dialinecolor}
\node[anchor=west] at (12.000000\du,11.500000\du){$T_R^V$};
\definecolor{dialinecolor}{rgb}{0.000000, 0.000000, 0.000000}
\pgfsetstrokecolor{dialinecolor}
\node[anchor=west] at (6.500000\du,18.000000\du){$B_L^V$};
\definecolor{dialinecolor}{rgb}{0.000000, 0.000000, 0.000000}
\pgfsetstrokecolor{dialinecolor}
\node[anchor=west] at (15.500000\du,16.000000\du){$B_R^V$};
\end{tikzpicture}
\caption{Four crucial regions for characterization of allocation functions. All these regions are non-empty.}
\end{figure}
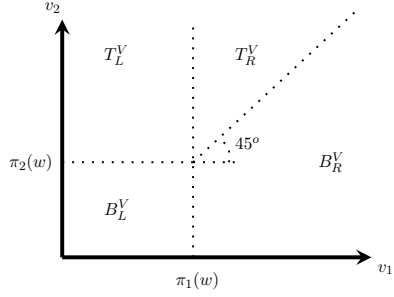
The names are meant to be indicative of which portions of the positive quadrant of bidder $V$'s type space these sets occupy. Similarly, fixing a bid $v \in \R_+^2: v_1 > 0$ of bidder $V$, we define four sets in the type space of bidder $W$: 
$B_L^W(v)$,  $B_R^W(v)$, $T_L^W(v)$ and  $T_R^W(v)$. 
\comment{\begin{enumerate}
\item Bottom-Left-W: $B_L^W(v) = \{w \in \R_{+}^2: w_1 < \phi_1, w_2 < \phi_2\}$ . 
\item Bottom-Right-W: $B_R^W(v) = \{w \in  \R_{+}^2: w_1 > \phi_1, w_2 - \phi_2 < w_1 - \phi_1\}$ .
\item Top-Left-W: $T_L^W(v) = \{w \in \R_{+}^2: w_1 < \phi_1, w_2 > \phi_2\}$ . 
\item Top-Right-W: $T_R^W(v) = \{w \in \R_{+}^2: w_1 > \phi_1, w_2 - \phi_2 > w_1 - \phi_1\}$.
\end{enumerate}}
We drop the arguments  when it is clear from the context, i.e., we say $T_R^V$ instead of $T_R^V(w)$. 
Given that these regions are non-trivial, 
 Lemma \ref{lem:BothAbove45Deg} shows that $V$ gets both items in $T_R^V$, 
 Lemma \ref{lem:TallRectangle} shows that $\pi_2 \geq \pi_1$ and 
 Lemma \ref{lem:1Below45Deg} shows that she { \em doesn't} get item 2 
 in $B_R^V$ when $\pi_2 > \pi_1$. 
\begin{lemma}
\label{lem:BothAbove45Deg}
For every deterministic DSIC+IR mechanism that gets a finite approximation factor,
$$\forall w \in \R_+^2: w_1 > 0, \forall v \in T_R^V \text{ we have } \{\allocv_1(v,w) = \allocv_2(v,w) = 1,\ \pricev(v,w) = \pi_2\}$$
\begin{figure}[h]
\centering
\ifx\du\undefined
  \newlength{\du}
\fi
\setlength{\du}{15\unitlength}
\begin{tikzpicture}
\pgftransformxscale{1.000000}
\pgftransformyscale{-1.000000}
\definecolor{dialinecolor}{rgb}{0.000000, 0.000000, 0.000000}
\pgfsetstrokecolor{dialinecolor}
\definecolor{dialinecolor}{rgb}{1.000000, 1.000000, 1.000000}
\pgfsetfillcolor{dialinecolor}
\pgfsetlinewidth{0.100000\du}
\pgfsetdash{}{0pt}
\pgfsetdash{}{0pt}
\pgfsetbuttcap
{
\definecolor{dialinecolor}{rgb}{0.000000, 0.000000, 0.000000}
\pgfsetfillcolor{dialinecolor}
\pgfsetarrowsend{stealth}
\definecolor{dialinecolor}{rgb}{0.000000, 0.000000, 0.000000}
\pgfsetstrokecolor{dialinecolor}
\draw (5.000000\du,20.000000\du)--(18.000000\du,20.000000\du);
}
\pgfsetlinewidth{0.100000\du}
\pgfsetdash{}{0pt}
\pgfsetdash{}{0pt}
\pgfsetbuttcap
{
\definecolor{dialinecolor}{rgb}{0.000000, 0.000000, 0.000000}
\pgfsetfillcolor{dialinecolor}
\pgfsetarrowsend{stealth}
\definecolor{dialinecolor}{rgb}{0.000000, 0.000000, 0.000000}
\pgfsetstrokecolor{dialinecolor}
\draw (5.000000\du,20.000000\du)--(5.000000\du,10.000000\du);
}
\pgfsetlinewidth{0.0500000\du}
\pgfsetdash{{\pgflinewidth}{0.200000\du}}{0cm}
\pgfsetdash{{\pgflinewidth}{0.200000\du}}{0cm}
\pgfsetbuttcap
{
\definecolor{dialinecolor}{rgb}{0.000000, 0.000000, 0.000000}
\pgfsetfillcolor{dialinecolor}
\definecolor{dialinecolor}{rgb}{0.000000, 0.000000, 0.000000}
\pgfsetstrokecolor{dialinecolor}
\draw (10.600000\du,20.000000\du)--(10.600000\du,10.000000\du);
}
\definecolor{dialinecolor}{rgb}{0.000000, 0.000000, 0.000000}
\pgfsetstrokecolor{dialinecolor}
\node[anchor=west] at (9.500000\du,21.000000\du){$\pi_1(w)$};
\definecolor{dialinecolor}{rgb}{0.000000, 0.000000, 0.000000}
\pgfsetstrokecolor{dialinecolor}
\node[anchor=west] at (16.200000\du,13.600000\du){$x=$\{1\}};
\definecolor{dialinecolor}{rgb}{0.000000, 0.000000, 0.000000}
\pgfsetstrokecolor{dialinecolor}
\node[anchor=west] at (6.000000\du,13.000000\du){};
\definecolor{dialinecolor}{rgb}{0.000000, 0.000000, 0.000000}
\pgfsetstrokecolor{dialinecolor}
\node[anchor=west] at (5.500000\du,14.000000\du){$x=$\{2\}};
\definecolor{dialinecolor}{rgb}{0.000000, 0.000000, 0.000000}
\pgfsetstrokecolor{dialinecolor}
\node[anchor=west] at (18.000000\du,20.500000\du){$v_1$};
\definecolor{dialinecolor}{rgb}{0.000000, 0.000000, 0.000000}
\pgfsetstrokecolor{dialinecolor}
\node[anchor=west] at (4.000000\du,9.500000\du){$v_2$};
\pgfsetlinewidth{0.0500000\du}
\pgfsetdash{{\pgflinewidth}{0.200000\du}}{0cm}
\pgfsetdash{{\pgflinewidth}{0.200000\du}}{0cm}
\pgfsetbuttcap
{
\definecolor{dialinecolor}{rgb}{0.000000, 0.000000, 0.000000}
\pgfsetfillcolor{dialinecolor}
\definecolor{dialinecolor}{rgb}{0.000000, 0.000000, 0.000000}
\pgfsetstrokecolor{dialinecolor}
\draw (5.000000\du,16.000000\du)--(10.500000\du,16.000000\du);
}
\definecolor{dialinecolor}{rgb}{0.000000, 0.000000, 0.000000}
\pgfsetstrokecolor{dialinecolor}
\node[anchor=west] at (5.500000\du,18.000000\du){$x=$\{$\emptyset$\}};
\definecolor{dialinecolor}{rgb}{0.000000, 0.000000, 0.000000}
\pgfsetstrokecolor{dialinecolor}
\node[anchor=west] at (5.500000\du,13.000000\du){$p=\pi_2(w)$};
\definecolor{dialinecolor}{rgb}{0.000000, 0.000000, 0.000000}
\pgfsetstrokecolor{dialinecolor}
\node[anchor=west] at (2.800000\du,16.002800\du){$\pi_2(w)$};
\pgfsetlinewidth{0.0500000\du}
\pgfsetdash{{\pgflinewidth}{0.200000\du}}{0cm}
\pgfsetdash{{\pgflinewidth}{0.200000\du}}{0cm}
\pgfsetmiterjoin
\pgfsetbuttcap
\definecolor{dialinecolor}{rgb}{1.000000, 1.000000, 1.000000}
\pgfsetfillcolor{dialinecolor}
\pgfpathmoveto{\pgfpoint{19.400000\du}{14.600000\du}}
\pgfpathcurveto{\pgfpoint{20.400000\du}{14.600000\du}}{\pgfpoint{19.400000\du}{19.100000\du}}{\pgfpoint{16.900000\du}{16.600000\du}}
\pgfpathcurveto{\pgfpoint{14.400000\du}{14.100000\du}}{\pgfpoint{18.400000\du}{14.600000\du}}{\pgfpoint{19.400000\du}{14.600000\du}}
\pgfusepath{fill}
\definecolor{dialinecolor}{rgb}{0.000000, 0.000000, 0.000000}
\pgfsetstrokecolor{dialinecolor}
\pgfpathmoveto{\pgfpoint{19.400000\du}{14.600000\du}}
\pgfpathcurveto{\pgfpoint{20.400000\du}{14.600000\du}}{\pgfpoint{19.400000\du}{19.100000\du}}{\pgfpoint{16.900000\du}{16.600000\du}}
\pgfpathcurveto{\pgfpoint{14.400000\du}{14.100000\du}}{\pgfpoint{18.400000\du}{14.600000\du}}{\pgfpoint{19.400000\du}{14.600000\du}}
\pgfusepath{stroke}
\pgfsetlinewidth{0.0500000\du}
\pgfsetdash{{\pgflinewidth}{0.200000\du}}{0cm}
\pgfsetdash{{\pgflinewidth}{0.200000\du}}{0cm}
\pgfsetmiterjoin
\pgfsetbuttcap
\definecolor{dialinecolor}{rgb}{1.000000, 1.000000, 1.000000}
\pgfsetfillcolor{dialinecolor}
\pgfpathmoveto{\pgfpoint{16.000000\du}{17.000000\du}}
\pgfpathcurveto{\pgfpoint{17.000000\du}{17.000000\du}}{\pgfpoint{15.000000\du}{21.000000\du}}{\pgfpoint{12.500000\du}{18.500000\du}}
\pgfpathcurveto{\pgfpoint{10.000000\du}{16.000000\du}}{\pgfpoint{15.000000\du}{17.000000\du}}{\pgfpoint{16.000000\du}{17.000000\du}}
\pgfusepath{fill}
\definecolor{dialinecolor}{rgb}{0.000000, 0.000000, 0.000000}
\pgfsetstrokecolor{dialinecolor}
\pgfpathmoveto{\pgfpoint{16.000000\du}{17.000000\du}}
\pgfpathcurveto{\pgfpoint{17.000000\du}{17.000000\du}}{\pgfpoint{15.000000\du}{21.000000\du}}{\pgfpoint{12.500000\du}{18.500000\du}}
\pgfpathcurveto{\pgfpoint{10.000000\du}{16.000000\du}}{\pgfpoint{15.000000\du}{17.000000\du}}{\pgfpoint{16.000000\du}{17.000000\du}}
\pgfusepath{stroke}
\definecolor{dialinecolor}{rgb}{0.000000, 0.000000, 0.000000}
\pgfsetstrokecolor{dialinecolor}
\node[anchor=west] at (16.200000\du,12.600000\du){$p=\pi_1(w)$};
\definecolor{dialinecolor}{rgb}{0.000000, 0.000000, 0.000000}
\pgfsetstrokecolor{dialinecolor}
\node[anchor=west] at (11.627686\du,11.200481\du){$x=$\{1,2\}};
\definecolor{dialinecolor}{rgb}{0.000000, 0.000000, 0.000000}
\pgfsetstrokecolor{dialinecolor}
\node[anchor=west] at (16.400000\du,15.600000\du){$x=$\{1,2\}};
\definecolor{dialinecolor}{rgb}{0.000000, 0.000000, 0.000000}
\pgfsetstrokecolor{dialinecolor}
\node[anchor=west] at (12.700000\du,18.000000\du){$x=$\{1,2\}};
\pgfsetlinewidth{0.0500000\du}
\pgfsetdash{{\pgflinewidth}{0.200000\du}}{0cm}
\pgfsetdash{{\pgflinewidth}{0.200000\du}}{0cm}
\pgfsetbuttcap
{
\definecolor{dialinecolor}{rgb}{0.000000, 0.000000, 0.000000}
\pgfsetfillcolor{dialinecolor}
\definecolor{dialinecolor}{rgb}{0.000000, 0.000000, 0.000000}
\pgfsetstrokecolor{dialinecolor}
\draw (10.600000\du,16.000000\du)--(17.000000\du,10.400000\du);
}
\pgfsetlinewidth{0.0500000\du}
\pgfsetdash{{\pgflinewidth}{0.200000\du}}{0cm}
\pgfsetdash{{\pgflinewidth}{0.200000\du}}{0cm}
\pgfsetbuttcap
{
\definecolor{dialinecolor}{rgb}{0.000000, 0.000000, 0.000000}
\pgfsetfillcolor{dialinecolor}
\definecolor{dialinecolor}{rgb}{0.000000, 0.000000, 0.000000}
\pgfsetstrokecolor{dialinecolor}
\draw (10.600000\du,16.000000\du)--(12.800000\du,16.000000\du);
}
\definecolor{dialinecolor}{rgb}{0.000000, 0.000000, 0.000000}
\pgfsetstrokecolor{dialinecolor}
\node[anchor=west] at (12.100000\du,15.20000\du){$45^o$};
\definecolor{dialinecolor}{rgb}{0.000000, 0.000000, 0.000000}
\pgfsetstrokecolor{dialinecolor}
\node[anchor=west] at (11.600000\du,10.200000\du){$p=\pi_2(w)$};
\end{tikzpicture}
\caption{Characterization of the allocation and payment function in the four crucial regions. All values in $T_R^V$ are allocated both items and charged a price of $\pi_2(w)$.}
\end{figure}
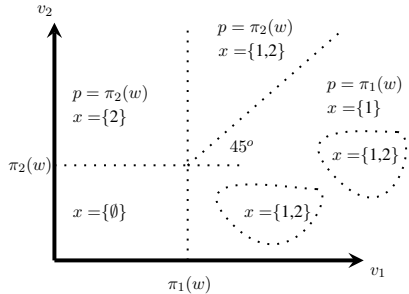
\end{lemma}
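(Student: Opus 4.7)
The plan is to first establish that the allocation on $T_R^V$ must be $\{1,2\}$ via a single DSIC deviation, and then separately pin down the payment. Fix $w\in\R_{+}^2$ with $w_1>0$ and $v\in T_R^V$; by Lemma~\ref{lem:VerticalLine}, $v_1>\pi_1$ already forces $\allocv_1(v,w)=1$, so the only question is whether $\allocv_2(v,w)=1$. I would argue by contradiction: suppose $\allocv_2(v,w)=0$. Then Lemma~\ref{lem:1sPayPi1} forces $\pricev(v,w)=\pi_1$ and the truthful utility to be $v_1-\pi_1$. Consider the misreport $v'=(v_1',v_2)$ with $v_1'\in[0,\pi_1)$ (such $v_1'$ exists since Lemma~\ref{lem:Pi1NotZero} gives $\pi_1>0$). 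The defining inequality $v_2-\pi_2>v_1-\pi_1>0$ of $T_R^V$ gives $v_2>\pi_2$, so by Lemma~\ref{lem:HorizontalLine} the report $v'$ is allocated exactly item $2$ at price $\pi_2$, yielding utility $v_2-\pi_2$ at true type $v$. DSIC then demands $v_1-\pi_1\geq v_2-\pi_2$, contradicting the defining inequality of $T_R^V$.

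For the payment, I would show $\pricev(\cdot,w)=\pi_2$ on $T_R^V$ in three steps. \emph{(i) Constancy:} for any $v^a,v^b\in T_R^V$, both allocations are $\{1,2\}$ by the previous part, so at true type $v^a$ the value of the allocation to the agent is $\max(v_1^a,v_2^a)$ regardless of whether she reports $v^a$ or $v^b$; the standard two-sided DSIC comparison then yields $\pricev(v^a,w)=\pricev(v^b,w)$. \emph{(ii) Lower bound:} for any $v\in T_R^V$, consider an agent with true type $v'=(v_1',v_2)$ where $v_1'<\min(\pi_1,v_2)$; truth yields utility $v_2-\pi_2$ by Lemma~\ref{lem:HorizontalLine}, while misreporting $v$ gives utility $\max(v_1',v_2)-\pricev(v,w)=v_2-\pricev(v,w)$, so DSIC forces $\pricev(v,w)\geq\pi_2$. \emph{(iii) Upper bound:} step (i) lets me verify this at just one convenient point; I would pick $v^*=(\pi_1+a,\pi_2+b)$ with any $a>0$ and any $b>a+\max(0,\pi_1-\pi_2)$, so that $v^*\in T_R^V$ and $v_2^*\geq v_1^*$ by construction, and then DSIC at true type $v^*$ against the deviation $v'=(v_1',v_2^*)$ with $v_1'<\min(\pi_1,v_2^*)$ reads $v_2^*-\pricev(v^*,w)\geq v_2^*-\pi_2$, i.e.\ $\pricev(v^*,w)\leq\pi_2$.

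The subtle part will be step (iii). A direct deviation at an arbitrary $v\in T_R^V$ only yields $\pricev(v,w)\leq\max(v_1,v_2)-v_2+\pi_2$, which fails to be $\leq\pi_2$ whenever $v_1>v_2$; this case cannot be excluded at this stage, because the inequality $\pi_2\geq\pi_1$ (which would force $v_2>v_1$ on $T_R^V$) is only proved later in Lemma~\ref{lem:TallRectangle}. The workaround is precisely step (i): once the payment is known to be constant on $T_R^V$, it suffices to verify the bound at a single hand-picked point where $v_2\geq v_1$ holds automatically.
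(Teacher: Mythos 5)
Your proof is correct and follows essentially the same route as the paper: for the allocation, you derive a contradiction by comparing the on-path utility $v_1-\pi_1$ (via Lemma~\ref{lem:1sPayPi1}) with the utility $v_2-\pi_2$ from deviating into $T_L^V$ (via Lemma~\ref{lem:HorizontalLine}); for the payment, you establish constancy on $T_R^V$ from the constant allocation, then use deviations between $T_L^V$ and $T_R^V$ at points with $v_2\geq v_1$ to sandwich the constant at $\pi_2$. Your remark that the restriction $v_2\geq v_1$ is necessary at this stage because $\pi_2\geq\pi_1$ is only proved afterward (Lemma~\ref{lem:TallRectangle}, which itself relies on this lemma) correctly identifies a point that the paper handles the same way, though more tersely.
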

\begin{proof}
	To prove the allocation part, fix a $w$ with $w_1 > 0$, and consider a $v \in T_R^V$. For any such $v$, by Lemma~\ref{lem:VerticalLine} $\allocv_1(v,w) = 1$. Suppose, $\allocv_2(v,w) = 0$. Then bidder $V$ with value $v \in T_R^V$, has an incentive to report a $v \in T_L^V$. By Lemma~\ref{lem:HorizontalLine}, such a report earns the bidder a utility of $v_2 - \pi_2$ which, by the definition of $T_R^V$, is strictly larger than his current utility of  $v_1 - \pi_1$. To see that his current utility is $v_1 - \pi_1$ note that when a bidder gets only item $1$, by Lemma~\ref{lem:1sPayPi1} he has to pay $\pi_1$.  
\begin{figure}[h]
\centering
\begin{subfigure}[b]{0.45\textwidth}
\centering
\input{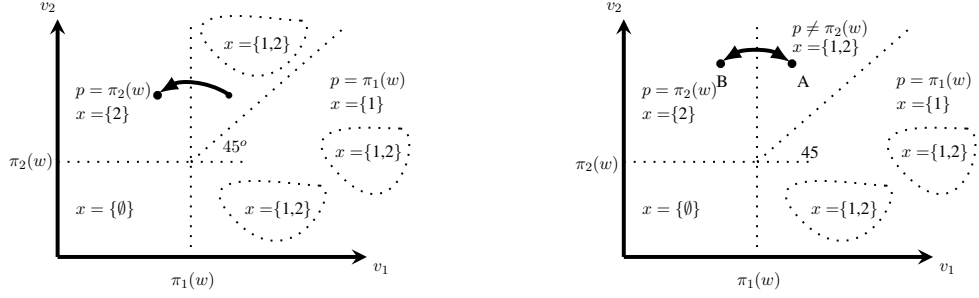}
\end{subfigure}
\begin{subfigure}[b]{0.45\textwidth}
\centering
\ifx\du\undefined
  \newlength{\du}
\fi
\setlength{\du}{15\unitlength}
\begin{tikzpicture}
\pgftransformxscale{1.000000}
\pgftransformyscale{-1.000000}
\definecolor{dialinecolor}{rgb}{0.000000, 0.000000, 0.000000}
\pgfsetstrokecolor{dialinecolor}
\definecolor{dialinecolor}{rgb}{1.000000, 1.000000, 1.000000}
\pgfsetfillcolor{dialinecolor}
\pgfsetlinewidth{0.100000\du}
\pgfsetdash{}{0pt}
\pgfsetdash{}{0pt}
\pgfsetbuttcap
{
\definecolor{dialinecolor}{rgb}{0.000000, 0.000000, 0.000000}
\pgfsetfillcolor{dialinecolor}
\pgfsetarrowsend{stealth}
\definecolor{dialinecolor}{rgb}{0.000000, 0.000000, 0.000000}
\pgfsetstrokecolor{dialinecolor}
\draw (5.000000\du,20.000000\du)--(18.000000\du,20.000000\du);
}
\pgfsetlinewidth{0.100000\du}
\pgfsetdash{}{0pt}
\pgfsetdash{}{0pt}
\pgfsetbuttcap
{
\definecolor{dialinecolor}{rgb}{0.000000, 0.000000, 0.000000}
\pgfsetfillcolor{dialinecolor}
\pgfsetarrowsend{stealth}
\definecolor{dialinecolor}{rgb}{0.000000, 0.000000, 0.000000}
\pgfsetstrokecolor{dialinecolor}
\draw (5.000000\du,20.000000\du)--(5.000000\du,10.000000\du);
}
\pgfsetlinewidth{0.0500000\du}
\pgfsetdash{{\pgflinewidth}{0.200000\du}}{0cm}
\pgfsetdash{{\pgflinewidth}{0.200000\du}}{0cm}
\pgfsetbuttcap
{
\definecolor{dialinecolor}{rgb}{0.000000, 0.000000, 0.000000}
\pgfsetfillcolor{dialinecolor}
\definecolor{dialinecolor}{rgb}{0.000000, 0.000000, 0.000000}
\pgfsetstrokecolor{dialinecolor}
\draw (10.600000\du,20.000000\du)--(10.600000\du,10.000000\du);
}
\definecolor{dialinecolor}{rgb}{0.000000, 0.000000, 0.000000}
\pgfsetstrokecolor{dialinecolor}
\node[anchor=west] at (9.500000\du,21.000000\du){$\pi_1(w)$};
\definecolor{dialinecolor}{rgb}{0.000000, 0.000000, 0.000000}
\pgfsetstrokecolor{dialinecolor}
\node[anchor=west] at (16.200000\du,13.600000\du){$x=$\{1\}};
\definecolor{dialinecolor}{rgb}{0.000000, 0.000000, 0.000000}
\pgfsetstrokecolor{dialinecolor}
\node[anchor=west] at (6.000000\du,13.000000\du){};
\definecolor{dialinecolor}{rgb}{0.000000, 0.000000, 0.000000}
\pgfsetstrokecolor{dialinecolor}
\node[anchor=west] at (5.500000\du,14.000000\du){$x=$\{2\}};
\definecolor{dialinecolor}{rgb}{0.000000, 0.000000, 0.000000}
\pgfsetstrokecolor{dialinecolor}
\node[anchor=west] at (18.000000\du,20.500000\du){$v_1$};
\definecolor{dialinecolor}{rgb}{0.000000, 0.000000, 0.000000}
\pgfsetstrokecolor{dialinecolor}
\node[anchor=west] at (4.000000\du,9.500000\du){$v_2$};
\pgfsetlinewidth{0.0500000\du}
\pgfsetdash{{\pgflinewidth}{0.200000\du}}{0cm}
\pgfsetdash{{\pgflinewidth}{0.200000\du}}{0cm}
\pgfsetbuttcap
{
\definecolor{dialinecolor}{rgb}{0.000000, 0.000000, 0.000000}
\pgfsetfillcolor{dialinecolor}
\definecolor{dialinecolor}{rgb}{0.000000, 0.000000, 0.000000}
\pgfsetstrokecolor{dialinecolor}
\draw (5.000000\du,16.000000\du)--(10.500000\du,16.000000\du);
}
\definecolor{dialinecolor}{rgb}{0.000000, 0.000000, 0.000000}
\pgfsetstrokecolor{dialinecolor}
\node[anchor=west] at (5.500000\du,18.000000\du){$x=\{\emptyset\}$};
\definecolor{dialinecolor}{rgb}{0.000000, 0.000000, 0.000000}
\pgfsetstrokecolor{dialinecolor}
\node[anchor=west] at (5.500000\du,13.000000\du){$p=\pi_2(w)$};
\definecolor{dialinecolor}{rgb}{0.000000, 0.000000, 0.000000}
\pgfsetstrokecolor{dialinecolor}
\node[anchor=west] at (2.700000\du,16.102800\du){$\pi_2(w)$};
\pgfsetlinewidth{0.0500000\du}
\pgfsetdash{{\pgflinewidth}{0.200000\du}}{0cm}
\pgfsetdash{{\pgflinewidth}{0.200000\du}}{0cm}
\pgfsetmiterjoin
\pgfsetbuttcap
\definecolor{dialinecolor}{rgb}{1.000000, 1.000000, 1.000000}
\pgfsetfillcolor{dialinecolor}
\pgfpathmoveto{\pgfpoint{19.400000\du}{14.600000\du}}
\pgfpathcurveto{\pgfpoint{20.400000\du}{14.600000\du}}{\pgfpoint{19.400000\du}{19.100000\du}}{\pgfpoint{16.900000\du}{16.600000\du}}
\pgfpathcurveto{\pgfpoint{14.400000\du}{14.100000\du}}{\pgfpoint{18.400000\du}{14.600000\du}}{\pgfpoint{19.400000\du}{14.600000\du}}
\pgfusepath{fill}
\definecolor{dialinecolor}{rgb}{0.000000, 0.000000, 0.000000}
\pgfsetstrokecolor{dialinecolor}
\pgfpathmoveto{\pgfpoint{19.400000\du}{14.600000\du}}
\pgfpathcurveto{\pgfpoint{20.400000\du}{14.600000\du}}{\pgfpoint{19.400000\du}{19.100000\du}}{\pgfpoint{16.900000\du}{16.600000\du}}
\pgfpathcurveto{\pgfpoint{14.400000\du}{14.100000\du}}{\pgfpoint{18.400000\du}{14.600000\du}}{\pgfpoint{19.400000\du}{14.600000\du}}
\pgfusepath{stroke}
\pgfsetlinewidth{0.0500000\du}
\pgfsetdash{{\pgflinewidth}{0.200000\du}}{0cm}
\pgfsetdash{{\pgflinewidth}{0.200000\du}}{0cm}
\pgfsetmiterjoin
\pgfsetbuttcap
\definecolor{dialinecolor}{rgb}{1.000000, 1.000000, 1.000000}
\pgfsetfillcolor{dialinecolor}
\pgfpathmoveto{\pgfpoint{16.000000\du}{17.000000\du}}
\pgfpathcurveto{\pgfpoint{17.000000\du}{17.000000\du}}{\pgfpoint{15.000000\du}{21.000000\du}}{\pgfpoint{12.500000\du}{18.500000\du}}
\pgfpathcurveto{\pgfpoint{10.000000\du}{16.000000\du}}{\pgfpoint{15.000000\du}{17.000000\du}}{\pgfpoint{16.000000\du}{17.000000\du}}
\pgfusepath{fill}
\definecolor{dialinecolor}{rgb}{0.000000, 0.000000, 0.000000}
\pgfsetstrokecolor{dialinecolor}
\pgfpathmoveto{\pgfpoint{16.000000\du}{17.000000\du}}
\pgfpathcurveto{\pgfpoint{17.000000\du}{17.000000\du}}{\pgfpoint{15.000000\du}{21.000000\du}}{\pgfpoint{12.500000\du}{18.500000\du}}
\pgfpathcurveto{\pgfpoint{10.000000\du}{16.000000\du}}{\pgfpoint{15.000000\du}{17.000000\du}}{\pgfpoint{16.000000\du}{17.000000\du}}
\pgfusepath{stroke}
\definecolor{dialinecolor}{rgb}{0.000000, 0.000000, 0.000000}
\pgfsetstrokecolor{dialinecolor}
\node[anchor=west] at (16.200000\du,12.600000\du){$p=\pi_1(w)$};
\definecolor{dialinecolor}{rgb}{0.000000, 0.000000, 0.000000}
\pgfsetstrokecolor{dialinecolor}
\node[anchor=west] at (16.400000\du,15.600000\du){$x=$\{1,2\}};
\definecolor{dialinecolor}{rgb}{0.000000, 0.000000, 0.000000}
\pgfsetstrokecolor{dialinecolor}
\node[anchor=west] at (12.600000\du,18.000000\du){$x=$\{1,2\}};
\pgfsetlinewidth{0.0500000\du}
\pgfsetdash{{\pgflinewidth}{0.200000\du}}{0cm}
\pgfsetdash{{\pgflinewidth}{0.200000\du}}{0cm}
\pgfsetbuttcap
{
\definecolor{dialinecolor}{rgb}{0.000000, 0.000000, 0.000000}
\pgfsetfillcolor{dialinecolor}
\definecolor{dialinecolor}{rgb}{0.000000, 0.000000, 0.000000}
\pgfsetstrokecolor{dialinecolor}
\draw (10.600000\du,16.000000\du)--(17.000000\du,10.400000\du);
}
\pgfsetlinewidth{0.0500000\du}
\pgfsetdash{{\pgflinewidth}{0.200000\du}}{0cm}
\pgfsetdash{{\pgflinewidth}{0.200000\du}}{0cm}
\pgfsetbuttcap
{
\definecolor{dialinecolor}{rgb}{0.000000, 0.000000, 0.000000}
\pgfsetfillcolor{dialinecolor}
\definecolor{dialinecolor}{rgb}{0.000000, 0.000000, 0.000000}
\pgfsetstrokecolor{dialinecolor}
\draw (10.600000\du,16.000000\du)--(12.800000\du,16.000000\du);
}
\definecolor{dialinecolor}{rgb}{0.000000, 0.000000, 0.000000}
\pgfsetstrokecolor{dialinecolor}
\node[anchor=west] at (12.200000\du,15.600000\du){45};
\definecolor{dialinecolor}{rgb}{0.000000, 0.000000, 0.000000}
\pgfsetfillcolor{dialinecolor}
\pgfpathellipse{\pgfpoint{12.070700\du}{11.870700\du}}{\pgfpoint{0.100000\du}{0\du}}{\pgfpoint{0\du}{0.100000\du}}
\pgfusepath{fill}
\pgfsetlinewidth{0.100000\du}
\pgfsetdash{}{0pt}
\pgfsetdash{}{0pt}
\definecolor{dialinecolor}{rgb}{0.000000, 0.000000, 0.000000}
\pgfsetstrokecolor{dialinecolor}
\pgfpathellipse{\pgfpoint{12.070700\du}{11.870700\du}}{\pgfpoint{0.100000\du}{0\du}}{\pgfpoint{0\du}{0.100000\du}}
\pgfusepath{stroke}
\definecolor{dialinecolor}{rgb}{0.000000, 0.000000, 0.000000}
\pgfsetfillcolor{dialinecolor}
\pgfpathellipse{\pgfpoint{9.070710\du}{11.870700\du}}{\pgfpoint{0.100000\du}{0\du}}{\pgfpoint{0\du}{0.100000\du}}
\pgfusepath{fill}
\pgfsetlinewidth{0.100000\du}
\pgfsetdash{}{0pt}
\pgfsetdash{}{0pt}
\definecolor{dialinecolor}{rgb}{0.000000, 0.000000, 0.000000}
\pgfsetstrokecolor{dialinecolor}
\pgfpathellipse{\pgfpoint{9.070710\du}{11.870700\du}}{\pgfpoint{0.100000\du}{0\du}}{\pgfpoint{0\du}{0.100000\du}}
\pgfusepath{stroke}
\pgfsetlinewidth{0.100000\du}
\pgfsetdash{}{0pt}
\pgfsetdash{}{0pt}
\pgfsetbuttcap
{
\definecolor{dialinecolor}{rgb}{0.000000, 0.000000, 0.000000}
\pgfsetfillcolor{dialinecolor}
\pgfsetarrowsstart{latex}
\pgfsetarrowsend{latex}
\definecolor{dialinecolor}{rgb}{0.000000, 0.000000, 0.000000}
\pgfsetstrokecolor{dialinecolor}
\pgfpathmoveto{\pgfpoint{12.000040\du}{11.800025\du}}
\pgfpatharc{302}{239}{2.753587\du and 2.753587\du}
\pgfusepath{stroke}
}
\definecolor{dialinecolor}{rgb}{0.000000, 0.000000, 0.000000}
\pgfsetstrokecolor{dialinecolor}
\node[anchor=west] at (11.800000\du,11.200000\du){$x=$\{1,2\}};
\definecolor{dialinecolor}{rgb}{0.000000, 0.000000, 0.000000}
\pgfsetstrokecolor{dialinecolor}
\node[anchor=west] at (11.800000\du,10.400000\du){$p\neq \pi_2(w)$};
\definecolor{dialinecolor}{rgb}{0.000000, 0.000000, 0.000000}
\pgfsetstrokecolor{dialinecolor}
\node[anchor=west] at (12.000000\du,12.600000\du){A};
\definecolor{dialinecolor}{rgb}{0.000000, 0.000000, 0.000000}
\pgfsetstrokecolor{dialinecolor}
\node[anchor=west] at (8.600000\du,12.600000\du){B};
\end{tikzpicture}
\end{subfigure}
\caption{Profitable deviations if both items are not allocated at some point above the $45^o$ line (left) and if when both items are allocated above
the $45^o$ line the payment is not equal to $\pi_2(w)$ (right).}\label{fig:deviations1}
\end{figure}
	To prove the payment part, fix a $w$ and consider a $v \in T_R^V$. Since we just showed that all $v\in T_R^V$ have $\allocv_1(v,w) = \allocv_2(v,w) = 1$, it follows from DSIC that $\pricev(v,w)$ for all $v\in T_R^V$ must be the same, say $p$. We know from Lemma~\ref{lem:HorizontalLine} that for all $v \in T_L^V$ we have $\pricev(v,w) = \pi_2$. We claim that this immediately implies that $p = \pi_2$. Suppose on the contrary $p \neq \pi_2$. 
	\begin{enumerate}
		\item Say $p > \pi_2$. Then $v$'s in $T_R^V$ with $v_2 \geq v_1$ (e.g. point $A$ in Figure \ref{fig:deviations1}) have an incentive to report in $T_L^V$ (e.g. point $B$ in Figure \ref{fig:deviations1}) to receive a utility $v_2 - \pi_2$ which is strictly larger than the utility $v_2 - p$ they get currently. 
		\item And if $p < \pi_2$, then $v$'s in $T_L^V$ with $v_2 \geq v_1$ (e.g. point $B$ in Figure \ref{fig:deviations1}) have an incentive to report to a $v$ in $T_R^V$ (e.g. point $A$ in Figure \ref{fig:deviations1}) to a get a utility of $v_2 - p$ which is strictly larger than the utility $v_2 - \pi_2$ they get currently. 
	\end{enumerate} 
Therefore we have $\pricev(v,w) = \pi_2$ for all $v \in T_R^V$.
\end{proof}

\begin{lemma}
\label{lem:TallRectangle}
For every deterministic DSIC+IR mechanism that gets a finite approximation factor, 
$\forall w \in \R_+^2: w_1 > 0,\text{ we have }\pi_2(w) \geq \pi_1(w).$ 
\begin{figure}[h]
\centering
\begin{subfigure}[b]{0.45\textwidth}
\centering
\ifx\du\undefined
  \newlength{\du}
\fi
\setlength{\du}{15\unitlength}
\begin{tikzpicture}
\pgftransformxscale{1.000000}
\pgftransformyscale{-1.000000}
\definecolor{dialinecolor}{rgb}{0.000000, 0.000000, 0.000000}
\pgfsetstrokecolor{dialinecolor}
\definecolor{dialinecolor}{rgb}{1.000000, 1.000000, 1.000000}
\pgfsetfillcolor{dialinecolor}
\pgfsetlinewidth{0.100000\du}
\pgfsetdash{}{0pt}
\pgfsetdash{}{0pt}
\pgfsetbuttcap
{
\definecolor{dialinecolor}{rgb}{0.000000, 0.000000, 0.000000}
\pgfsetfillcolor{dialinecolor}
\pgfsetarrowsend{stealth}
\definecolor{dialinecolor}{rgb}{0.000000, 0.000000, 0.000000}
\pgfsetstrokecolor{dialinecolor}
\draw (5.000000\du,20.000000\du)--(18.000000\du,20.000000\du);
}
\pgfsetlinewidth{0.100000\du}
\pgfsetdash{}{0pt}
\pgfsetdash{}{0pt}
\pgfsetbuttcap
{
\definecolor{dialinecolor}{rgb}{0.000000, 0.000000, 0.000000}
\pgfsetfillcolor{dialinecolor}
\pgfsetarrowsend{stealth}
\definecolor{dialinecolor}{rgb}{0.000000, 0.000000, 0.000000}
\pgfsetstrokecolor{dialinecolor}
\draw (5.000000\du,20.000000\du)--(5.000000\du,10.000000\du);
}
\pgfsetlinewidth{0.0500000\du}
\pgfsetdash{{\pgflinewidth}{0.200000\du}}{0cm}
\pgfsetdash{{\pgflinewidth}{0.200000\du}}{0cm}
\pgfsetbuttcap
{
\definecolor{dialinecolor}{rgb}{0.000000, 0.000000, 0.000000}
\pgfsetfillcolor{dialinecolor}
\definecolor{dialinecolor}{rgb}{0.000000, 0.000000, 0.000000}
\pgfsetstrokecolor{dialinecolor}
\draw (8.400000\du,19.800000\du)--(8.400000\du,9.800000\du);
}
\definecolor{dialinecolor}{rgb}{0.000000, 0.000000, 0.000000}
\pgfsetstrokecolor{dialinecolor}
\node[anchor=west] at (7.7000000\du,20.800000\du){$\pi_1(w)$};
\definecolor{dialinecolor}{rgb}{0.000000, 0.000000, 0.000000}
\pgfsetstrokecolor{dialinecolor}
\node[anchor=west] at (12.600000\du,14.000000\du){$x=$\{1\}};
\definecolor{dialinecolor}{rgb}{0.000000, 0.000000, 0.000000}
\pgfsetstrokecolor{dialinecolor}
\node[anchor=west] at (6.000000\du,13.000000\du){};
\definecolor{dialinecolor}{rgb}{0.000000, 0.000000, 0.000000}
\pgfsetstrokecolor{dialinecolor}
\node[anchor=west] at (5.100000\du,14.000000\du){$x=$\{2\}};
\definecolor{dialinecolor}{rgb}{0.000000, 0.000000, 0.000000}
\pgfsetstrokecolor{dialinecolor}
\node[anchor=west] at (18.000000\du,20.500000\du){$v_1$};
\definecolor{dialinecolor}{rgb}{0.000000, 0.000000, 0.000000}
\pgfsetstrokecolor{dialinecolor}
\node[anchor=west] at (4.000000\du,9.500000\du){$v_2$};
\pgfsetlinewidth{0.0500000\du}
\pgfsetdash{{\pgflinewidth}{0.200000\du}}{0cm}
\pgfsetdash{{\pgflinewidth}{0.200000\du}}{0cm}
\pgfsetbuttcap
{
\definecolor{dialinecolor}{rgb}{0.000000, 0.000000, 0.000000}
\pgfsetfillcolor{dialinecolor}
\definecolor{dialinecolor}{rgb}{0.000000, 0.000000, 0.000000}
\pgfsetstrokecolor{dialinecolor}
\draw (5.000000\du,14.600000\du)--(8.300000\du,14.600000\du);
}
\definecolor{dialinecolor}{rgb}{0.000000, 0.000000, 0.000000}
\pgfsetstrokecolor{dialinecolor}
\node[anchor=west] at (5.500000\du,18.000000\du){$x=\{\emptyset\}$};
\definecolor{dialinecolor}{rgb}{0.000000, 0.000000, 0.000000}
\pgfsetstrokecolor{dialinecolor}
\node[anchor=west] at (5.100000\du,13.000000\du){$p=\pi_2(w)$};
\definecolor{dialinecolor}{rgb}{0.000000, 0.000000, 0.000000}
\pgfsetstrokecolor{dialinecolor}
\node[anchor=west] at (2.700000\du,14.600000\du){$\pi_2(w)$};
\pgfsetlinewidth{0.0500000\du}
\pgfsetdash{{\pgflinewidth}{0.200000\du}}{0cm}
\pgfsetdash{{\pgflinewidth}{0.200000\du}}{0cm}
\pgfsetmiterjoin
\pgfsetbuttcap
\definecolor{dialinecolor}{rgb}{1.000000, 1.000000, 1.000000}
\pgfsetfillcolor{dialinecolor}
\pgfpathmoveto{\pgfpoint{19.400000\du}{14.600000\du}}
\pgfpathcurveto{\pgfpoint{20.400000\du}{14.600000\du}}{\pgfpoint{19.400000\du}{19.100000\du}}{\pgfpoint{16.900000\du}{16.600000\du}}
\pgfpathcurveto{\pgfpoint{14.400000\du}{14.100000\du}}{\pgfpoint{18.400000\du}{14.600000\du}}{\pgfpoint{19.400000\du}{14.600000\du}}
\pgfusepath{fill}
\definecolor{dialinecolor}{rgb}{0.000000, 0.000000, 0.000000}
\pgfsetstrokecolor{dialinecolor}
\pgfpathmoveto{\pgfpoint{19.400000\du}{14.600000\du}}
\pgfpathcurveto{\pgfpoint{20.400000\du}{14.600000\du}}{\pgfpoint{19.400000\du}{19.100000\du}}{\pgfpoint{16.900000\du}{16.600000\du}}
\pgfpathcurveto{\pgfpoint{14.400000\du}{14.100000\du}}{\pgfpoint{18.400000\du}{14.600000\du}}{\pgfpoint{19.400000\du}{14.600000\du}}
\pgfusepath{stroke}
\pgfsetlinewidth{0.0500000\du}
\pgfsetdash{{\pgflinewidth}{0.200000\du}}{0cm}
\pgfsetdash{{\pgflinewidth}{0.200000\du}}{0cm}
\pgfsetmiterjoin
\pgfsetbuttcap
\definecolor{dialinecolor}{rgb}{1.000000, 1.000000, 1.000000}
\pgfsetfillcolor{dialinecolor}
\pgfpathmoveto{\pgfpoint{16.000000\du}{17.000000\du}}
\pgfpathcurveto{\pgfpoint{17.000000\du}{17.000000\du}}{\pgfpoint{15.000000\du}{21.000000\du}}{\pgfpoint{12.500000\du}{18.500000\du}}
\pgfpathcurveto{\pgfpoint{10.000000\du}{16.000000\du}}{\pgfpoint{15.000000\du}{17.000000\du}}{\pgfpoint{16.000000\du}{17.000000\du}}
\pgfusepath{fill}
\definecolor{dialinecolor}{rgb}{0.000000, 0.000000, 0.000000}
\pgfsetstrokecolor{dialinecolor}
\pgfpathmoveto{\pgfpoint{16.000000\du}{17.000000\du}}
\pgfpathcurveto{\pgfpoint{17.000000\du}{17.000000\du}}{\pgfpoint{15.000000\du}{21.000000\du}}{\pgfpoint{12.500000\du}{18.500000\du}}
\pgfpathcurveto{\pgfpoint{10.000000\du}{16.000000\du}}{\pgfpoint{15.000000\du}{17.000000\du}}{\pgfpoint{16.000000\du}{17.000000\du}}
\pgfusepath{stroke}
\definecolor{dialinecolor}{rgb}{0.000000, 0.000000, 0.000000}
\pgfsetstrokecolor{dialinecolor}
\node[anchor=west] at (12.600000\du,13.000000\du){$p=\pi_1(w)$};
\definecolor{dialinecolor}{rgb}{0.000000, 0.000000, 0.000000}
\pgfsetstrokecolor{dialinecolor}
\node[anchor=west] at (16.400000\du,15.600000\du){$x=$\{1,2\}};
\definecolor{dialinecolor}{rgb}{0.000000, 0.000000, 0.000000}
\pgfsetstrokecolor{dialinecolor}
\node[anchor=west] at (12.500000\du,18.000000\du){$x=$\{1,2\}};
\pgfsetlinewidth{0.0500000\du}
\pgfsetdash{{\pgflinewidth}{0.200000\du}}{0cm}
\pgfsetdash{{\pgflinewidth}{0.200000\du}}{0cm}
\pgfsetbuttcap
{
\definecolor{dialinecolor}{rgb}{0.000000, 0.000000, 0.000000}
\pgfsetfillcolor{dialinecolor}
\definecolor{dialinecolor}{rgb}{0.000000, 0.000000, 0.000000}
\pgfsetstrokecolor{dialinecolor}
\draw (8.400000\du,14.600000\du)--(14.000000\du,10.000000\du);
}
\pgfsetlinewidth{0.0500000\du}
\pgfsetdash{{\pgflinewidth}{0.200000\du}}{0cm}
\pgfsetdash{{\pgflinewidth}{0.200000\du}}{0cm}
\pgfsetbuttcap
{
\definecolor{dialinecolor}{rgb}{0.000000, 0.000000, 0.000000}
\pgfsetfillcolor{dialinecolor}
\definecolor{dialinecolor}{rgb}{0.000000, 0.000000, 0.000000}
\pgfsetstrokecolor{dialinecolor}
\draw (8.400000\du,14.600000\du)--(10.600000\du,14.600000\du);
}
\definecolor{dialinecolor}{rgb}{0.000000, 0.000000, 0.000000}
\pgfsetstrokecolor{dialinecolor}
\node[anchor=west] at (9.410854\du,14.072965\du){$45^o$};
\definecolor{dialinecolor}{rgb}{0.000000, 0.000000, 0.000000}
\pgfsetstrokecolor{dialinecolor}
\node[anchor=west] at (8.800000\du,11.000000\du){x=\{1,2\}};
\definecolor{dialinecolor}{rgb}{0.000000, 0.000000, 0.000000}
\pgfsetstrokecolor{dialinecolor}
\node[anchor=west] at (8.800000\du,10.200000\du){$p=\pi_2(w)$};
\end{tikzpicture}
\end{subfigure}
\begin{subfigure}[b]{0.45\textwidth}
\centering
\input{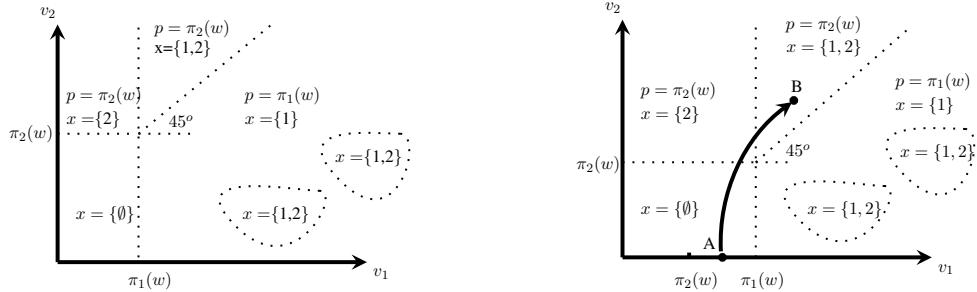}
\end{subfigure}
\caption{The threshold for item $2$ has to be at last as large as the threshold for item $1$. Otherwise the right figure portrays a profitable deviation.}\label{fig:deviations2}
\end{figure}
\end{lemma}
\begin{proof}
Fix a $w$ s.t. $w_1 > 0$. Note that by Lemma~\ref{lem:Pi1NotZero}, finite approximation factor immediately implies that $\pi_1 >0$ and is finite. Suppose in contrary to the lemma, we had $\pi_2(w) < \pi_1(w)$. Consider a $v = (\frac{\pi_1+\pi_2}{2},0)$ (point $A$ in Figure \ref{fig:deviations2}). Such a $v$ in $B_L^V$ has an incentive to report to be in $T_R^V$ (e.g. point $B$ in Figure \ref{fig:deviations2}). By Lemma~\ref{lem:BothAbove45Deg} such a report earns him a utility of $\frac{\pi_1+\pi_2}{2} - \pi_2 > 0$, instead of the $0$ utility he currently gets.  
\end{proof}

\begin{lemma}
\label{lem:1Below45Deg}
For every deterministic DSIC+IR mechanism that gets a finite approximation factor,  
$$\forall w \in \R_+^2: w_1 > 0 ~\& ~\pi_2 > \pi_1, \forall v \in B_R^V,\text{ we have } \{\allocv_1(v,w) = 1, \allocv_2(v,w) = 0, \pricev(v,w) = \pi_1(w)\}$$
\end{lemma}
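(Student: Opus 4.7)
The plan is to reduce the lemma to a single allocation claim: that $\allocv_2(v,w) = 0$ for every $v \in B_R^V$. Once that is established, the vertical-threshold characterization of Lemma~\ref{lem:VerticalLine} gives $\allocv_1(v,w) = 1$ (since $v_1 > \pi_1$ on $B_R^V$), and the one-item-allocation payment characterization of Lemma~\ref{lem:1sPayPi1} immediately forces $\pricev(v,w) = \pi_1(w)$.

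To prove the allocation claim, I proceed by contradiction. Suppose some $v^* \in B_R^V$ receives allocation $\{1,2\}$. Types in $T_R^V$ also receive $\{1,2\}$ and, by Lemma~\ref{lem:BothAbove45Deg}, pay exactly $\pi_2$. The standard deterministic-DSIC fact that two types with identical allocations must carry identical payments (otherwise the higher-paying type would mimic the lower-paying one) pins down $\pricev(v^*, w) = \pi_2$ as well.

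The contradiction comes from constructing a profitable deviation by a suitably chosen type $v^L$ sitting diagonally adjacent to the corner $(\pi_1, \pi_2)$. Concretely, I would pick $v^L = (v_1^L, v_2^L)$ with $v_2^L \in (\pi_2 - \pi_1, \pi_2)$ and $v_1^L \in (\pi_2 - v_2^L, \pi_1)$. Both intervals are non-empty open intervals in $\R_+$: the first uses $\pi_1 > 0$ (Lemma~\ref{lem:Pi1NotZero}), and the second uses exactly the strict hypothesis $\pi_2 > \pi_1$. By construction $v^L \in B_L^V$, so Lemmas~\ref{lem:VerticalLine} and~\ref{lem:HorizontalLine} assign $v^L$ the empty allocation at zero payment, giving truthful utility $0$. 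Misreporting as $v^*$ yields allocation $\{1,2\}$ at payment $\pi_2$, and the defining inequality $v_1^L + v_2^L > \pi_2$ guarantees strictly positive deviation utility, contradicting DSIC.

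The main delicate point is simply the existence of the deviator $v^L$. It is precisely the strict gap $\pi_2 > \pi_1$ that opens up the slab of types in $B_L^V$ lying above the line $v_1 + v_2 = \pi_2$; without that gap the construction collapses to an empty set. This is consistent with the paper's earlier remark that in the degenerate case $\pi_2 = \pi_1$ the mechanism retains the freedom to allocate either $\{1\}$ or $\{1,2\}$ to types in $B_R^V$.
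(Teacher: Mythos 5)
Your reduction to the allocation claim $\allocv_2(v,w)=0$ on $B_R^V$ is correct, and your use of the taxation principle to pin down $\pricev(v^*,w)=\pi_2$ for a hypothetical $v^*\in B_R^V$ receiving both items is also sound. The flaw is in the deviation step, and it is fatal: you treat bidder $V$ as \emph{additive} rather than \emph{unit-demand}. For a unit-demand bidder, receiving the bundle $\{1,2\}$ at price $\pi_2$ yields utility $\max\{v_1^L,v_2^L\}-\pi_2$, not $v_1^L+v_2^L-\pi_2$. Your deviator $v^L\in B_L^V$ has $v_1^L<\pi_1\le\pi_2$ and $v_2^L<\pi_2$ by definition of $B_L^V$, so $\max\{v_1^L,v_2^L\}<\pi_2$ and the deviation utility is strictly \emph{negative}. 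No choice of $v^L$ in $B_L^V$ can repair this --- the region $B_L^V$ is exactly the set of types whose best single-item value is below the price $\pi_2$, so none of them wants the bundle at that price. Your construction produces no contradiction at all.

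The paper instead derives the contradiction from \emph{inside} $B_R^V$, in two steps: first, it rules out the possibility that \emph{every} $v\in B_R^V$ gets $\{1,2\}$ by exhibiting the point $(\tfrac{\pi_1+\pi_2}{2},0)\in B_R^V$, whose unit-demand value $\tfrac{\pi_1+\pi_2}{2}$ is strictly below the required payment $\pi_2$, violating IR. This guarantees the existence of some $v\in B_R^V$ receiving only $\{1\}$ at price $\pi_1$. Second, any remaining $v^*\in B_R^V$ with allocation $\{1,2\}$ would then profitably misreport to that point: its current utility $\max\{v_1^*,v_2^*\}-\pi_2$ is strictly less than $v_1^*-\pi_1$ precisely by the defining inequality $v_2^*-\pi_2<v_1^*-\pi_1$ of $B_R^V$ together with $\pi_2>\pi_1$. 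Note that the IR step is not dispensable: until you know that a $\{1\}$-only point exists, you have no target for the DSIC deviation.
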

\begin{proof}
Fix a $w \in \R_+^2$ s.t. $w_1 > 0$. From Lemma~\ref{lem:VerticalLine} it is clear that for all $v \in B_R^V$, $\allocv_1(v,w) = 1$. As far as allocation is concerned we only have to prove that for all $v \in B_R^V$, we have $\allocv_2(v,w) = 0$. Once we prove this, the fact that $\pricev(v,w) = \pi_1$ for all $v \in B_R^V$ immediately follows from Lemma~\ref{lem:1sPayPi1}. 
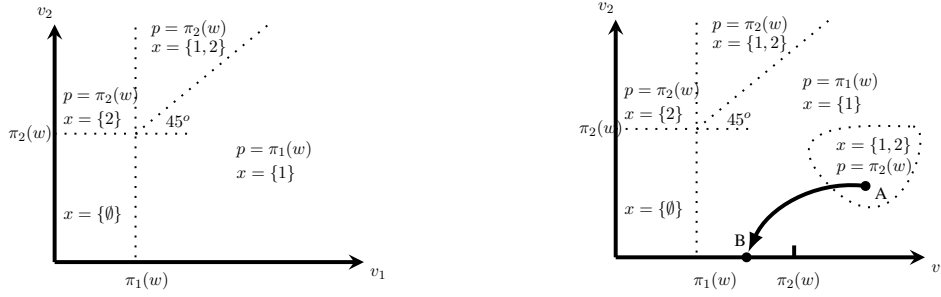
\begin{figure}[h]
\centering
\begin{subfigure}[b]{0.45\textwidth}
\centering
\ifx\du\undefined
  \newlength{\du}
\fi
\setlength{\du}{15\unitlength}
\begin{tikzpicture}
\pgftransformxscale{1.000000}
\pgftransformyscale{-1.000000}
\definecolor{dialinecolor}{rgb}{0.000000, 0.000000, 0.000000}
\pgfsetstrokecolor{dialinecolor}
\definecolor{dialinecolor}{rgb}{1.000000, 1.000000, 1.000000}
\pgfsetfillcolor{dialinecolor}
\pgfsetlinewidth{0.100000\du}
\pgfsetdash{}{0pt}
\pgfsetdash{}{0pt}
\pgfsetbuttcap
{
\definecolor{dialinecolor}{rgb}{0.000000, 0.000000, 0.000000}
\pgfsetfillcolor{dialinecolor}
\pgfsetarrowsend{stealth}
\definecolor{dialinecolor}{rgb}{0.000000, 0.000000, 0.000000}
\pgfsetstrokecolor{dialinecolor}
\draw (5.000000\du,20.000000\du)--(18.000000\du,20.000000\du);
}
\pgfsetlinewidth{0.100000\du}
\pgfsetdash{}{0pt}
\pgfsetdash{}{0pt}
\pgfsetbuttcap
{
\definecolor{dialinecolor}{rgb}{0.000000, 0.000000, 0.000000}
\pgfsetfillcolor{dialinecolor}
\pgfsetarrowsend{stealth}
\definecolor{dialinecolor}{rgb}{0.000000, 0.000000, 0.000000}
\pgfsetstrokecolor{dialinecolor}
\draw (5.000000\du,20.000000\du)--(5.000000\du,10.000000\du);
}
\pgfsetlinewidth{0.0500000\du}
\pgfsetdash{{\pgflinewidth}{0.200000\du}}{0cm}
\pgfsetdash{{\pgflinewidth}{0.200000\du}}{0cm}
\pgfsetbuttcap
{
\definecolor{dialinecolor}{rgb}{0.000000, 0.000000, 0.000000}
\pgfsetfillcolor{dialinecolor}
\definecolor{dialinecolor}{rgb}{0.000000, 0.000000, 0.000000}
\pgfsetstrokecolor{dialinecolor}
\draw (8.400000\du,19.800000\du)--(8.400000\du,9.800000\du);
}
\definecolor{dialinecolor}{rgb}{0.000000, 0.000000, 0.000000}
\pgfsetstrokecolor{dialinecolor}
\node[anchor=west] at (7.700000\du,20.800000\du){$\pi_1(w)$};
\definecolor{dialinecolor}{rgb}{0.000000, 0.000000, 0.000000}
\pgfsetstrokecolor{dialinecolor}
\node[anchor=west] at (12.381105\du,16.261912\du){$x=\{1\}$};
\definecolor{dialinecolor}{rgb}{0.000000, 0.000000, 0.000000}
\pgfsetstrokecolor{dialinecolor}
\node[anchor=west] at (6.000000\du,13.000000\du){};
\definecolor{dialinecolor}{rgb}{0.000000, 0.000000, 0.000000}
\pgfsetstrokecolor{dialinecolor}
\node[anchor=west] at (5.100000\du,14.000000\du){$x=\{2\}$};
\definecolor{dialinecolor}{rgb}{0.000000, 0.000000, 0.000000}
\pgfsetstrokecolor{dialinecolor}
\node[anchor=west] at (18.000000\du,20.500000\du){$v_1$};
\definecolor{dialinecolor}{rgb}{0.000000, 0.000000, 0.000000}
\pgfsetstrokecolor{dialinecolor}
\node[anchor=west] at (4.000000\du,9.500000\du){$v_2$};
\pgfsetlinewidth{0.0500000\du}
\pgfsetdash{{\pgflinewidth}{0.200000\du}}{0cm}
\pgfsetdash{{\pgflinewidth}{0.200000\du}}{0cm}
\pgfsetbuttcap
{
\definecolor{dialinecolor}{rgb}{0.000000, 0.000000, 0.000000}
\pgfsetfillcolor{dialinecolor}
\definecolor{dialinecolor}{rgb}{0.000000, 0.000000, 0.000000}
\pgfsetstrokecolor{dialinecolor}
\draw (5.000000\du,14.600000\du)--(8.300000\du,14.600000\du);
}
\definecolor{dialinecolor}{rgb}{0.000000, 0.000000, 0.000000}
\pgfsetstrokecolor{dialinecolor}
\node[anchor=west] at (5.100000\du,18.000000\du){$x=\{\emptyset\}$};
\definecolor{dialinecolor}{rgb}{0.000000, 0.000000, 0.000000}
\pgfsetstrokecolor{dialinecolor}
\node[anchor=west] at (5.100000\du,13.000000\du){$p=\pi_2(w)$};
\definecolor{dialinecolor}{rgb}{0.000000, 0.000000, 0.000000}
\pgfsetstrokecolor{dialinecolor}
\node[anchor=west] at (2.800000\du,14.600000\du){$\pi_2(w)$};
\definecolor{dialinecolor}{rgb}{0.000000, 0.000000, 0.000000}
\pgfsetstrokecolor{dialinecolor}
\node[anchor=west] at (12.381105\du,15.261912\du){$p=\pi_1(w)$};
\pgfsetlinewidth{0.0500000\du}
\pgfsetdash{{\pgflinewidth}{0.200000\du}}{0cm}
\pgfsetdash{{\pgflinewidth}{0.200000\du}}{0cm}
\pgfsetbuttcap
{
\definecolor{dialinecolor}{rgb}{0.000000, 0.000000, 0.000000}
\pgfsetfillcolor{dialinecolor}
\definecolor{dialinecolor}{rgb}{0.000000, 0.000000, 0.000000}
\pgfsetstrokecolor{dialinecolor}
\draw (8.400000\du,14.600000\du)--(14.000000\du,10.000000\du);
}
\pgfsetlinewidth{0.0500000\du}
\pgfsetdash{{\pgflinewidth}{0.200000\du}}{0cm}
\pgfsetdash{{\pgflinewidth}{0.200000\du}}{0cm}
\pgfsetbuttcap
{
\definecolor{dialinecolor}{rgb}{0.000000, 0.000000, 0.000000}
\pgfsetfillcolor{dialinecolor}
\definecolor{dialinecolor}{rgb}{0.000000, 0.000000, 0.000000}
\pgfsetstrokecolor{dialinecolor}
\draw (8.400000\du,14.600000\du)--(10.600000\du,14.600000\du);
}
\definecolor{dialinecolor}{rgb}{0.000000, 0.000000, 0.000000}
\pgfsetstrokecolor{dialinecolor}
\node[anchor=west] at (9.410850\du,14.073000\du){$45^o$};
\definecolor{dialinecolor}{rgb}{0.000000, 0.000000, 0.000000}
\pgfsetstrokecolor{dialinecolor}
\node[anchor=west] at (8.800000\du,11.000000\du){$x=\{1,2\}$};
\definecolor{dialinecolor}{rgb}{0.000000, 0.000000, 0.000000}
\pgfsetstrokecolor{dialinecolor}
\node[anchor=west] at (8.800000\du,10.200000\du){$p=\pi_2(w)$};
\end{tikzpicture}
\end{subfigure}
\begin{subfigure}[b]{0.45\textwidth}
\centering
\ifx\du\undefined
  \newlength{\du}
\fi
\setlength{\du}{15\unitlength}
\begin{tikzpicture}
\pgftransformxscale{1.000000}
\pgftransformyscale{-1.000000}
\definecolor{dialinecolor}{rgb}{0.000000, 0.000000, 0.000000}
\pgfsetstrokecolor{dialinecolor}
\definecolor{dialinecolor}{rgb}{1.000000, 1.000000, 1.000000}
\pgfsetfillcolor{dialinecolor}
\pgfsetlinewidth{0.100000\du}
\pgfsetdash{}{0pt}
\pgfsetdash{}{0pt}
\pgfsetbuttcap
{
\definecolor{dialinecolor}{rgb}{0.000000, 0.000000, 0.000000}
\pgfsetfillcolor{dialinecolor}
\pgfsetarrowsend{stealth}
\definecolor{dialinecolor}{rgb}{0.000000, 0.000000, 0.000000}
\pgfsetstrokecolor{dialinecolor}
\draw (5.000000\du,20.000000\du)--(18.000000\du,20.000000\du);
}
\pgfsetlinewidth{0.100000\du}
\pgfsetdash{}{0pt}
\pgfsetdash{}{0pt}
\pgfsetbuttcap
{
\definecolor{dialinecolor}{rgb}{0.000000, 0.000000, 0.000000}
\pgfsetfillcolor{dialinecolor}
\pgfsetarrowsend{stealth}
\definecolor{dialinecolor}{rgb}{0.000000, 0.000000, 0.000000}
\pgfsetstrokecolor{dialinecolor}
\draw (5.000000\du,20.000000\du)--(5.000000\du,10.000000\du);
}
\pgfsetlinewidth{0.0500000\du}
\pgfsetdash{{\pgflinewidth}{0.200000\du}}{0cm}
\pgfsetdash{{\pgflinewidth}{0.200000\du}}{0cm}
\pgfsetbuttcap
{
\definecolor{dialinecolor}{rgb}{0.000000, 0.000000, 0.000000}
\pgfsetfillcolor{dialinecolor}
\definecolor{dialinecolor}{rgb}{0.000000, 0.000000, 0.000000}
\pgfsetstrokecolor{dialinecolor}
\draw (8.400000\du,19.800000\du)--(8.400000\du,9.800000\du);
}
\definecolor{dialinecolor}{rgb}{0.000000, 0.000000, 0.000000}
\pgfsetstrokecolor{dialinecolor}
\node[anchor=west] at (8.000000\du,21.000000\du){$\pi_1(w)$};
\definecolor{dialinecolor}{rgb}{0.000000, 0.000000, 0.000000}
\pgfsetstrokecolor{dialinecolor}
\node[anchor=west] at (12.600000\du,13.600000\du){$x=\{1\}$};
\definecolor{dialinecolor}{rgb}{0.000000, 0.000000, 0.000000}
\pgfsetstrokecolor{dialinecolor}
\node[anchor=west] at (6.000000\du,13.000000\du){};
\definecolor{dialinecolor}{rgb}{0.000000, 0.000000, 0.000000}
\pgfsetstrokecolor{dialinecolor}
\node[anchor=west] at (5.100000\du,14.000000\du){$x=\{2\}$};
\definecolor{dialinecolor}{rgb}{0.000000, 0.000000, 0.000000}
\pgfsetstrokecolor{dialinecolor}
\node[anchor=west] at (18.000000\du,20.500000\du){$v_1$};
\definecolor{dialinecolor}{rgb}{0.000000, 0.000000, 0.000000}
\pgfsetstrokecolor{dialinecolor}
\node[anchor=west] at (4.000000\du,9.500000\du){$v_2$};
\pgfsetlinewidth{0.0500000\du}
\pgfsetdash{{\pgflinewidth}{0.200000\du}}{0cm}
\pgfsetdash{{\pgflinewidth}{0.200000\du}}{0cm}
\pgfsetbuttcap
{
\definecolor{dialinecolor}{rgb}{0.000000, 0.000000, 0.000000}
\pgfsetfillcolor{dialinecolor}
\definecolor{dialinecolor}{rgb}{0.000000, 0.000000, 0.000000}
\pgfsetstrokecolor{dialinecolor}
\draw (5.000000\du,14.600000\du)--(8.300000\du,14.600000\du);
}
\definecolor{dialinecolor}{rgb}{0.000000, 0.000000, 0.000000}
\pgfsetstrokecolor{dialinecolor}
\node[anchor=west] at (5.100000\du,18.000000\du){$x=\{\emptyset\}$};
\definecolor{dialinecolor}{rgb}{0.000000, 0.000000, 0.000000}
\pgfsetstrokecolor{dialinecolor}
\node[anchor=west] at (5.100000\du,13.000000\du){$p=\pi_2(w)$};
\definecolor{dialinecolor}{rgb}{0.000000, 0.000000, 0.000000}
\pgfsetstrokecolor{dialinecolor}
\node[anchor=west] at (3.200000\du,14.600000\du){$\pi_2(w)$};
\pgfsetlinewidth{0.0500000\du}
\pgfsetdash{{\pgflinewidth}{0.200000\du}}{0cm}
\pgfsetdash{{\pgflinewidth}{0.200000\du}}{0cm}
\pgfsetmiterjoin
\pgfsetbuttcap
\definecolor{dialinecolor}{rgb}{1.000000, 1.000000, 1.000000}
\pgfsetfillcolor{dialinecolor}
\pgfpathmoveto{\pgfpoint{17.500000\du}{14.500000\du}}
\pgfpathcurveto{\pgfpoint{18.500000\du}{14.500000\du}}{\pgfpoint{17.000000\du}{20.000000\du}}{\pgfpoint{14.000000\du}{17.000000\du}}
\pgfpathcurveto{\pgfpoint{11.000000\du}{14.000000\du}}{\pgfpoint{16.500000\du}{14.500000\du}}{\pgfpoint{17.500000\du}{14.500000\du}}
\pgfusepath{fill}
\definecolor{dialinecolor}{rgb}{0.000000, 0.000000, 0.000000}
\pgfsetstrokecolor{dialinecolor}
\pgfpathmoveto{\pgfpoint{17.500000\du}{14.500000\du}}
\pgfpathcurveto{\pgfpoint{18.500000\du}{14.500000\du}}{\pgfpoint{17.000000\du}{20.000000\du}}{\pgfpoint{14.000000\du}{17.000000\du}}
\pgfpathcurveto{\pgfpoint{11.000000\du}{14.000000\du}}{\pgfpoint{16.500000\du}{14.500000\du}}{\pgfpoint{17.500000\du}{14.500000\du}}
\pgfusepath{stroke}
\definecolor{dialinecolor}{rgb}{0.000000, 0.000000, 0.000000}
\pgfsetstrokecolor{dialinecolor}
\node[anchor=west] at (12.600000\du,12.600000\du){$p=\pi_1(w)$};
\definecolor{dialinecolor}{rgb}{0.000000, 0.000000, 0.000000}
\pgfsetstrokecolor{dialinecolor}
\node[anchor=west] at (14.000000\du,15.300000\du){$x=\{1,2\}$};
\pgfsetlinewidth{0.0500000\du}
\pgfsetdash{{\pgflinewidth}{0.200000\du}}{0cm}
\pgfsetdash{{\pgflinewidth}{0.200000\du}}{0cm}
\pgfsetbuttcap
{
\definecolor{dialinecolor}{rgb}{0.000000, 0.000000, 0.000000}
\pgfsetfillcolor{dialinecolor}
\definecolor{dialinecolor}{rgb}{0.000000, 0.000000, 0.000000}
\pgfsetstrokecolor{dialinecolor}
\draw (8.400000\du,14.600000\du)--(14.000000\du,10.000000\du);
}
\pgfsetlinewidth{0.0500000\du}
\pgfsetdash{{\pgflinewidth}{0.200000\du}}{0cm}
\pgfsetdash{{\pgflinewidth}{0.200000\du}}{0cm}
\pgfsetbuttcap
{
\definecolor{dialinecolor}{rgb}{0.000000, 0.000000, 0.000000}
\pgfsetfillcolor{dialinecolor}
\definecolor{dialinecolor}{rgb}{0.000000, 0.000000, 0.000000}
\pgfsetstrokecolor{dialinecolor}
\draw (8.400000\du,14.600000\du)--(10.600000\du,14.600000\du);
}
\definecolor{dialinecolor}{rgb}{0.000000, 0.000000, 0.000000}
\pgfsetstrokecolor{dialinecolor}
\node[anchor=west] at (9.410850\du,14.173000\du){$45^o$};
\definecolor{dialinecolor}{rgb}{0.000000, 0.000000, 0.000000}
\pgfsetstrokecolor{dialinecolor}
\node[anchor=west] at (8.800000\du,11.000000\du){$x=\{1,2\}$};
\definecolor{dialinecolor}{rgb}{0.000000, 0.000000, 0.000000}
\pgfsetstrokecolor{dialinecolor}
\node[anchor=west] at (8.800000\du,10.200000\du){$p=\pi_2(w)$};
\definecolor{dialinecolor}{rgb}{0.000000, 0.000000, 0.000000}
\pgfsetstrokecolor{dialinecolor}
\node[anchor=west] at (11.500000\du,21.000000\du){$\pi_2(w)$};
\pgfsetlinewidth{0.100000\du}
\pgfsetdash{}{0pt}
\pgfsetdash{}{0pt}
\pgfsetbuttcap
{
\definecolor{dialinecolor}{rgb}{0.000000, 0.000000, 0.000000}
\pgfsetfillcolor{dialinecolor}
\definecolor{dialinecolor}{rgb}{0.000000, 0.000000, 0.000000}
\pgfsetstrokecolor{dialinecolor}
\draw (12.500000\du,20.000000\du)--(12.500000\du,19.500000\du);
}
\definecolor{dialinecolor}{rgb}{0.000000, 0.000000, 0.000000}
\pgfsetfillcolor{dialinecolor}
\pgfpathellipse{\pgfpoint{10.500000\du}{20.000000\du}}{\pgfpoint{0.125000\du}{0\du}}{\pgfpoint{0\du}{0.125000\du}}
\pgfusepath{fill}
\pgfsetlinewidth{0.100000\du}
\pgfsetdash{}{0pt}
\pgfsetdash{}{0pt}
\definecolor{dialinecolor}{rgb}{0.000000, 0.000000, 0.000000}
\pgfsetstrokecolor{dialinecolor}
\pgfpathellipse{\pgfpoint{10.500000\du}{20.000000\du}}{\pgfpoint{0.125000\du}{0\du}}{\pgfpoint{0\du}{0.125000\du}}
\pgfusepath{stroke}
\definecolor{dialinecolor}{rgb}{0.000000, 0.000000, 0.000000}
\pgfsetfillcolor{dialinecolor}
\pgfpathellipse{\pgfpoint{15.500000\du}{17.000000\du}}{\pgfpoint{0.125000\du}{0\du}}{\pgfpoint{0\du}{0.125000\du}}
\pgfusepath{fill}
\pgfsetlinewidth{0.100000\du}
\pgfsetdash{}{0pt}
\pgfsetdash{}{0pt}
\definecolor{dialinecolor}{rgb}{0.000000, 0.000000, 0.000000}
\pgfsetstrokecolor{dialinecolor}
\pgfpathellipse{\pgfpoint{15.500000\du}{17.000000\du}}{\pgfpoint{0.125000\du}{0\du}}{\pgfpoint{0\du}{0.125000\du}}
\pgfusepath{stroke}
\definecolor{dialinecolor}{rgb}{0.000000, 0.000000, 0.000000}
\pgfsetstrokecolor{dialinecolor}
\node[anchor=west] at (14.000000\du,16.200000\du){$p=\pi_2(w)$};
\pgfsetlinewidth{0.100000\du}
\pgfsetdash{}{0pt}
\pgfsetdash{}{0pt}
\pgfsetbuttcap
{
\definecolor{dialinecolor}{rgb}{0.000000, 0.000000, 0.000000}
\pgfsetfillcolor{dialinecolor}
\pgfsetarrowsend{latex}
\definecolor{dialinecolor}{rgb}{0.000000, 0.000000, 0.000000}
\pgfsetstrokecolor{dialinecolor}
\pgfpathmoveto{\pgfpoint{15.325924\du}{16.982182\du}}
\pgfpatharc{275}{204}{4.750000\du and 4.750000\du}
\pgfusepath{stroke}
}
\definecolor{dialinecolor}{rgb}{0.000000, 0.000000, 0.000000}
\pgfsetstrokecolor{dialinecolor}
\node[anchor=west] at (9.700000\du,19.300000\du){B};
\definecolor{dialinecolor}{rgb}{0.000000, 0.000000, 0.000000}
\pgfsetstrokecolor{dialinecolor}
\node[anchor=west] at (15.600000\du,17.300000\du){A};
\end{tikzpicture}
\end{subfigure}
\caption{If the threshold for item $2$ is strictly larger than that of item $1$, then every bid in $B_R^V$ is allocated only item $1$ and charged $\pi_1(w)$ (left). Otherwise the right figure portrays a profitable deviation.}\label{fig:deviations4}
\end{figure}
	To prove that for all $v \in B_R^V$ we have $\allocv_2(v,w) = 0$:
	\begin{enumerate}
		\item We first show that for \emph{some} point $v \in B_R^V$ we have $\allocv_2(v,w) = 0$. Suppose not. 
		Note that by Lemma~\ref{lem:BothAbove45Deg} there is a non-empty set $T_R^V$ where bidder $V$ gets both items and pays $\pi_2$. 
		DSIC therefore implies that at any point where bidder $V$ gets both items he pays $\pi_2$, including the region $B_R^V$ under consideration. 
		Consider a point $(\frac{\pi_1+\pi_2}{2}, 0) \in B_R^V$ (point $B$ in Figure \ref{fig:deviations4}). 
		Such a bidder, by our assumption gets both items allocated and has to therefore pay $\pi_2$ which is strictly larger than his value, violating IR. 
		Therefore there is at least one point in $B_R^V$ where $\allocv_2(v,w) = 0$. 
		\item We now show that for all points in $B_R^V$ we have have  $\allocv_2(v,w) = 0$. To see this, consider a point $v \in B_R^V$ where $\allocv_2(v,w) = 1$ (point $A$ in Figure \ref{fig:deviations4}). Such a point, by the discussion in this paragraph, pays $\pi_2$. However, switching to a point where item $1$ alone is allocated (there is at least one such point by the discussion above) gives the bidder a utility of $v_1 - \pi_1$, which is strictly larger than his current utility of $\max\{v_1,v_2\} - \pi_2$.
	\end{enumerate}
\end{proof}

At this point, it is convenient to summarize our findings on how the threshold functions $\pi_2$ and $\pi_1$ almost entirely determine the allocation for bidder $V$ (and similarly $\phi_1$ and $\phi_2$ for bidder $W$). Except for the region $B_R^V$, the threshold functions $\pi_2$ and $\pi_1$ completely determine the allocation function for bidder $V$. In the region $B_R^V$, bidder $V$ could receive just item $1$ or both items. Even in $B_R^V$, when $\pi_2 > \pi_1$, bidder $V$ can just receive item $1$. 

\begin{corollary}
\label{cor:AllocationRegions}
Fix any $w \in \R_+^2: w_1 > 0$. 
For every deterministic DSIC+IR mechanism that gets a finite approximation factor, the allocation in the regions $B_L^V, B_R^V, T_L^V, T_R^V$ should satisfy the following:
\begin{enumerate}
\item $\forall v \in B_L^V,\ \allocv_1(v,w) = \allocv_2(v,w) = 0$. 
\item $\forall v \in B_R^V,\ \allocv_1(v,w) = 1, \allocv_2(v,w) \in \{0,1\}$. Further, if $\pi_2 > \pi_1$, we have $\allocv_2(v,w) = 0$. 
\item $\forall v \in T_L^V, \ \allocv_1(v,w) = 0, \allocv_2(v,w) = 1$. 
\item $\forall v \in T_R^V,\ \allocv_2(v,w) = \allocv_2(v,w) = 1$. 
\item Analogous characterization holds for $B_L^W, B_R^W, T_L^W, T_R^W$. 
\end{enumerate}
\begin{figure}[h]
\centering
\begin{subfigure}[b]{0.45\textwidth}
\centering
\ifx\du\undefined
  \newlength{\du}
\fi
\setlength{\du}{15\unitlength}
\begin{tikzpicture}
\pgftransformxscale{1.000000}
\pgftransformyscale{-1.000000}
\definecolor{dialinecolor}{rgb}{0.000000, 0.000000, 0.000000}
\pgfsetstrokecolor{dialinecolor}
\definecolor{dialinecolor}{rgb}{1.000000, 1.000000, 1.000000}
\pgfsetfillcolor{dialinecolor}
\pgfsetlinewidth{0.100000\du}
\pgfsetdash{}{0pt}
\pgfsetdash{}{0pt}
\pgfsetbuttcap
{
\definecolor{dialinecolor}{rgb}{0.000000, 0.000000, 0.000000}
\pgfsetfillcolor{dialinecolor}
\pgfsetarrowsend{stealth}
\definecolor{dialinecolor}{rgb}{0.000000, 0.000000, 0.000000}
\pgfsetstrokecolor{dialinecolor}
\draw (5.000000\du,20.000000\du)--(18.000000\du,20.000000\du);
}
\pgfsetlinewidth{0.100000\du}
\pgfsetdash{}{0pt}
\pgfsetdash{}{0pt}
\pgfsetbuttcap
{
\definecolor{dialinecolor}{rgb}{0.000000, 0.000000, 0.000000}
\pgfsetfillcolor{dialinecolor}
\pgfsetarrowsend{stealth}
\definecolor{dialinecolor}{rgb}{0.000000, 0.000000, 0.000000}
\pgfsetstrokecolor{dialinecolor}
\draw (5.000000\du,20.000000\du)--(5.000000\du,10.000000\du);
}
\pgfsetlinewidth{0.0500000\du}
\pgfsetdash{{\pgflinewidth}{0.200000\du}}{0cm}
\pgfsetdash{{\pgflinewidth}{0.200000\du}}{0cm}
\pgfsetbuttcap
{
\definecolor{dialinecolor}{rgb}{0.000000, 0.000000, 0.000000}
\pgfsetfillcolor{dialinecolor}
\definecolor{dialinecolor}{rgb}{0.000000, 0.000000, 0.000000}
\pgfsetstrokecolor{dialinecolor}
\draw (8.400000\du,19.800000\du)--(8.400000\du,9.800000\du);
}
\definecolor{dialinecolor}{rgb}{0.000000, 0.000000, 0.000000}
\pgfsetstrokecolor{dialinecolor}
\node[anchor=west] at (7.700000\du,20.800000\du){$\pi_1(w)$};
\definecolor{dialinecolor}{rgb}{0.000000, 0.000000, 0.000000}
\pgfsetstrokecolor{dialinecolor}
\node[anchor=west] at (12.381105\du,16.261912\du){$x=\{1\}$};
\definecolor{dialinecolor}{rgb}{0.000000, 0.000000, 0.000000}
\pgfsetstrokecolor{dialinecolor}
\node[anchor=west] at (6.000000\du,13.000000\du){};
\definecolor{dialinecolor}{rgb}{0.000000, 0.000000, 0.000000}
\pgfsetstrokecolor{dialinecolor}
\node[anchor=west] at (5.100000\du,14.000000\du){$x=\{2\}$};
\definecolor{dialinecolor}{rgb}{0.000000, 0.000000, 0.000000}
\pgfsetstrokecolor{dialinecolor}
\node[anchor=west] at (18.000000\du,20.500000\du){$v_1$};
\definecolor{dialinecolor}{rgb}{0.000000, 0.000000, 0.000000}
\pgfsetstrokecolor{dialinecolor}
\node[anchor=west] at (4.000000\du,9.500000\du){$v_2$};
\pgfsetlinewidth{0.0500000\du}
\pgfsetdash{{\pgflinewidth}{0.200000\du}}{0cm}
\pgfsetdash{{\pgflinewidth}{0.200000\du}}{0cm}
\pgfsetbuttcap
{
\definecolor{dialinecolor}{rgb}{0.000000, 0.000000, 0.000000}
\pgfsetfillcolor{dialinecolor}
\definecolor{dialinecolor}{rgb}{0.000000, 0.000000, 0.000000}
\pgfsetstrokecolor{dialinecolor}
\draw (5.000000\du,14.600000\du)--(8.300000\du,14.600000\du);
}
\definecolor{dialinecolor}{rgb}{0.000000, 0.000000, 0.000000}
\pgfsetstrokecolor{dialinecolor}
\node[anchor=west] at (5.100000\du,18.000000\du){$x=\{\emptyset\}$};
\definecolor{dialinecolor}{rgb}{0.000000, 0.000000, 0.000000}
\pgfsetstrokecolor{dialinecolor}
\node[anchor=west] at (5.100000\du,13.000000\du){$p=\pi_2(w)$};
\definecolor{dialinecolor}{rgb}{0.000000, 0.000000, 0.000000}
\pgfsetstrokecolor{dialinecolor}
\node[anchor=west] at (2.800000\du,14.600000\du){$\pi_2(w)$};
\definecolor{dialinecolor}{rgb}{0.000000, 0.000000, 0.000000}
\pgfsetstrokecolor{dialinecolor}
\node[anchor=west] at (12.381105\du,15.261912\du){$p=\pi_1(w)$};
\pgfsetlinewidth{0.0500000\du}
\pgfsetdash{{\pgflinewidth}{0.200000\du}}{0cm}
\pgfsetdash{{\pgflinewidth}{0.200000\du}}{0cm}
\pgfsetbuttcap
{
\definecolor{dialinecolor}{rgb}{0.000000, 0.000000, 0.000000}
\pgfsetfillcolor{dialinecolor}
\definecolor{dialinecolor}{rgb}{0.000000, 0.000000, 0.000000}
\pgfsetstrokecolor{dialinecolor}
\draw (8.400000\du,14.600000\du)--(14.000000\du,10.000000\du);
}
\pgfsetlinewidth{0.0500000\du}
\pgfsetdash{{\pgflinewidth}{0.200000\du}}{0cm}
\pgfsetdash{{\pgflinewidth}{0.200000\du}}{0cm}
\pgfsetbuttcap
{
\definecolor{dialinecolor}{rgb}{0.000000, 0.000000, 0.000000}
\pgfsetfillcolor{dialinecolor}
\definecolor{dialinecolor}{rgb}{0.000000, 0.000000, 0.000000}
\pgfsetstrokecolor{dialinecolor}
\draw (8.400000\du,14.600000\du)--(10.600000\du,14.600000\du);
}
\definecolor{dialinecolor}{rgb}{0.000000, 0.000000, 0.000000}
\pgfsetstrokecolor{dialinecolor}
\node[anchor=west] at (9.410850\du,14.073000\du){$45^o$};
\definecolor{dialinecolor}{rgb}{0.000000, 0.000000, 0.000000}
\pgfsetstrokecolor{dialinecolor}
\node[anchor=west] at (8.800000\du,11.000000\du){$x=\{1,2\}$};
\definecolor{dialinecolor}{rgb}{0.000000, 0.000000, 0.000000}
\pgfsetstrokecolor{dialinecolor}
\node[anchor=west] at (8.800000\du,10.200000\du){$p=\pi_2(w)$};
\end{tikzpicture}
\end{subfigure}
\begin{subfigure}[b]{0.45\textwidth}
\centering
\ifx\du\undefined
  \newlength{\du}
\fi
\setlength{\du}{15\unitlength}
\begin{tikzpicture}
\pgftransformxscale{1.000000}
\pgftransformyscale{-1.000000}
\definecolor{dialinecolor}{rgb}{0.000000, 0.000000, 0.000000}
\pgfsetstrokecolor{dialinecolor}
\definecolor{dialinecolor}{rgb}{1.000000, 1.000000, 1.000000}
\pgfsetfillcolor{dialinecolor}
\pgfsetlinewidth{0.100000\du}
\pgfsetdash{}{0pt}
\pgfsetdash{}{0pt}
\pgfsetbuttcap
{
\definecolor{dialinecolor}{rgb}{0.000000, 0.000000, 0.000000}
\pgfsetfillcolor{dialinecolor}
\pgfsetarrowsend{stealth}
\definecolor{dialinecolor}{rgb}{0.000000, 0.000000, 0.000000}
\pgfsetstrokecolor{dialinecolor}
\draw (5.000000\du,20.000000\du)--(18.000000\du,20.000000\du);
}
\pgfsetlinewidth{0.100000\du}
\pgfsetdash{}{0pt}
\pgfsetdash{}{0pt}
\pgfsetbuttcap
{
\definecolor{dialinecolor}{rgb}{0.000000, 0.000000, 0.000000}
\pgfsetfillcolor{dialinecolor}
\pgfsetarrowsend{stealth}
\definecolor{dialinecolor}{rgb}{0.000000, 0.000000, 0.000000}
\pgfsetstrokecolor{dialinecolor}
\draw (5.000000\du,20.000000\du)--(5.000000\du,10.000000\du);
}
\pgfsetlinewidth{0.0500000\du}
\pgfsetdash{{\pgflinewidth}{0.200000\du}}{0cm}
\pgfsetdash{{\pgflinewidth}{0.200000\du}}{0cm}
\pgfsetbuttcap
{
\definecolor{dialinecolor}{rgb}{0.000000, 0.000000, 0.000000}
\pgfsetfillcolor{dialinecolor}
\definecolor{dialinecolor}{rgb}{0.000000, 0.000000, 0.000000}
\pgfsetstrokecolor{dialinecolor}
\draw (8.500000\du,20.000000\du)--(8.500000\du,10.000000\du);
}
\definecolor{dialinecolor}{rgb}{0.000000, 0.000000, 0.000000}
\pgfsetstrokecolor{dialinecolor}
\node[anchor=west] at (7.700000\du,21.000000\du){$\pi_1(w)=\pi_2(w)$};
\definecolor{dialinecolor}{rgb}{0.000000, 0.000000, 0.000000}
\pgfsetstrokecolor{dialinecolor}
\node[anchor=west] at (12.600000\du,14.000000\du){$x=\{1\}$};
\definecolor{dialinecolor}{rgb}{0.000000, 0.000000, 0.000000}
\pgfsetstrokecolor{dialinecolor}
\node[anchor=west] at (6.000000\du,13.000000\du){};
\definecolor{dialinecolor}{rgb}{0.000000, 0.000000, 0.000000}
\pgfsetstrokecolor{dialinecolor}
\node[anchor=west] at (5.100000\du,14.000000\du){$x=$\{2\}};
\definecolor{dialinecolor}{rgb}{0.000000, 0.000000, 0.000000}
\pgfsetstrokecolor{dialinecolor}
\node[anchor=west] at (18.000000\du,20.500000\du){$v_1$};
\definecolor{dialinecolor}{rgb}{0.000000, 0.000000, 0.000000}
\pgfsetstrokecolor{dialinecolor}
\node[anchor=west] at (4.000000\du,9.500000\du){$v_2$};
\pgfsetlinewidth{0.0500000\du}
\pgfsetdash{{\pgflinewidth}{0.200000\du}}{0cm}
\pgfsetdash{{\pgflinewidth}{0.200000\du}}{0cm}
\pgfsetbuttcap
{
\definecolor{dialinecolor}{rgb}{0.000000, 0.000000, 0.000000}
\pgfsetfillcolor{dialinecolor}
\definecolor{dialinecolor}{rgb}{0.000000, 0.000000, 0.000000}
\pgfsetstrokecolor{dialinecolor}
\draw (5.100000\du,16.500000\du)--(8.400000\du,16.500000\du);
}
\definecolor{dialinecolor}{rgb}{0.000000, 0.000000, 0.000000}
\pgfsetstrokecolor{dialinecolor}
\node[anchor=west] at (5.100000\du,18.000000\du){$x=\{\emptyset\}$};
\definecolor{dialinecolor}{rgb}{0.000000, 0.000000, 0.000000}
\pgfsetstrokecolor{dialinecolor}
\node[anchor=west] at (5.100000\du,13.000000\du){$p=\pi_2(w)$};
\definecolor{dialinecolor}{rgb}{0.000000, 0.000000, 0.000000}
\pgfsetstrokecolor{dialinecolor}
\node[anchor=west] at (2.800000\du,16.600000\du){$\pi_2(w)$};
\pgfsetlinewidth{0.0500000\du}
\pgfsetdash{{\pgflinewidth}{0.200000\du}}{0cm}
\pgfsetdash{{\pgflinewidth}{0.200000\du}}{0cm}
\pgfsetmiterjoin
\pgfsetbuttcap
\definecolor{dialinecolor}{rgb}{1.000000, 1.000000, 1.000000}
\pgfsetfillcolor{dialinecolor}
\pgfpathmoveto{\pgfpoint{19.500000\du}{14.000000\du}}
\pgfpathcurveto{\pgfpoint{20.500000\du}{14.000000\du}}{\pgfpoint{19.000000\du}{19.500000\du}}{\pgfpoint{16.000000\du}{16.500000\du}}
\pgfpathcurveto{\pgfpoint{13.000000\du}{13.500000\du}}{\pgfpoint{18.500000\du}{14.000000\du}}{\pgfpoint{19.500000\du}{14.000000\du}}
\pgfusepath{fill}
\definecolor{dialinecolor}{rgb}{0.000000, 0.000000, 0.000000}
\pgfsetstrokecolor{dialinecolor}
\pgfpathmoveto{\pgfpoint{19.500000\du}{14.000000\du}}
\pgfpathcurveto{\pgfpoint{20.500000\du}{14.000000\du}}{\pgfpoint{19.000000\du}{19.500000\du}}{\pgfpoint{16.000000\du}{16.500000\du}}
\pgfpathcurveto{\pgfpoint{13.000000\du}{13.500000\du}}{\pgfpoint{18.500000\du}{14.000000\du}}{\pgfpoint{19.500000\du}{14.000000\du}}
\pgfusepath{stroke}
\definecolor{dialinecolor}{rgb}{0.000000, 0.000000, 0.000000}
\pgfsetstrokecolor{dialinecolor}
\node[anchor=west] at (12.600000\du,13.000000\du){$p=\pi_1(w)$};
\definecolor{dialinecolor}{rgb}{0.000000, 0.000000, 0.000000}
\pgfsetstrokecolor{dialinecolor}
\node[anchor=west] at (16.000000\du,15.000000\du){$x=\{1,2\}$};
\pgfsetlinewidth{0.0500000\du}
\pgfsetdash{{\pgflinewidth}{0.200000\du}}{0cm}
\pgfsetdash{{\pgflinewidth}{0.200000\du}}{0cm}
\pgfsetbuttcap
{
\definecolor{dialinecolor}{rgb}{0.000000, 0.000000, 0.000000}
\pgfsetfillcolor{dialinecolor}
\definecolor{dialinecolor}{rgb}{0.000000, 0.000000, 0.000000}
\pgfsetstrokecolor{dialinecolor}
\draw (8.500000\du,16.500000\du)--(15.000000\du,10.000000\du);
}
\pgfsetlinewidth{0.0500000\du}
\pgfsetdash{{\pgflinewidth}{0.200000\du}}{0cm}
\pgfsetdash{{\pgflinewidth}{0.200000\du}}{0cm}
\pgfsetbuttcap
{
\definecolor{dialinecolor}{rgb}{0.000000, 0.000000, 0.000000}
\pgfsetfillcolor{dialinecolor}
\definecolor{dialinecolor}{rgb}{0.000000, 0.000000, 0.000000}
\pgfsetstrokecolor{dialinecolor}
\draw (8.500000\du,16.500000\du)--(10.700000\du,16.500000\du);
}
\definecolor{dialinecolor}{rgb}{0.000000, 0.000000, 0.000000}
\pgfsetstrokecolor{dialinecolor}
\node[anchor=west] at (9.410850\du,15.973000\du){$45^o$};
\definecolor{dialinecolor}{rgb}{0.000000, 0.000000, 0.000000}
\pgfsetstrokecolor{dialinecolor}
\node[anchor=west] at (8.800000\du,11.000000\du){$x=\{1,2\}$};
\definecolor{dialinecolor}{rgb}{0.000000, 0.000000, 0.000000}
\pgfsetstrokecolor{dialinecolor}
\node[anchor=west] at (8.800000\du,10.200000\du){$p=\pi_2(w)$};
\definecolor{dialinecolor}{rgb}{0.000000, 0.000000, 0.000000}
\pgfsetstrokecolor{dialinecolor}
\node[anchor=west] at (16.000000\du,16.000000\du){$p=\pi_2(w)$};
\pgfsetlinewidth{0.0500000\du}
\pgfsetdash{{\pgflinewidth}{0.200000\du}}{0cm}
\pgfsetdash{{\pgflinewidth}{0.200000\du}}{0cm}
\pgfsetmiterjoin
\pgfsetbuttcap
\definecolor{dialinecolor}{rgb}{1.000000, 1.000000, 1.000000}
\pgfsetfillcolor{dialinecolor}
\pgfpathmoveto{\pgfpoint{14.000000\du}{16.000000\du}}
\pgfpathcurveto{\pgfpoint{17.000000\du}{16.000000\du}}{\pgfpoint{15.000000\du}{19.500000\du}}{\pgfpoint{12.000000\du}{18.500000\du}}
\pgfpathcurveto{\pgfpoint{9.000000\du}{17.500000\du}}{\pgfpoint{11.000000\du}{16.000000\du}}{\pgfpoint{14.000000\du}{16.000000\du}}
\pgfusepath{fill}
\definecolor{dialinecolor}{rgb}{0.000000, 0.000000, 0.000000}
\pgfsetstrokecolor{dialinecolor}
\pgfpathmoveto{\pgfpoint{14.000000\du}{16.000000\du}}
\pgfpathcurveto{\pgfpoint{17.000000\du}{16.000000\du}}{\pgfpoint{15.000000\du}{19.500000\du}}{\pgfpoint{12.000000\du}{18.500000\du}}
\pgfpathcurveto{\pgfpoint{9.000000\du}{17.500000\du}}{\pgfpoint{11.000000\du}{16.000000\du}}{\pgfpoint{14.000000\du}{16.000000\du}}
\pgfusepath{stroke}
\definecolor{dialinecolor}{rgb}{0.000000, 0.000000, 0.000000}
\pgfsetstrokecolor{dialinecolor}
\node[anchor=west] at (11.500000\du,17.000000\du){$x=\{1,2\}$};
\definecolor{dialinecolor}{rgb}{0.000000, 0.000000, 0.000000}
\pgfsetstrokecolor{dialinecolor}
\node[anchor=west] at (11.500000\du,18.000000\du){$p=\pi_2(w)$};
\end{tikzpicture}
\end{subfigure}
\caption{The allocation and payment functions have to look either like the figure on the left (if $\pi_2(w)>\pi_1(w)$) or like the figure on the right (if $\pi_2(w)=\pi_1(w)$).}\label{fig:char1}
\end{figure}
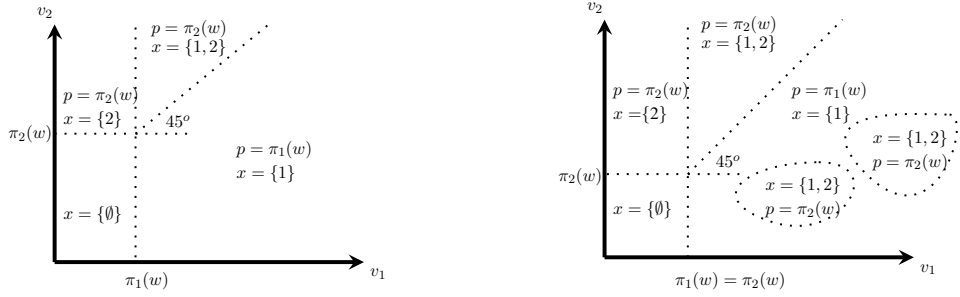
\end{corollary}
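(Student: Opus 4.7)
My plan is to prove this corollary by directly assembling the region-by-region characterization obtained in the preceding four lemmas; no new argument is required, only careful bookkeeping of which lemma governs which part of bidder $V$'s type space, so I would organize the proof as four short case analyses plus an appeal to symmetry for $W$.

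First I would verify that the four regions $B_L^V, B_R^V, T_L^V, T_R^V$ are well-defined and non-empty for any $w$ with $w_1 > 0$: Lemma \ref{lem:Pi1NotZero} gives $0 < \pi_1 < \infty$, Lemma \ref{lem:Pi2NotZero} gives $\pi_2 < \infty$, and Lemma \ref{lem:TallRectangle} gives $\pi_2 \geq \pi_1$. This legitimizes the case split on these four regions and also ensures that the statement ``if $\pi_2 > \pi_1$'' in part (2) is meaningful rather than vacuous in general.

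Next I would treat each region in turn. For $B_L^V = \{v : v_1 < \pi_1,\ v_2 < \pi_2\}$, Lemma \ref{lem:VerticalLine} immediately yields $\allocv_1(v,w) = 0$, and since $v_1 < \pi_1$ places us in the hypothesis of Lemma \ref{lem:HorizontalLine}, the condition $v_2 < \pi_2$ then gives $\allocv_2(v,w) = 0$. For $T_L^V = \{v : v_1 < \pi_1,\ v_2 > \pi_2\}$, the same two lemmas give $\allocv_1(v,w) = 0$ and $\allocv_2(v,w) = 1$. For $T_R^V$, I would simply quote Lemma \ref{lem:BothAbove45Deg} to obtain $\allocv_1(v,w) = \allocv_2(v,w) = 1$.

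The only region where extra care is needed is $B_R^V$. Here Lemma \ref{lem:VerticalLine} gives $\allocv_1(v,w) = 1$, while determinism forces $\allocv_2(v,w) \in \{0,1\}$; neither the single-player allocation characterization nor the argument for $T_R^V$ further restricts this value in general, which is why the statement leaves both possibilities open. Under the additional hypothesis $\pi_2 > \pi_1$, however, I would invoke Lemma \ref{lem:1Below45Deg} to conclude $\allocv_2(v,w) = 0$ throughout $B_R^V$. Finally, the symmetric statement for bidder $W$ in terms of $B_L^W, B_R^W, T_L^W, T_R^W$ follows by swapping the roles of the two players in each cited lemma, which is valid because every lemma in this section was proved symmetrically. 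There is no genuine obstacle in this corollary; its purpose is to record the full case analysis so that it can be invoked cleanly in Section \ref{sec:HApprox}, where the residual ambiguity in $B_R^V$ when $\pi_2 = \pi_1$ is exactly the feature that the subsequent impossibility argument must grapple with.
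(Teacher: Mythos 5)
Your proposal is correct and matches the paper's intent: the corollary is stated without a separate proof precisely because it is a direct bookkeeping summary of Lemmas~\ref{lem:VerticalLine}, \ref{lem:HorizontalLine}, \ref{lem:BothAbove45Deg}, and \ref{lem:1Below45Deg}, with well-definedness of the regions coming from Lemmas~\ref{lem:Pi1NotZero}, \ref{lem:Pi2NotZero}, and \ref{lem:TallRectangle}, exactly as you assemble it. The region-by-region case split, the appeal to determinism to get $\allocv_2 \in \{0,1\}$ on $B_R^V$, and the symmetry argument for $W$ are all in line with what the paper relies on.
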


\subsection{Arguing Simultaneously for Allocation Function of Both Players}
\label{sec:HApprox}
Now, we assume that the mechanism gets an $H$ approximation for some fixed number $H$. 
Using this we argue that in a particular region of the type space, the mechanism must allocate both items. Using this along with the necessary conditions we have so far, we eventually arrive at a contradiction, 
implying that deterministic DSIC + IR mechanisms cannot get any finite approximation. 
We restrict our attention to the following subset of $\R_+^2$, for some $N \geq H$ and $\epsilon < \frac{1}{H}$.
\[S_{N,\epsilon} := \{(x_1,x_2) \in \R_+^2: 0 < x_1 < \epsilon, x_2  = N\}.\]
\begin{lemma}\label{lem:item2_always_allocated}
	For every deterministic DSIC+IR mechanism that gets an $H$ approximation, $\forall N \geq H, \forall \epsilon < \frac{1}{H}, \forall w \in S_{N,\epsilon}$, we have that 
\begin{enumerate}
	\item $\pi_1(w) < H\epsilon.$ 
	\item If further, $v_1 < H \epsilon$, then Item 2 is always allocated.
\end{enumerate}
\end{lemma}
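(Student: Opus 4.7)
The plan is to prove both parts by contradiction, using the $H$-approximation guarantee against carefully chosen bidder profiles, and invoking only Lemma~\ref{lem:VerticalLine} (plus the trivial allocation lower bound from Lemma~\ref{lem:Pi1NotZero}).

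For part 1, I would suppose for contradiction that $\pi_1(w) \geq H\epsilon$. Since $w_1 < \epsilon$ by the definition of $S_{N,\epsilon}$, we have $Hw_1 < H\epsilon \leq \pi_1(w)$, so the open interval $(Hw_1,\pi_1(w))$ is nonempty. Pick any $v_1$ in it and consider the report $v=(v_1,0)$, together with the realization in which only item~$1$ arrives. By Lemma~\ref{lem:VerticalLine}, the condition $v_1 < \pi_1(w)$ forces $\allocv_1(v,w)=0$, and since $v_2=0$ bidder $V$ obtains no value regardless of what happens to item~$2$ (and item~$2$ does not arrive anyway). Thus the mechanism's welfare is at most $w_1$, whereas the optimal welfare in this realization is $\max(v_1,w_1)=v_1>Hw_1$. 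This yields an approximation ratio strictly greater than $H$, contradicting the $H$-approximation guarantee.

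For part 2, fix any $v_1 < H\epsilon$ and any $v_2 \geq 0$, and suppose for contradiction that when both items arrive, item~$2$ is not allocated to anyone. Then the mechanism's total welfare comes only from item~$1$ and is at most $\max(v_1,w_1) < H\epsilon$, using $v_1<H\epsilon$ and $w_1<\epsilon \leq H\epsilon$ (since $H\geq 1$). On the other hand, the hindsight-optimal allocation can assign item~$1$ to $V$ and item~$2$ to $W$, giving welfare at least $v_1+w_2 \geq w_2 = N \geq H$. Consequently the approximation ratio is at least $N/(H\epsilon) \geq H/(H\epsilon) = 1/\epsilon > H$, where the last inequality uses $\epsilon < 1/H$. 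This contradicts the $H$-approximation assumption.

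The argument is short and the only subtlety is bookkeeping strict inequalities: ensuring $(Hw_1,\pi_1(w))$ is genuinely nonempty (which requires the strict bound $w_1<\epsilon$ built into the definition of $S_{N,\epsilon}$), and that the combined slack $N \geq H$ and $\epsilon<1/H$ gives enough room to \emph{strictly} beat the approximation ratio. No deeper structural results from the preceding characterization (such as Lemmas~\ref{lem:HorizontalLine}--\ref{lem:1Below45Deg}) are needed, so I do not foresee any significant obstacle.
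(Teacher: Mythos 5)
Your proof is correct and follows essentially the same approach as the paper's: both parts are established by exhibiting bid profiles in which failing the claimed conclusion forces an approximation ratio strictly worse than $H$, using only Lemma~\ref{lem:VerticalLine} and the definition of $S_{N,\epsilon}$. If anything, your write-up is slightly more careful than the paper's one-line argument about the strict inequality $\pi_1(w) < H\epsilon$, since you explicitly pick a witness $v_1$ in the open interval $(Hw_1,\pi_1(w))$.
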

\begin{proof}
	For any mechanism that gets an $H$-approximation, if item $2$ doesn't arrive, it is necessary that the mechanism allocates item $1$ to bidder $V$ when $v_1\geq H\epsilon$ given that $w_1 < \epsilon$.
	
	 For the second point, we have that $w_1 < \epsilon$ and $v_1 < H\epsilon$, while  $w_2 \geq H$. Thus not allocating item $2$ to any bidder at all  results in an approximation factor of at least $H$ since we have $\epsilon < \frac{1}{H}$. 
\end{proof}
\begin{lemma}
\label{lem:Pi2MinusPi1IndependentOfW1}
For every deterministic DSIC+IR mechanism that gets an $H$ approximation,  
\begin{align}
\label{eq:Pi2MinusPi1IndependentOfW1}
\forall N \geq H, \forall \epsilon < \frac{1}{H}, \forall w \in S_{N,\epsilon},\text{ we have } \pi_2(w) - \pi_1(w) = c_{N}.
\end{align} 
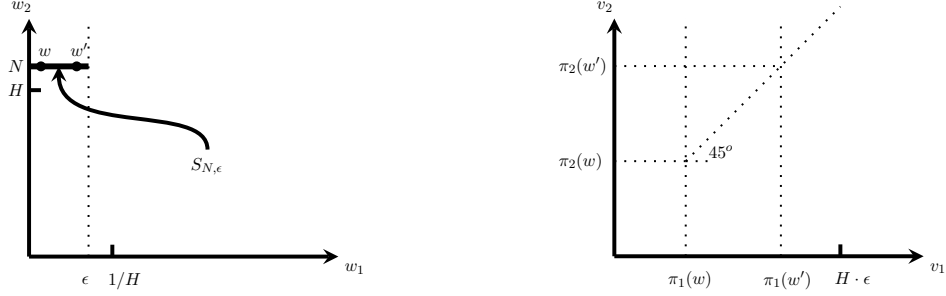
\begin{figure}[h]
\centering
\begin{subfigure}[b]{0.45\textwidth}
\centering
\ifx\du\undefined
  \newlength{\du}
\fi
\setlength{\du}{15\unitlength}
\begin{tikzpicture}
\pgftransformxscale{1.000000}
\pgftransformyscale{-1.000000}
\definecolor{dialinecolor}{rgb}{0.000000, 0.000000, 0.000000}
\pgfsetstrokecolor{dialinecolor}
\definecolor{dialinecolor}{rgb}{1.000000, 1.000000, 1.000000}
\pgfsetfillcolor{dialinecolor}
\pgfsetlinewidth{0.100000\du}
\pgfsetdash{}{0pt}
\pgfsetdash{}{0pt}
\pgfsetbuttcap
{
\definecolor{dialinecolor}{rgb}{0.000000, 0.000000, 0.000000}
\pgfsetfillcolor{dialinecolor}
\pgfsetarrowsend{stealth}
\definecolor{dialinecolor}{rgb}{0.000000, 0.000000, 0.000000}
\pgfsetstrokecolor{dialinecolor}
\draw (4.000000\du,17.000000\du)--(17.000000\du,17.000000\du);
}
\pgfsetlinewidth{0.100000\du}
\pgfsetdash{}{0pt}
\pgfsetdash{}{0pt}
\pgfsetbuttcap
{
\definecolor{dialinecolor}{rgb}{0.000000, 0.000000, 0.000000}
\pgfsetfillcolor{dialinecolor}
\pgfsetarrowsend{stealth}
\definecolor{dialinecolor}{rgb}{0.000000, 0.000000, 0.000000}
\pgfsetstrokecolor{dialinecolor}
\draw (4.000000\du,17.000000\du)--(4.000000\du,7.000000\du);
}
\pgfsetlinewidth{0.0500000\du}
\pgfsetdash{{\pgflinewidth}{0.200000\du}}{0cm}
\pgfsetdash{{\pgflinewidth}{0.200000\du}}{0cm}
\pgfsetbuttcap
{
\definecolor{dialinecolor}{rgb}{0.000000, 0.000000, 0.000000}
\pgfsetfillcolor{dialinecolor}
\definecolor{dialinecolor}{rgb}{0.000000, 0.000000, 0.000000}
\pgfsetstrokecolor{dialinecolor}
\draw (6.500000\du,17.000000\du)--(6.500000\du,7.000000\du);
}
\definecolor{dialinecolor}{rgb}{0.000000, 0.000000, 0.000000}
\pgfsetstrokecolor{dialinecolor}
\node[anchor=west] at (5.000000\du,10.000000\du){};
\definecolor{dialinecolor}{rgb}{0.000000, 0.000000, 0.000000}
\pgfsetstrokecolor{dialinecolor}
\node[anchor=west] at (17.000000\du,17.500000\du){$w_1$};
\definecolor{dialinecolor}{rgb}{0.000000, 0.000000, 0.000000}
\pgfsetstrokecolor{dialinecolor}
\node[anchor=west] at (3.000000\du,6.500000\du){$w_2$};
\pgfsetlinewidth{0.150000\du}
\pgfsetdash{}{0pt}
\pgfsetdash{}{0pt}
\pgfsetbuttcap
{
\definecolor{dialinecolor}{rgb}{0.000000, 0.000000, 0.000000}
\pgfsetfillcolor{dialinecolor}
\definecolor{dialinecolor}{rgb}{0.000000, 0.000000, 0.000000}
\pgfsetstrokecolor{dialinecolor}
\draw (4.000000\du,9.000000\du)--(6.500000\du,9.000000\du);
}
\definecolor{dialinecolor}{rgb}{0.000000, 0.000000, 0.000000}
\pgfsetstrokecolor{dialinecolor}
\node[anchor=west] at (5.900000\du,18.000000\du){$\epsilon$};
\definecolor{dialinecolor}{rgb}{0.000000, 0.000000, 0.000000}
\pgfsetstrokecolor{dialinecolor}
\node[anchor=west] at (2.800000\du,9.000000\du){$N$};
\definecolor{dialinecolor}{rgb}{0.000000, 0.000000, 0.000000}
\pgfsetstrokecolor{dialinecolor}
\node[anchor=west] at (7.000000\du,18.000000\du){$1/H$};
\definecolor{dialinecolor}{rgb}{0.000000, 0.000000, 0.000000}
\pgfsetstrokecolor{dialinecolor}
\node[anchor=west] at (2.8000000\du,10.000000\du){$H$};
\definecolor{dialinecolor}{rgb}{0.000000, 0.000000, 0.000000}
\pgfsetfillcolor{dialinecolor}
\pgfpathellipse{\pgfpoint{6.000000\du}{9.000000\du}}{\pgfpoint{0.125000\du}{0\du}}{\pgfpoint{0\du}{0.125000\du}}
\pgfusepath{fill}
\pgfsetlinewidth{0.100000\du}
\pgfsetdash{}{0pt}
\pgfsetdash{}{0pt}
\definecolor{dialinecolor}{rgb}{0.000000, 0.000000, 0.000000}
\pgfsetstrokecolor{dialinecolor}
\pgfpathellipse{\pgfpoint{6.000000\du}{9.000000\du}}{\pgfpoint{0.125000\du}{0\du}}{\pgfpoint{0\du}{0.125000\du}}
\pgfusepath{stroke}
\definecolor{dialinecolor}{rgb}{0.000000, 0.000000, 0.000000}
\pgfsetstrokecolor{dialinecolor}
\node[anchor=west] at (4.100000\du,8.400000\du){$w$};
\definecolor{dialinecolor}{rgb}{0.000000, 0.000000, 0.000000}
\pgfsetstrokecolor{dialinecolor}
\node[anchor=west] at (5.4500000\du,8.300000\du){$w'$};
\pgfsetlinewidth{0.100000\du}
\pgfsetdash{}{0pt}
\pgfsetdash{}{0pt}
\pgfsetbuttcap
{
\definecolor{dialinecolor}{rgb}{0.000000, 0.000000, 0.000000}
\pgfsetfillcolor{dialinecolor}
\definecolor{dialinecolor}{rgb}{0.000000, 0.000000, 0.000000}
\pgfsetstrokecolor{dialinecolor}
\draw (7.500000\du,17.000000\du)--(7.500000\du,16.500000\du);
}
\pgfsetlinewidth{0.100000\du}
\pgfsetdash{}{0pt}
\pgfsetdash{}{0pt}
\pgfsetmiterjoin
\pgfsetbuttcap
{
\definecolor{dialinecolor}{rgb}{0.000000, 0.000000, 0.000000}
\pgfsetfillcolor{dialinecolor}
\pgfsetarrowsend{stealth}
\definecolor{dialinecolor}{rgb}{0.000000, 0.000000, 0.000000}
\pgfsetstrokecolor{dialinecolor}
\pgfpathmoveto{\pgfpoint{11.500000\du}{12.500000\du}}
\pgfpathcurveto{\pgfpoint{11.500000\du}{10.500000\du}}{\pgfpoint{5.250000\du}{12.000000\du}}{\pgfpoint{5.250000\du}{9.000000\du}}
\pgfusepath{stroke}
}
\definecolor{dialinecolor}{rgb}{0.000000, 0.000000, 0.000000}
\pgfsetstrokecolor{dialinecolor}
\node[anchor=west] at (10.500000\du,13.200000\du){$S_{N,\epsilon}$};
\pgfsetlinewidth{0.100000\du}
\pgfsetdash{}{0pt}
\pgfsetdash{}{0pt}
\pgfsetbuttcap
{
\definecolor{dialinecolor}{rgb}{0.000000, 0.000000, 0.000000}
\pgfsetfillcolor{dialinecolor}
\definecolor{dialinecolor}{rgb}{0.000000, 0.000000, 0.000000}
\pgfsetstrokecolor{dialinecolor}
\draw (4.500000\du,10.000000\du)--(4.000000\du,10.000000\du);
}
\definecolor{dialinecolor}{rgb}{0.000000, 0.000000, 0.000000}
\pgfsetfillcolor{dialinecolor}
\pgfpathellipse{\pgfpoint{4.500000\du}{9.000000\du}}{\pgfpoint{0.125000\du}{0\du}}{\pgfpoint{0\du}{0.125000\du}}
\pgfusepath{fill}
\pgfsetlinewidth{0.100000\du}
\pgfsetdash{}{0pt}
\pgfsetdash{}{0pt}
\definecolor{dialinecolor}{rgb}{0.000000, 0.000000, 0.000000}
\pgfsetstrokecolor{dialinecolor}
\pgfpathellipse{\pgfpoint{4.500000\du}{9.000000\du}}{\pgfpoint{0.125000\du}{0\du}}{\pgfpoint{0\du}{0.125000\du}}
\pgfusepath{stroke}
\end{tikzpicture}
\end{subfigure}
\begin{subfigure}[b]{0.45\textwidth}
\centering
\ifx\du\undefined
  \newlength{\du}
\fi
\setlength{\du}{15\unitlength}
\begin{tikzpicture}
\pgftransformxscale{1.000000}
\pgftransformyscale{-1.000000}
\definecolor{dialinecolor}{rgb}{0.000000, 0.000000, 0.000000}
\pgfsetstrokecolor{dialinecolor}
\definecolor{dialinecolor}{rgb}{1.000000, 1.000000, 1.000000}
\pgfsetfillcolor{dialinecolor}
\pgfsetlinewidth{0.100000\du}
\pgfsetdash{}{0pt}
\pgfsetdash{}{0pt}
\pgfsetbuttcap
{
\definecolor{dialinecolor}{rgb}{0.000000, 0.000000, 0.000000}
\pgfsetfillcolor{dialinecolor}
\pgfsetarrowsend{stealth}
\definecolor{dialinecolor}{rgb}{0.000000, 0.000000, 0.000000}
\pgfsetstrokecolor{dialinecolor}
\draw (4.000000\du,17.000000\du)--(17.000000\du,17.000000\du);
}
\pgfsetlinewidth{0.100000\du}
\pgfsetdash{}{0pt}
\pgfsetdash{}{0pt}
\pgfsetbuttcap
{
\definecolor{dialinecolor}{rgb}{0.000000, 0.000000, 0.000000}
\pgfsetfillcolor{dialinecolor}
\pgfsetarrowsend{stealth}
\definecolor{dialinecolor}{rgb}{0.000000, 0.000000, 0.000000}
\pgfsetstrokecolor{dialinecolor}
\draw (4.000000\du,17.000000\du)--(4.000000\du,7.000000\du);
}
\pgfsetlinewidth{0.0500000\du}
\pgfsetdash{{\pgflinewidth}{0.200000\du}}{0cm}
\pgfsetdash{{\pgflinewidth}{0.200000\du}}{0cm}
\pgfsetbuttcap
{
\definecolor{dialinecolor}{rgb}{0.000000, 0.000000, 0.000000}
\pgfsetfillcolor{dialinecolor}
\definecolor{dialinecolor}{rgb}{0.000000, 0.000000, 0.000000}
\pgfsetstrokecolor{dialinecolor}
\draw (7.000000\du,17.000000\du)--(7.000000\du,7.000000\du);
}
\definecolor{dialinecolor}{rgb}{0.000000, 0.000000, 0.000000}
\pgfsetstrokecolor{dialinecolor}
\node[anchor=west] at (6.000000\du,18.000000\du){$\pi_1(w)$};
\definecolor{dialinecolor}{rgb}{0.000000, 0.000000, 0.000000}
\pgfsetstrokecolor{dialinecolor}
\node[anchor=west] at (5.000000\du,10.000000\du){};
\definecolor{dialinecolor}{rgb}{0.000000, 0.000000, 0.000000}
\pgfsetstrokecolor{dialinecolor}
\node[anchor=west] at (17.000000\du,17.500000\du){$v_1$};
\definecolor{dialinecolor}{rgb}{0.000000, 0.000000, 0.000000}
\pgfsetstrokecolor{dialinecolor}
\node[anchor=west] at (3.000000\du,6.500000\du){$v_2$};
\pgfsetlinewidth{0.0500000\du}
\pgfsetdash{{\pgflinewidth}{0.200000\du}}{0cm}
\pgfsetdash{{\pgflinewidth}{0.200000\du}}{0cm}
\pgfsetbuttcap
{
\definecolor{dialinecolor}{rgb}{0.000000, 0.000000, 0.000000}
\pgfsetfillcolor{dialinecolor}
\definecolor{dialinecolor}{rgb}{0.000000, 0.000000, 0.000000}
\pgfsetstrokecolor{dialinecolor}
\draw (4.000000\du,13.000000\du)--(7.000000\du,13.000000\du);
}
\definecolor{dialinecolor}{rgb}{0.000000, 0.000000, 0.000000}
\pgfsetstrokecolor{dialinecolor}
\definecolor{dialinecolor}{rgb}{0.000000, 0.000000, 0.000000}
\pgfsetstrokecolor{dialinecolor}
\node[anchor=west] at (1.400000\du,13.000000\du){$\pi_2(w)$};
\pgfsetlinewidth{0.0500000\du}
\pgfsetdash{{\pgflinewidth}{0.200000\du}}{0cm}
\pgfsetdash{{\pgflinewidth}{0.200000\du}}{0cm}
\pgfsetbuttcap
{
\definecolor{dialinecolor}{rgb}{0.000000, 0.000000, 0.000000}
\pgfsetfillcolor{dialinecolor}
\definecolor{dialinecolor}{rgb}{0.000000, 0.000000, 0.000000}
\pgfsetstrokecolor{dialinecolor}
\draw (7.000000\du,13.000000\du)--(13.500000\du,6.500000\du);
}
\pgfsetlinewidth{0.0500000\du}
\pgfsetdash{{\pgflinewidth}{0.200000\du}}{0cm}
\pgfsetdash{{\pgflinewidth}{0.200000\du}}{0cm}
\pgfsetbuttcap
{
\definecolor{dialinecolor}{rgb}{0.000000, 0.000000, 0.000000}
\pgfsetfillcolor{dialinecolor}
\definecolor{dialinecolor}{rgb}{0.000000, 0.000000, 0.000000}
\pgfsetstrokecolor{dialinecolor}
\draw (7.000000\du,13.000000\du)--(8.000000\du,13.000000\du);
}
\definecolor{dialinecolor}{rgb}{0.000000, 0.000000, 0.000000}
\pgfsetstrokecolor{dialinecolor}
\node[anchor=west] at (7.700000\du,12.600000\du){$45^o$};
\definecolor{dialinecolor}{rgb}{0.000000, 0.000000, 0.000000}
\pgfsetstrokecolor{dialinecolor}
\node[anchor=west] at (1.400000\du,9.000000\du){$\pi_2(w')$};
\definecolor{dialinecolor}{rgb}{0.000000, 0.000000, 0.000000}
\pgfsetstrokecolor{dialinecolor}
\node[anchor=west] at (10.000000\du,18.000000\du){$\pi_1(w')$};
\pgfsetlinewidth{0.0500000\du}
\pgfsetdash{{\pgflinewidth}{0.200000\du}}{0cm}
\pgfsetdash{{\pgflinewidth}{0.200000\du}}{0cm}
\pgfsetbuttcap
{
\definecolor{dialinecolor}{rgb}{0.000000, 0.000000, 0.000000}
\pgfsetfillcolor{dialinecolor}
\definecolor{dialinecolor}{rgb}{0.000000, 0.000000, 0.000000}
\pgfsetstrokecolor{dialinecolor}
\draw (4.000000\du,9.000000\du)--(11.000000\du,9.000000\du);
}
\pgfsetlinewidth{0.0500000\du}
\pgfsetdash{{\pgflinewidth}{0.200000\du}}{0cm}
\pgfsetdash{{\pgflinewidth}{0.200000\du}}{0cm}
\pgfsetbuttcap
{
\definecolor{dialinecolor}{rgb}{0.000000, 0.000000, 0.000000}
\pgfsetfillcolor{dialinecolor}
\definecolor{dialinecolor}{rgb}{0.000000, 0.000000, 0.000000}
\pgfsetstrokecolor{dialinecolor}
\draw (11.000000\du,17.000000\du)--(11.000000\du,7.000000\du);
}
\definecolor{dialinecolor}{rgb}{0.000000, 0.000000, 0.000000}
\pgfsetstrokecolor{dialinecolor}
\node[anchor=west] at (13.000000\du,17.900000\du){$H\cdot \epsilon$};
\pgfsetlinewidth{0.100000\du}
\pgfsetdash{}{0pt}
\pgfsetdash{}{0pt}
\pgfsetbuttcap
{
\definecolor{dialinecolor}{rgb}{0.000000, 0.000000, 0.000000}
\pgfsetfillcolor{dialinecolor}
\definecolor{dialinecolor}{rgb}{0.000000, 0.000000, 0.000000}
\pgfsetstrokecolor{dialinecolor}
\draw (13.500000\du,17.000000\du)--(13.500000\du,16.500000\du);
}
\end{tikzpicture}
\end{subfigure}
\caption{As the value of player $W$ moves in the set $S_{n,\epsilon}$ (depicted as the thick vertical line in the right figure), the thresholds of player $V$ for the two items, must be changing in such a way, that their point of intersection moves only along some fixed $45^o$ line that depends only on $N$.}\label{fig:const1}
\end{figure}
\end{lemma}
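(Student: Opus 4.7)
The plan is to argue by contradiction. Fix $w,w'\in S_{N,\epsilon}$ with $w_1\ne w_1'$, set $\Delta:=\pi_2(w)-\pi_1(w)$ and $\Delta':=\pi_2(w')-\pi_1(w')$ (both nonnegative by Lemma~\ref{lem:TallRectangle}), and suppose for contradiction that $\Delta\ne \Delta'$; by relabeling $w\leftrightarrow w'$ we may assume $\Delta>\Delta'$, so in particular $\pi_2(w)>\pi_1(w)$. The goal is to exhibit a single witness $v^*$ at which the mechanism's behavior is forced to be inconsistent with DSIC or with the $H$-approximation guarantee.

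Lemma~\ref{lem:item2_always_allocated} gives $\pi_1(w),\pi_1(w')<H\epsilon$, so the interval $\bigl(\max(\pi_1(w),\pi_1(w')),\ H\epsilon\bigr)$ is nonempty. I would pick $v^*=(v_1^*,v_2^*)$ with $v_1^*$ in that interval and $v_2^*=v_1^*+\xi$ for some $\xi\in(\Delta',\Delta)$ (nonempty by hypothesis). Then by construction $v^*\in B_R^V(w)\cap T_R^V(w')$, so Lemmas~\ref{lem:1Below45Deg} and~\ref{lem:BothAbove45Deg} give: at $(v^*,w)$ bidder $V$ wins only item 1 (paying $\pi_1(w)$), while at $(v^*,w')$ bidder $V$ wins both items (paying $\pi_2(w')$). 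Since $v_1^*<H\epsilon$ and $w,w'\in S_{N,\epsilon}$, Lemma~\ref{lem:item2_always_allocated} forces item~2 to be allocated at both profiles, so $W$ wins item~2 at $(v^*,w)$ and wins nothing at $(v^*,w')$.

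Now I would invoke the $W$-analog of the entire DSIC+IR characterization (Lemmas~\ref{lem:VerticalLine}--\ref{lem:BothAbove45Deg} with $V$ and $W$ swapped), giving threshold functions $\phi_1(v),\phi_2(v)$. When $W$ wins only item~2 at report $w$, his payment is $\phi_2(v^*)$. Applying DSIC for $W$ across the two reports $w$ and $w'$, which agree in their second coordinate ($w_2=w'_2=N$) and differ only in the first coordinate, I get
\[
N-\phi_2(v^*)\ \ge\ 0\quad\text{(true type $w$ shouldn't prefer misreporting $w'$)},
\]
\[
0\ \ge\ N-\phi_2(v^*)\quad\text{(true type $w'$ shouldn't prefer misreporting $w$)},
\]
so $\phi_2(v^*)=N$.

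For the final contradiction, I would exploit the fact that this identity holds for an open $2$D region of witnesses. Consider the perturbed report $w^{\mathrm{lo}}=(w_1,N-\eta)$ for a small $\eta\in(0,N-H]$, so $w^{\mathrm{lo}}\in S_{N-\eta,\epsilon}$. Since $w^{\mathrm{lo}}_1=w_1<\phi_1(v^*)$ (inherited from the fact that at $(v^*,w)$ player $W$ won item~2 while not winning item~1) and $w^{\mathrm{lo}}_2=N-\eta<N=\phi_2(v^*)$, the $W$-analog of Corollary~\ref{cor:AllocationRegions} forces $W$ to win nothing at $(v^*,w^{\mathrm{lo}})$. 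But $w^{\mathrm{lo}}\in S_{N-\eta,\epsilon}$ and $v_1^*<H\epsilon$, so Lemma~\ref{lem:item2_always_allocated} forces item~2 to be allocated; hence $V$ wins both items at $(v^*,w^{\mathrm{lo}})$. The mechanism's welfare is then $v_2^*$, while the matching that gives item~1 to $V$ and item~2 to $W$ has welfare at least $v_1^*+(N-\eta)$. By choosing $\xi$ just above $\Delta'$ and $v_1^*$ just above $\max(\pi_1(w),\pi_1(w'))$ (both of which are at most $H\epsilon<1$), one can make $v_2^*$ arbitrarily close to $v_1^*+\Delta'$; since $N\ge H$ and $\eta$ is as small as we like, the ratio $(v_1^*+N-\eta)/v_2^*$ can be driven above $H$, contradicting the $H$-approximation hypothesis. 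The main obstacle I anticipate is this last step: one must choose parameters carefully so that $v_2^*$ is genuinely small (which requires $\Delta'$ to not be too large); the sub-case where $\Delta'$ is already comparable to $N$ needs a separate argument, handled by instead perturbing $w_2$ upward (using $w^{\mathrm{hi}}=(w_1',N+\eta)$ and invoking the symmetric statement $\phi_2(v^*)=N$ together with $W$'s DSIC at $w^{\mathrm{hi}}$) to force the same kind of unbounded-ratio contradiction.
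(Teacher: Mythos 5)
You and the paper agree up to the point where you show that $W$'s allocation must flip between $\emptyset$ and $\{2\}$ at the witness $v^*$ as $W$'s report moves from $w$ to $w'$. After that the two proofs diverge. The paper closes immediately: by the $W$-analog of Corollary~\ref{cor:AllocationRegions}, $\emptyset$ occurs only in $B_L^W$ and $\{2\}$ only in $T_L^W$, and since $w_2=w_2'=N$ is held fixed a transition between those two regions is impossible (it would require crossing the horizontal boundary $w_2=\phi_2(v^*)$). You instead derive $\phi_2(v^*)=N$ from the two DSIC inequalities for $W$ --- a correct and rather elegant observation --- and then try to convert it into an approximation contradiction by perturbing $w_2$ to $N-\eta$.

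That final step has a genuine gap. Driving the ratio $(v_1^*+N-\eta)/v_2^*$ above $H$ requires $v_2^*\approx\Delta'$ to be small compared to $N/H$, and nothing you have established bounds $\Delta'=\pi_2(w')-\pi_1(w')$ by anything like $N/H$. None of Lemmas~\ref{lem:Pi1NotZero}--\ref{lem:1Below45Deg} gives an upper bound on $\pi_2$, and indeed a value like $\pi_2(w')\approx N/2$ is perfectly consistent with $H$-approximation (when $V$'s value for item~2 is below $\pi_2(w')$, $W$ still gets item~2 and the welfare ratio stays bounded by about 2). In that regime your ratio is close to $2$, not above $H$. The proposed rescue via $w^{\mathrm{hi}}=(w_1',N+\eta)$ does not close this: at $(v^*,w^{\mathrm{hi}})$ the characterization gives $W$ item~2 at price $N$, which produces no IC violation (type $w^{\mathrm{hi}}$ gets utility $\eta\geq0$, at least as much as any misreport) and no approximation violation (welfare is already $\geq N+\eta$ while $\mathrm{OPT}\leq v_2^*+N+\eta$). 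So the $\Delta'$-large sub-case remains open, and the proof as written does not go through. Replacing the perturbation step with the paper's direct region-crossing argument closes the lemma cleanly.
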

\begin{proof}
Consider any two points $w, w' \in S_{N,\epsilon}$. We show that any mechanism that gets an $H$ approximation has $\pi_2(w) - \pi_1(w) = \pi_2(w') -  \pi_1(w')$ for all such $w$ and $w'$. 
\begin{enumerate}
\item Suppose not. Assume without loss of generality that $\pi_2(w) - \pi_1(w) < \pi_2(w') - \pi_1(w')$.
\item Note that $\max\{\pi_1(w), \pi_1(w')\} < H\epsilon$ from \prettyref{lem:item2_always_allocated}. 
 This and bullet-point 1 above imply that there exists a point $v$ s.t. $H\epsilon > v_1 > \max(\pi_1(w), \pi_1(w'))$ and, $\pi_2(w) - \pi_1(w) < v_2 - v_1 < \pi_2(w') - \pi_1(w')$ (point $A$ in Figure \ref{fig:impos1}). 

\item Such a point $v$ is included in $B_R^V(w')$ and in $T_R^V(w)$. Based on our characterization in Corollary~\ref{cor:AllocationRegions}
such a point $v$ gets allocated the set of items $\{1,2\}$ (i.e., both the items) when bidder $W$ bids $w$ and just $\{1\}$ when bidder $W$ bids $w'$. 
\begin{itemize}
\item The astute reader might point out ``Corollary~\ref{cor:AllocationRegions} says that the allocation in region $B_R^V(w')$ could be $\{1\}$ or $\{1,2\}$, and it is $\{1\}$ whenever $\pi_2(w') > \pi_1(w')$. Do we know that $\pi_2(w') > \pi_1(w')$ to draw this unambiguous conclusion about allocation being $\{1\}$ in $B_R^V(w')$? The answer is yes. To see this, note that by Lemma~\ref{lem:TallRectangle} $\pi_2 \geq \pi_1$ at both $w$ and $w'$. The only question is whether $\pi_2(w') > \pi_1(w')$ or not. But w.l.o.g. we assumed that $\pi_2(w) - \pi_1(w) < \pi_2(w') - \pi_1(w')$. This immediately shows that $\pi_2(w') > \pi_1(w')$. 
\end{itemize}

\item Correspondingly, the allocation for bidder $W$ when changing his bid from $w$ to $w'$ changes from $\emptyset$ to just $\{2\}$ or $\emptyset$ to $\emptyset$. To see this note that when $V$ gets $\{1,2\}$ clearly $W$ gets $\emptyset$. But when $V$ gets $\{1\}$, the bidder $W$ could have possibly gotten $\{2\}$ or $\emptyset$. Note that $W$ getting $\emptyset$ means that item $2$ didn't go to any bidder at all. 
\begin{figure}[h]
\centering
\begin{subfigure}[b]{0.45\textwidth}
\centering
\ifx\du\undefined
  \newlength{\du}
\fi
\setlength{\du}{15\unitlength}
\begin{tikzpicture}
\pgftransformxscale{1.000000}
\pgftransformyscale{-1.000000}
\definecolor{dialinecolor}{rgb}{0.000000, 0.000000, 0.000000}
\pgfsetstrokecolor{dialinecolor}
\definecolor{dialinecolor}{rgb}{1.000000, 1.000000, 1.000000}
\pgfsetfillcolor{dialinecolor}
\pgfsetlinewidth{0.100000\du}
\pgfsetdash{}{0pt}
\pgfsetdash{}{0pt}
\pgfsetbuttcap
{
\definecolor{dialinecolor}{rgb}{0.000000, 0.000000, 0.000000}
\pgfsetfillcolor{dialinecolor}
\pgfsetarrowsend{stealth}
\definecolor{dialinecolor}{rgb}{0.000000, 0.000000, 0.000000}
\pgfsetstrokecolor{dialinecolor}
\draw (4.000000\du,17.000000\du)--(17.000000\du,17.000000\du);
}
\pgfsetlinewidth{0.100000\du}
\pgfsetdash{}{0pt}
\pgfsetdash{}{0pt}
\pgfsetbuttcap
{
\definecolor{dialinecolor}{rgb}{0.000000, 0.000000, 0.000000}
\pgfsetfillcolor{dialinecolor}
\pgfsetarrowsend{stealth}
\definecolor{dialinecolor}{rgb}{0.000000, 0.000000, 0.000000}
\pgfsetstrokecolor{dialinecolor}
\draw (4.000000\du,17.000000\du)--(4.000000\du,7.000000\du);
}
\pgfsetlinewidth{0.0500000\du}
\pgfsetdash{{\pgflinewidth}{0.200000\du}}{0cm}
\pgfsetdash{{\pgflinewidth}{0.200000\du}}{0cm}
\pgfsetbuttcap
{
\definecolor{dialinecolor}{rgb}{0.000000, 0.000000, 0.000000}
\pgfsetfillcolor{dialinecolor}
\definecolor{dialinecolor}{rgb}{0.000000, 0.000000, 0.000000}
\pgfsetstrokecolor{dialinecolor}
\draw (6.500000\du,17.000000\du)--(6.500000\du,7.000000\du);
}
\definecolor{dialinecolor}{rgb}{0.000000, 0.000000, 0.000000}
\pgfsetstrokecolor{dialinecolor}
\node[anchor=west] at (5.000000\du,10.000000\du){};
\definecolor{dialinecolor}{rgb}{0.000000, 0.000000, 0.000000}
\pgfsetstrokecolor{dialinecolor}
\node[anchor=west] at (17.000000\du,17.500000\du){$w_1$};
\definecolor{dialinecolor}{rgb}{0.000000, 0.000000, 0.000000}
\pgfsetstrokecolor{dialinecolor}
\node[anchor=west] at (3.000000\du,6.500000\du){$w_2$};
\pgfsetlinewidth{0.150000\du}
\pgfsetdash{}{0pt}
\pgfsetdash{}{0pt}
\pgfsetbuttcap
{
\definecolor{dialinecolor}{rgb}{0.000000, 0.000000, 0.000000}
\pgfsetfillcolor{dialinecolor}
\definecolor{dialinecolor}{rgb}{0.000000, 0.000000, 0.000000}
\pgfsetstrokecolor{dialinecolor}
\draw (4.000000\du,9.000000\du)--(6.500000\du,9.000000\du);
}
\definecolor{dialinecolor}{rgb}{0.000000, 0.000000, 0.000000}
\pgfsetstrokecolor{dialinecolor}
\node[anchor=west] at (5.900000\du,18.000000\du){$\epsilon$};
\definecolor{dialinecolor}{rgb}{0.000000, 0.000000, 0.000000}
\pgfsetstrokecolor{dialinecolor}
\node[anchor=west] at (2.800000\du,9.000000\du){$N$};
\definecolor{dialinecolor}{rgb}{0.000000, 0.000000, 0.000000}
\pgfsetstrokecolor{dialinecolor}
\node[anchor=west] at (7.000000\du,18.000000\du){$1/H$};
\definecolor{dialinecolor}{rgb}{0.000000, 0.000000, 0.000000}
\pgfsetstrokecolor{dialinecolor}
\node[anchor=west] at (2.8000000\du,10.000000\du){$H$};
\definecolor{dialinecolor}{rgb}{0.000000, 0.000000, 0.000000}
\pgfsetfillcolor{dialinecolor}
\pgfpathellipse{\pgfpoint{6.000000\du}{9.000000\du}}{\pgfpoint{0.125000\du}{0\du}}{\pgfpoint{0\du}{0.125000\du}}
\pgfusepath{fill}
\pgfsetlinewidth{0.100000\du}
\pgfsetdash{}{0pt}
\pgfsetdash{}{0pt}
\definecolor{dialinecolor}{rgb}{0.000000, 0.000000, 0.000000}
\pgfsetstrokecolor{dialinecolor}
\pgfpathellipse{\pgfpoint{6.000000\du}{9.000000\du}}{\pgfpoint{0.125000\du}{0\du}}{\pgfpoint{0\du}{0.125000\du}}
\pgfusepath{stroke}
\definecolor{dialinecolor}{rgb}{0.000000, 0.000000, 0.000000}
\pgfsetstrokecolor{dialinecolor}
\node[anchor=west] at (4.100000\du,8.400000\du){$w$};
\definecolor{dialinecolor}{rgb}{0.000000, 0.000000, 0.000000}
\pgfsetstrokecolor{dialinecolor}
\node[anchor=west] at (5.4500000\du,8.300000\du){$w'$};
\pgfsetlinewidth{0.100000\du}
\pgfsetdash{}{0pt}
\pgfsetdash{}{0pt}
\pgfsetbuttcap
{
\definecolor{dialinecolor}{rgb}{0.000000, 0.000000, 0.000000}
\pgfsetfillcolor{dialinecolor}
\definecolor{dialinecolor}{rgb}{0.000000, 0.000000, 0.000000}
\pgfsetstrokecolor{dialinecolor}
\draw (7.500000\du,17.000000\du)--(7.500000\du,16.500000\du);
}
\pgfsetlinewidth{0.100000\du}
\pgfsetdash{}{0pt}
\pgfsetdash{}{0pt}
\pgfsetmiterjoin
\pgfsetbuttcap
{
\definecolor{dialinecolor}{rgb}{0.000000, 0.000000, 0.000000}
\pgfsetfillcolor{dialinecolor}
\pgfsetarrowsend{stealth}
\definecolor{dialinecolor}{rgb}{0.000000, 0.000000, 0.000000}
\pgfsetstrokecolor{dialinecolor}
\pgfpathmoveto{\pgfpoint{11.500000\du}{12.500000\du}}
\pgfpathcurveto{\pgfpoint{11.500000\du}{10.500000\du}}{\pgfpoint{5.250000\du}{12.000000\du}}{\pgfpoint{5.250000\du}{9.000000\du}}
\pgfusepath{stroke}
}
\definecolor{dialinecolor}{rgb}{0.000000, 0.000000, 0.000000}
\pgfsetstrokecolor{dialinecolor}
\node[anchor=west] at (10.500000\du,13.200000\du){$S_{N,\epsilon}$};
\pgfsetlinewidth{0.100000\du}
\pgfsetdash{}{0pt}
\pgfsetdash{}{0pt}
\pgfsetbuttcap
{
\definecolor{dialinecolor}{rgb}{0.000000, 0.000000, 0.000000}
\pgfsetfillcolor{dialinecolor}
\definecolor{dialinecolor}{rgb}{0.000000, 0.000000, 0.000000}
\pgfsetstrokecolor{dialinecolor}
\draw (4.500000\du,10.000000\du)--(4.000000\du,10.000000\du);
}
\definecolor{dialinecolor}{rgb}{0.000000, 0.000000, 0.000000}
\pgfsetfillcolor{dialinecolor}
\pgfpathellipse{\pgfpoint{4.500000\du}{9.000000\du}}{\pgfpoint{0.125000\du}{0\du}}{\pgfpoint{0\du}{0.125000\du}}
\pgfusepath{fill}
\pgfsetlinewidth{0.100000\du}
\pgfsetdash{}{0pt}
\pgfsetdash{}{0pt}
\definecolor{dialinecolor}{rgb}{0.000000, 0.000000, 0.000000}
\pgfsetstrokecolor{dialinecolor}
\pgfpathellipse{\pgfpoint{4.500000\du}{9.000000\du}}{\pgfpoint{0.125000\du}{0\du}}{\pgfpoint{0\du}{0.125000\du}}
\pgfusepath{stroke}
\node[anchor=west] at (9.000000\du,8.000000\du){$x^W(A,w)=\{\emptyset\}$};
\node[anchor=west] at (9.000000\du,9.000000\du){$x^W(A,w')=\{2\}$};
\end{tikzpicture}
\end{subfigure}
\begin{subfigure}[b]{0.45\textwidth}
\centering
\ifx\du\undefined
  \newlength{\du}
\fi
\setlength{\du}{15\unitlength}
\begin{tikzpicture}
\pgftransformxscale{1.000000}
\pgftransformyscale{-1.000000}
\definecolor{dialinecolor}{rgb}{0.000000, 0.000000, 0.000000}
\pgfsetstrokecolor{dialinecolor}
\definecolor{dialinecolor}{rgb}{1.000000, 1.000000, 1.000000}
\pgfsetfillcolor{dialinecolor}
\pgfsetlinewidth{0.100000\du}
\pgfsetdash{}{0pt}
\pgfsetdash{}{0pt}
\pgfsetbuttcap
{
\definecolor{dialinecolor}{rgb}{0.000000, 0.000000, 0.000000}
\pgfsetfillcolor{dialinecolor}
\pgfsetarrowsend{stealth}
\definecolor{dialinecolor}{rgb}{0.000000, 0.000000, 0.000000}
\pgfsetstrokecolor{dialinecolor}
\draw (4.000000\du,17.000000\du)--(17.000000\du,17.000000\du);
}
\pgfsetlinewidth{0.100000\du}
\pgfsetdash{}{0pt}
\pgfsetdash{}{0pt}
\pgfsetbuttcap
{
\definecolor{dialinecolor}{rgb}{0.000000, 0.000000, 0.000000}
\pgfsetfillcolor{dialinecolor}
\pgfsetarrowsend{stealth}
\definecolor{dialinecolor}{rgb}{0.000000, 0.000000, 0.000000}
\pgfsetstrokecolor{dialinecolor}
\draw (4.000000\du,17.000000\du)--(4.000000\du,7.000000\du);
}
\pgfsetlinewidth{0.0500000\du}
\pgfsetdash{{\pgflinewidth}{0.200000\du}}{0cm}
\pgfsetdash{{\pgflinewidth}{0.200000\du}}{0cm}
\pgfsetbuttcap
{
\definecolor{dialinecolor}{rgb}{0.000000, 0.000000, 0.000000}
\pgfsetfillcolor{dialinecolor}
\definecolor{dialinecolor}{rgb}{0.000000, 0.000000, 0.000000}
\pgfsetstrokecolor{dialinecolor}
\draw (7.000000\du,17.000000\du)--(7.000000\du,7.000000\du);
}
\definecolor{dialinecolor}{rgb}{0.000000, 0.000000, 0.000000}
\pgfsetstrokecolor{dialinecolor}
\node[anchor=west] at (6.000000\du,18.000000\du){$\pi_1(w)$};
\definecolor{dialinecolor}{rgb}{0.000000, 0.000000, 0.000000}
\pgfsetstrokecolor{dialinecolor}
\node[anchor=west] at (5.000000\du,10.000000\du){};
\definecolor{dialinecolor}{rgb}{0.000000, 0.000000, 0.000000}
\pgfsetstrokecolor{dialinecolor}
\node[anchor=west] at (17.000000\du,17.500000\du){$v_1$};
\definecolor{dialinecolor}{rgb}{0.000000, 0.000000, 0.000000}
\pgfsetstrokecolor{dialinecolor}
\node[anchor=west] at (3.000000\du,6.500000\du){$v_2$};
\pgfsetlinewidth{0.0500000\du}
\pgfsetdash{{\pgflinewidth}{0.200000\du}}{0cm}
\pgfsetdash{{\pgflinewidth}{0.200000\du}}{0cm}
\pgfsetbuttcap
{
\definecolor{dialinecolor}{rgb}{0.000000, 0.000000, 0.000000}
\pgfsetfillcolor{dialinecolor}
\definecolor{dialinecolor}{rgb}{0.000000, 0.000000, 0.000000}
\pgfsetstrokecolor{dialinecolor}
\draw (4.000000\du,13.000000\du)--(7.000000\du,13.000000\du);
}
\definecolor{dialinecolor}{rgb}{0.000000, 0.000000, 0.000000}
\pgfsetstrokecolor{dialinecolor}
\definecolor{dialinecolor}{rgb}{0.000000, 0.000000, 0.000000}
\pgfsetstrokecolor{dialinecolor}
\node[anchor=west] at (1.500000\du,13.000000\du){$\pi_2(w)$};
\pgfsetlinewidth{0.0500000\du}
\pgfsetdash{{\pgflinewidth}{0.200000\du}}{0cm}
\pgfsetdash{{\pgflinewidth}{0.200000\du}}{0cm}
\pgfsetbuttcap
{
\definecolor{dialinecolor}{rgb}{0.000000, 0.000000, 0.000000}
\pgfsetfillcolor{dialinecolor}
\definecolor{dialinecolor}{rgb}{0.000000, 0.000000, 0.000000}
\pgfsetstrokecolor{dialinecolor}
\draw (7.000000\du,13.000000\du)--(13.500000\du,6.500000\du);
}
\pgfsetlinewidth{0.0500000\du}
\pgfsetdash{{\pgflinewidth}{0.200000\du}}{0cm}
\pgfsetdash{{\pgflinewidth}{0.200000\du}}{0cm}
\pgfsetbuttcap
{
\definecolor{dialinecolor}{rgb}{0.000000, 0.000000, 0.000000}
\pgfsetfillcolor{dialinecolor}
\definecolor{dialinecolor}{rgb}{0.000000, 0.000000, 0.000000}
\pgfsetstrokecolor{dialinecolor}
\draw (7.000000\du,13.000000\du)--(8.000000\du,13.000000\du);
}
\definecolor{dialinecolor}{rgb}{0.000000, 0.000000, 0.000000}
\pgfsetstrokecolor{dialinecolor}
\node[anchor=west] at (7.600000\du,12.600000\du){$45^o$};
\definecolor{dialinecolor}{rgb}{0.000000, 0.000000, 0.000000}
\pgfsetstrokecolor{dialinecolor}
\node[anchor=west] at (1.500000\du,9.000000\du){$\pi_2(w')$};
\definecolor{dialinecolor}{rgb}{0.000000, 0.000000, 0.000000}
\pgfsetstrokecolor{dialinecolor}
\node[anchor=west] at (8.5000000\du,18.000000\du){$\pi_1(w')$};
\pgfsetlinewidth{0.0500000\du}
\pgfsetdash{{\pgflinewidth}{0.200000\du}}{0cm}
\pgfsetdash{{\pgflinewidth}{0.200000\du}}{0cm}
\pgfsetbuttcap
{
\definecolor{dialinecolor}{rgb}{0.000000, 0.000000, 0.000000}
\pgfsetfillcolor{dialinecolor}
\definecolor{dialinecolor}{rgb}{0.000000, 0.000000, 0.000000}
\pgfsetstrokecolor{dialinecolor}
\draw (4.000000\du,9.000000\du)--(9.000000\du,9.000000\du);
}
\pgfsetlinewidth{0.0500000\du}
\pgfsetdash{{\pgflinewidth}{0.200000\du}}{0cm}
\pgfsetdash{{\pgflinewidth}{0.200000\du}}{0cm}
\pgfsetbuttcap
{
\definecolor{dialinecolor}{rgb}{0.000000, 0.000000, 0.000000}
\pgfsetfillcolor{dialinecolor}
\definecolor{dialinecolor}{rgb}{0.000000, 0.000000, 0.000000}
\pgfsetstrokecolor{dialinecolor}
\draw (9.000000\du,17.000000\du)--(9.000000\du,7.000000\du);
}
\definecolor{dialinecolor}{rgb}{0.000000, 0.000000, 0.000000}
\pgfsetstrokecolor{dialinecolor}
\node[anchor=west] at (13.000000\du,17.900000\du){$H\cdot \epsilon$};
\pgfsetlinewidth{0.100000\du}
\pgfsetdash{}{0pt}
\pgfsetdash{}{0pt}
\pgfsetbuttcap
{
\definecolor{dialinecolor}{rgb}{0.000000, 0.000000, 0.000000}
\pgfsetfillcolor{dialinecolor}
\definecolor{dialinecolor}{rgb}{0.000000, 0.000000, 0.000000}
\pgfsetstrokecolor{dialinecolor}
\draw (13.500000\du,17.000000\du)--(13.500000\du,16.500000\du);
}
\pgfsetlinewidth{0.0500000\du}
\pgfsetdash{{\pgflinewidth}{0.200000\du}}{0cm}
\pgfsetdash{{\pgflinewidth}{0.200000\du}}{0cm}
\pgfsetbuttcap
{
\definecolor{dialinecolor}{rgb}{0.000000, 0.000000, 0.000000}
\pgfsetfillcolor{dialinecolor}
\definecolor{dialinecolor}{rgb}{0.000000, 0.000000, 0.000000}
\pgfsetstrokecolor{dialinecolor}
\draw (9.000000\du,9.000000\du)--(12.000000\du,6.000000\du);
}
\definecolor{dialinecolor}{rgb}{0.000000, 0.000000, 0.000000}
\pgfsetfillcolor{dialinecolor}
\pgfpathellipse{\pgfpoint{11.500000\du}{7.500000\du}}{\pgfpoint{0.125000\du}{0\du}}{\pgfpoint{0\du}{0.125000\du}}
\pgfusepath{fill}
\pgfsetlinewidth{0.100000\du}
\pgfsetdash{}{0pt}
\pgfsetdash{}{0pt}
\definecolor{dialinecolor}{rgb}{0.000000, 0.000000, 0.000000}
\pgfsetstrokecolor{dialinecolor}
\pgfpathellipse{\pgfpoint{11.500000\du}{7.500000\du}}{\pgfpoint{0.125000\du}{0\du}}{\pgfpoint{0\du}{0.125000\du}}
\pgfusepath{stroke}
\definecolor{dialinecolor}{rgb}{0.000000, 0.000000, 0.000000}
\pgfsetstrokecolor{dialinecolor}
\node[anchor=west] at (11.500000\du,7.000000\du){A};
\definecolor{dialinecolor}{rgb}{0.000000, 0.000000, 0.000000}
\pgfsetstrokecolor{dialinecolor}
\node[anchor=west] at (13.000000\du,8.500000\du){$x^V(A,w)=\{1,2\}$};
\definecolor{dialinecolor}{rgb}{0.000000, 0.000000, 0.000000}
\pgfsetstrokecolor{dialinecolor}
\node[anchor=west] at (13.000000\du,9.500000\du){$x^V(A,w')=\{1\}$};
\end{tikzpicture}
\end{subfigure}
\caption{If the intersection point in the allocation of player $V$ (right) does not move along the diagonal, there exists a point $A$ such that the allocation of player $W$ (left), when player $V$ has value $A$, moves from $\emptyset$ to Item $2$. Such a transition in $W$'s allocation is not feasible as we move vertically.}\label{fig:impos1}
\end{figure}
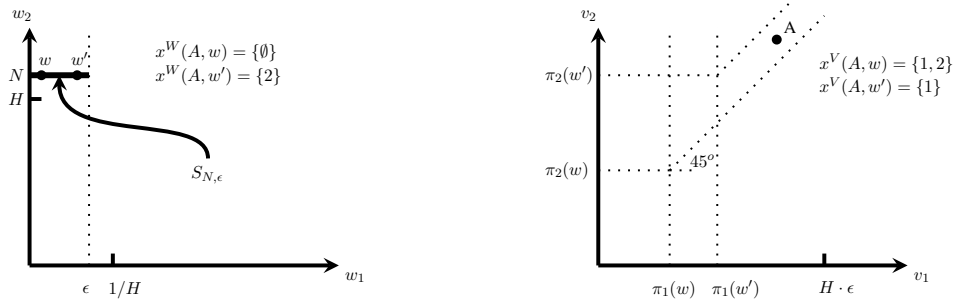
\item But the possibility of item $2$ not going to any bidder at all is ruled out by \prettyref{lem:item2_always_allocated}: note that $w_1' < \epsilon$ and $v_1 < H\epsilon$.  Given that item $2$ should always be allocated, the only possible change of allocation for bidder $W$ when changing his bids from $w$ to $w'$ is from $\emptyset$ to $\{2\}$. 

\item Based on the definition of the $4$ sets it follows immediately that when $w_2$ is held a constant and $w_1$ is varied,  the point $w$ moves either between $B_L^W$ and $B_R^W$ or between $T_L^W$ and $T_R^W$, or between $T_L^W$ and $B_R^W$. \item Our characterization in Corollary~\ref{cor:AllocationRegions} shows that none of the three moves mentioned above can result in an allocation change between $\emptyset$ and $\{2\}$. Indeed an allocation change between $\emptyset$ and $\{2\}$ requires a move between $B_L^W$ and $T_L^W$, which is not among the three moves mentioned above. 
\end{enumerate}
\end{proof}

\begin{lemma}
\label{lem:PisAreFrozen}
For every deterministic DSIC+IR mechanism that gets an $H$-approximation,
\begin{align}
\label{eq:PisAreFrozen}
&\forall N \geq H, \forall \epsilon < \frac{1}{H},\forall w \in S_{N,\epsilon} = \{w \in \R_+^2: 0 < w_1 < \epsilon, w_2 = N\},\text{ we have: }\notag\\
&\qquad\qquad\qquad\qquad\pi_1(w) = c_{1,N},\ and\ \pi_2(w) = c_{2,N}.
\end{align}
\begin{figure}[h]
\centering
\begin{subfigure}[b]{0.45\textwidth}
\centering
\ifx\du\undefined
  \newlength{\du}
\fi
\setlength{\du}{15\unitlength}
\begin{tikzpicture}
\pgftransformxscale{1.000000}
\pgftransformyscale{-1.000000}
\definecolor{dialinecolor}{rgb}{0.000000, 0.000000, 0.000000}
\pgfsetstrokecolor{dialinecolor}
\definecolor{dialinecolor}{rgb}{1.000000, 1.000000, 1.000000}
\pgfsetfillcolor{dialinecolor}
\pgfsetlinewidth{0.100000\du}
\pgfsetdash{}{0pt}
\pgfsetdash{}{0pt}
\pgfsetbuttcap
{
\definecolor{dialinecolor}{rgb}{0.000000, 0.000000, 0.000000}
\pgfsetfillcolor{dialinecolor}
\pgfsetarrowsend{stealth}
\definecolor{dialinecolor}{rgb}{0.000000, 0.000000, 0.000000}
\pgfsetstrokecolor{dialinecolor}
\draw (4.000000\du,17.000000\du)--(17.000000\du,17.000000\du);
}
\pgfsetlinewidth{0.100000\du}
\pgfsetdash{}{0pt}
\pgfsetdash{}{0pt}
\pgfsetbuttcap
{
\definecolor{dialinecolor}{rgb}{0.000000, 0.000000, 0.000000}
\pgfsetfillcolor{dialinecolor}
\pgfsetarrowsend{stealth}
\definecolor{dialinecolor}{rgb}{0.000000, 0.000000, 0.000000}
\pgfsetstrokecolor{dialinecolor}
\draw (4.000000\du,17.000000\du)--(4.000000\du,7.000000\du);
}
\pgfsetlinewidth{0.0500000\du}
\pgfsetdash{{\pgflinewidth}{0.200000\du}}{0cm}
\pgfsetdash{{\pgflinewidth}{0.200000\du}}{0cm}
\pgfsetbuttcap
{
\definecolor{dialinecolor}{rgb}{0.000000, 0.000000, 0.000000}
\pgfsetfillcolor{dialinecolor}
\definecolor{dialinecolor}{rgb}{0.000000, 0.000000, 0.000000}
\pgfsetstrokecolor{dialinecolor}
\draw (6.500000\du,17.000000\du)--(6.500000\du,7.000000\du);
}
\definecolor{dialinecolor}{rgb}{0.000000, 0.000000, 0.000000}
\pgfsetstrokecolor{dialinecolor}
\node[anchor=west] at (5.000000\du,10.000000\du){};
\definecolor{dialinecolor}{rgb}{0.000000, 0.000000, 0.000000}
\pgfsetstrokecolor{dialinecolor}
\node[anchor=west] at (17.000000\du,17.500000\du){$w_1$};
\definecolor{dialinecolor}{rgb}{0.000000, 0.000000, 0.000000}
\pgfsetstrokecolor{dialinecolor}
\node[anchor=west] at (3.000000\du,6.500000\du){$w_2$};
\pgfsetlinewidth{0.150000\du}
\pgfsetdash{}{0pt}
\pgfsetdash{}{0pt}
\pgfsetbuttcap
{
\definecolor{dialinecolor}{rgb}{0.000000, 0.000000, 0.000000}
\pgfsetfillcolor{dialinecolor}
\definecolor{dialinecolor}{rgb}{0.000000, 0.000000, 0.000000}
\pgfsetstrokecolor{dialinecolor}
\draw (4.000000\du,9.000000\du)--(6.500000\du,9.000000\du);
}
\definecolor{dialinecolor}{rgb}{0.000000, 0.000000, 0.000000}
\pgfsetstrokecolor{dialinecolor}
\node[anchor=west] at (5.900000\du,18.000000\du){$\epsilon$};
\definecolor{dialinecolor}{rgb}{0.000000, 0.000000, 0.000000}
\pgfsetstrokecolor{dialinecolor}
\node[anchor=west] at (2.800000\du,9.000000\du){$N$};
\definecolor{dialinecolor}{rgb}{0.000000, 0.000000, 0.000000}
\pgfsetstrokecolor{dialinecolor}
\node[anchor=west] at (7.000000\du,18.000000\du){$1/H$};
\definecolor{dialinecolor}{rgb}{0.000000, 0.000000, 0.000000}
\pgfsetstrokecolor{dialinecolor}
\node[anchor=west] at (2.8000000\du,10.000000\du){$H$};
\definecolor{dialinecolor}{rgb}{0.000000, 0.000000, 0.000000}
\pgfsetfillcolor{dialinecolor}
\pgfpathellipse{\pgfpoint{6.000000\du}{9.000000\du}}{\pgfpoint{0.125000\du}{0\du}}{\pgfpoint{0\du}{0.125000\du}}
\pgfusepath{fill}
\pgfsetlinewidth{0.100000\du}
\pgfsetdash{}{0pt}
\pgfsetdash{}{0pt}
\definecolor{dialinecolor}{rgb}{0.000000, 0.000000, 0.000000}
\pgfsetstrokecolor{dialinecolor}
\pgfpathellipse{\pgfpoint{6.000000\du}{9.000000\du}}{\pgfpoint{0.125000\du}{0\du}}{\pgfpoint{0\du}{0.125000\du}}
\pgfusepath{stroke}
\definecolor{dialinecolor}{rgb}{0.000000, 0.000000, 0.000000}
\pgfsetstrokecolor{dialinecolor}
\node[anchor=west] at (4.100000\du,8.400000\du){$w$};
\definecolor{dialinecolor}{rgb}{0.000000, 0.000000, 0.000000}
\pgfsetstrokecolor{dialinecolor}
\node[anchor=west] at (5.4500000\du,8.300000\du){$w'$};
\pgfsetlinewidth{0.100000\du}
\pgfsetdash{}{0pt}
\pgfsetdash{}{0pt}
\pgfsetbuttcap
{
\definecolor{dialinecolor}{rgb}{0.000000, 0.000000, 0.000000}
\pgfsetfillcolor{dialinecolor}
\definecolor{dialinecolor}{rgb}{0.000000, 0.000000, 0.000000}
\pgfsetstrokecolor{dialinecolor}
\draw (7.500000\du,17.000000\du)--(7.500000\du,16.500000\du);
}
\pgfsetlinewidth{0.100000\du}
\pgfsetdash{}{0pt}
\pgfsetdash{}{0pt}
\pgfsetmiterjoin
\pgfsetbuttcap
{
\definecolor{dialinecolor}{rgb}{0.000000, 0.000000, 0.000000}
\pgfsetfillcolor{dialinecolor}
\pgfsetarrowsend{stealth}
\definecolor{dialinecolor}{rgb}{0.000000, 0.000000, 0.000000}
\pgfsetstrokecolor{dialinecolor}
\pgfpathmoveto{\pgfpoint{11.500000\du}{12.500000\du}}
\pgfpathcurveto{\pgfpoint{11.500000\du}{10.500000\du}}{\pgfpoint{5.250000\du}{12.000000\du}}{\pgfpoint{5.250000\du}{9.000000\du}}
\pgfusepath{stroke}
}
\definecolor{dialinecolor}{rgb}{0.000000, 0.000000, 0.000000}
\pgfsetstrokecolor{dialinecolor}
\node[anchor=west] at (10.500000\du,13.200000\du){$S_{N,\epsilon}$};
\pgfsetlinewidth{0.100000\du}
\pgfsetdash{}{0pt}
\pgfsetdash{}{0pt}
\pgfsetbuttcap
{
\definecolor{dialinecolor}{rgb}{0.000000, 0.000000, 0.000000}
\pgfsetfillcolor{dialinecolor}
\definecolor{dialinecolor}{rgb}{0.000000, 0.000000, 0.000000}
\pgfsetstrokecolor{dialinecolor}
\draw (4.500000\du,10.000000\du)--(4.000000\du,10.000000\du);
}
\definecolor{dialinecolor}{rgb}{0.000000, 0.000000, 0.000000}
\pgfsetfillcolor{dialinecolor}
\pgfpathellipse{\pgfpoint{4.500000\du}{9.000000\du}}{\pgfpoint{0.125000\du}{0\du}}{\pgfpoint{0\du}{0.125000\du}}
\pgfusepath{fill}
\pgfsetlinewidth{0.100000\du}
\pgfsetdash{}{0pt}
\pgfsetdash{}{0pt}
\definecolor{dialinecolor}{rgb}{0.000000, 0.000000, 0.000000}
\pgfsetstrokecolor{dialinecolor}
\pgfpathellipse{\pgfpoint{4.500000\du}{9.000000\du}}{\pgfpoint{0.125000\du}{0\du}}{\pgfpoint{0\du}{0.125000\du}}
\pgfusepath{stroke}
\end{tikzpicture}
\end{subfigure}
\begin{subfigure}[b]{0.45\textwidth}
\centering
\ifx\du\undefined
  \newlength{\du}
\fi
\setlength{\du}{15\unitlength}
\begin{tikzpicture}
\pgftransformxscale{1.000000}
\pgftransformyscale{-1.000000}
\definecolor{dialinecolor}{rgb}{0.000000, 0.000000, 0.000000}
\pgfsetstrokecolor{dialinecolor}
\definecolor{dialinecolor}{rgb}{1.000000, 1.000000, 1.000000}
\pgfsetfillcolor{dialinecolor}
\pgfsetlinewidth{0.100000\du}
\pgfsetdash{}{0pt}
\pgfsetdash{}{0pt}
\pgfsetbuttcap
{
\definecolor{dialinecolor}{rgb}{0.000000, 0.000000, 0.000000}
\pgfsetfillcolor{dialinecolor}
\pgfsetarrowsend{stealth}
\definecolor{dialinecolor}{rgb}{0.000000, 0.000000, 0.000000}
\pgfsetstrokecolor{dialinecolor}
\draw (4.000000\du,17.000000\du)--(17.000000\du,17.000000\du);
}
\pgfsetlinewidth{0.100000\du}
\pgfsetdash{}{0pt}
\pgfsetdash{}{0pt}
\pgfsetbuttcap
{
\definecolor{dialinecolor}{rgb}{0.000000, 0.000000, 0.000000}
\pgfsetfillcolor{dialinecolor}
\pgfsetarrowsend{stealth}
\definecolor{dialinecolor}{rgb}{0.000000, 0.000000, 0.000000}
\pgfsetstrokecolor{dialinecolor}
\draw (4.000000\du,17.000000\du)--(4.000000\du,7.000000\du);
}
\pgfsetlinewidth{0.0500000\du}
\pgfsetdash{{\pgflinewidth}{0.200000\du}}{0cm}
\pgfsetdash{{\pgflinewidth}{0.200000\du}}{0cm}
\pgfsetbuttcap
{
\definecolor{dialinecolor}{rgb}{0.000000, 0.000000, 0.000000}
\pgfsetfillcolor{dialinecolor}
\definecolor{dialinecolor}{rgb}{0.000000, 0.000000, 0.000000}
\pgfsetstrokecolor{dialinecolor}
\draw (7.000000\du,17.000000\du)--(7.000000\du,7.000000\du);
}
\definecolor{dialinecolor}{rgb}{0.000000, 0.000000, 0.000000}
\pgfsetstrokecolor{dialinecolor}
\node[anchor=west] at (6.000000\du,18.000000\du){$\pi_1(w)=\pi_1(w')$};
\definecolor{dialinecolor}{rgb}{0.000000, 0.000000, 0.000000}
\pgfsetstrokecolor{dialinecolor}
\node[anchor=west] at (5.000000\du,10.000000\du){};
\definecolor{dialinecolor}{rgb}{0.000000, 0.000000, 0.000000}
\pgfsetstrokecolor{dialinecolor}
\node[anchor=west] at (17.000000\du,17.500000\du){$v_1$};
\definecolor{dialinecolor}{rgb}{0.000000, 0.000000, 0.000000}
\pgfsetstrokecolor{dialinecolor}
\node[anchor=west] at (3.000000\du,6.500000\du){$v_2$};
\pgfsetlinewidth{0.0500000\du}
\pgfsetdash{{\pgflinewidth}{0.200000\du}}{0cm}
\pgfsetdash{{\pgflinewidth}{0.200000\du}}{0cm}
\pgfsetbuttcap
{
\definecolor{dialinecolor}{rgb}{0.000000, 0.000000, 0.000000}
\pgfsetfillcolor{dialinecolor}
\definecolor{dialinecolor}{rgb}{0.000000, 0.000000, 0.000000}
\pgfsetstrokecolor{dialinecolor}
\draw (4.000000\du,13.000000\du)--(7.000000\du,13.000000\du);
}
\definecolor{dialinecolor}{rgb}{0.000000, 0.000000, 0.000000}
\pgfsetstrokecolor{dialinecolor}
\node[anchor=west] at (-1.250000\du,13.100000\du){$\pi_2(w)=\pi_2(w')$};
\pgfsetlinewidth{0.0500000\du}
\pgfsetdash{{\pgflinewidth}{0.200000\du}}{0cm}
\pgfsetdash{{\pgflinewidth}{0.200000\du}}{0cm}
\pgfsetbuttcap
{
\definecolor{dialinecolor}{rgb}{0.000000, 0.000000, 0.000000}
\pgfsetfillcolor{dialinecolor}
\definecolor{dialinecolor}{rgb}{0.000000, 0.000000, 0.000000}
\pgfsetstrokecolor{dialinecolor}
\draw (7.000000\du,13.000000\du)--(13.500000\du,6.500000\du);
}
\pgfsetlinewidth{0.0500000\du}
\pgfsetdash{{\pgflinewidth}{0.200000\du}}{0cm}
\pgfsetdash{{\pgflinewidth}{0.200000\du}}{0cm}
\pgfsetbuttcap
{
\definecolor{dialinecolor}{rgb}{0.000000, 0.000000, 0.000000}
\pgfsetfillcolor{dialinecolor}
\definecolor{dialinecolor}{rgb}{0.000000, 0.000000, 0.000000}
\pgfsetstrokecolor{dialinecolor}
\draw (7.000000\du,13.000000\du)--(8.000000\du,13.000000\du);
}
\definecolor{dialinecolor}{rgb}{0.000000, 0.000000, 0.000000}
\pgfsetstrokecolor{dialinecolor}
\node[anchor=west] at (7.7000000\du,12.500000\du){$45^o$};
\definecolor{dialinecolor}{rgb}{0.000000, 0.000000, 0.000000}
\pgfsetstrokecolor{dialinecolor}
\node[anchor=west] at (13.000000\du,17.900000\du){$H\cdot \epsilon$};
\pgfsetlinewidth{0.100000\du}
\pgfsetdash{}{0pt}
\pgfsetdash{}{0pt}
\pgfsetbuttcap
{
\definecolor{dialinecolor}{rgb}{0.000000, 0.000000, 0.000000}
\pgfsetfillcolor{dialinecolor}
\definecolor{dialinecolor}{rgb}{0.000000, 0.000000, 0.000000}
\pgfsetstrokecolor{dialinecolor}
\draw (13.500000\du,17.000000\du)--(13.500000\du,16.500000\du);
}
\end{tikzpicture}
\end{subfigure}
\caption{The thresholds in the allocation of $V$ do not change as $W$ moves in $S_{N,\epsilon}$, e.g. from $w$ to $w'$.}\label{fig:const2}
\end{figure}
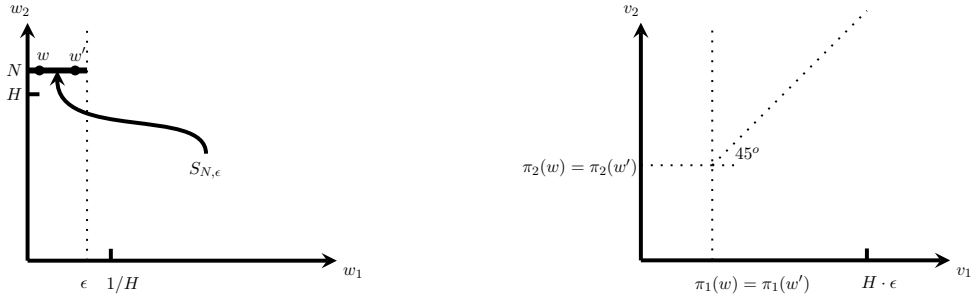
\end{lemma}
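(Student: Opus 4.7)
The plan is to argue by contradiction. By Lemma \ref{lem:Pi2MinusPi1IndependentOfW1} we already know that $\pi_2(w)-\pi_1(w)=c_N$ is constant on $S_{N,\epsilon}$, so it suffices to show that $\pi_1$ is itself constant there; the constancy of $\pi_2=\pi_1+c_N$ then follows. Suppose instead there exist $w, w' \in S_{N,\epsilon}$ with $\pi_1(w) < \pi_1(w')$, and hence $\pi_2(w) < \pi_2(w')$ by the same gap $c_N$. The strategy is to pick a single bid $v$ for bidder $V$ such that, as $W$'s bid switches from $w$ to $w'$, bidder $V$'s allocation flips from $\{1,2\}$ to $\emptyset$. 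Complementarily, $W$'s allocation will flip from $\emptyset$ to $\{1,2\}$, and I will show this is inconsistent with $W$'s characterization when $w_2=N$ is fixed.

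Concretely, I would choose $v_1 \in (\pi_1(w), \pi_1(w'))$ and $v_2 \in (v_1+c_N, \pi_2(w'))$; the latter interval is nonempty since $v_1 + c_N < \pi_1(w') + c_N = \pi_2(w')$, and in particular $v_1 < \pi_1(w') < H\epsilon$ by Lemma \ref{lem:item2_always_allocated}. Under $w$ this $v$ satisfies $v_1 > \pi_1(w)$ and $v_2-v_1 > c_N$, so $v \in T_R^V(w)$ and Lemma \ref{lem:BothAbove45Deg} gives $V$ both items, forcing $W$'s allocation to be $\emptyset$. Under $w'$ the same $v$ satisfies $v_1 < \pi_1(w')$ and $v_2 < \pi_2(w')$, so $v \in B_L^V(w')$ and $V$ receives nothing; combining Lemma \ref{lem:Pi1NotZero} (item 1 is always allocated) with Lemma \ref{lem:item2_always_allocated} (item 2 is allocated since $v_1 < H\epsilon$ and $w_1' < \epsilon$), bidder $W$ must receive $\{1,2\}$.

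I then apply the symmetric form of Corollary \ref{cor:AllocationRegions} to $W$ at this fixed $v$. The allocation $\emptyset$ at $w=(w_1,N)$ forces $w \in B_L^W(v)$, so in particular $w_2 < \phi_2(v)$, giving $\phi_2(v) > N$. The allocation $\{1,2\}$ at $w'=(w_1',N)$ forces $w' \in T_R^W(v)$ (the alternative $B_R^W$ with $\phi_2=\phi_1$ is ruled out since it would require $w_2 < w_1'$, but $w_1' < \epsilon < N$), yielding both $w_1' > \phi_1(v)$ and $N-\phi_2(v) > w_1' - \phi_1(v)$; rearranging the second and using $\phi_2(v) > N$ gives $\phi_1(v) > w_1' + (\phi_2(v)-N) > w_1'$, directly contradicting $w_1' > \phi_1(v)$. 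The main subtlety is engineering a single $v=(v_1,v_2)$ that lies in $T_R^V(w)$ and $B_L^V(w')$ simultaneously, which is only possible because Lemma \ref{lem:Pi2MinusPi1IndependentOfW1} forces the diagonal-line offset $\pi_2-\pi_1$ to be the same $c_N$ at both bids, opening up a nonempty window $(v_1+c_N, \pi_2(w'))$ for $v_2$.
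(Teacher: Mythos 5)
Your proof is correct and takes essentially the same route as the paper: assume two thresholds disagree, invoke Lemma~\ref{lem:Pi2MinusPi1IndependentOfW1} to conclude they both move in the same direction, construct a single $v$ in $T_R^V(w)\cap B_L^V(w')$ so that bidder $V$ flips from $\{1,2\}$ to $\emptyset$, use Lemmas~\ref{lem:Pi1NotZero} and~\ref{lem:item2_always_allocated} to force bidder $W$ to flip from $\emptyset$ to $\{1,2\}$, and then show this is incompatible with $W$'s region characterization at fixed $w_2=N$. The one place you deviate is the epilogue: the paper first lists which region-pairs are reachable by sliding $w_1$ with $w_2$ fixed, isolates $B_L^W\to B_R^W$ as the only candidate move for $\emptyset\to\{1,2\}$, and contradicts that; you instead land on $w\in B_L^W(v)$ and $w'\in T_R^W(v)$ directly (after ruling out $B_R^W$ since $\phi_1=\phi_2$ together with $w_2'=N>\epsilon>w_1'$ makes it unreachable) and derive $\phi_1(v)>w_1'$ against $w_1'>\phi_1(v)$ by combining $\phi_2(v)>N$ with the $T_R^W$ slope condition. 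That is a slightly cleaner way to close the argument, but it is the same structural idea, and both versions implicitly rely on the fact that, for fixed $w_2$, one cannot move between $B_L^W$ and $T_R^W$.
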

\begin{proof}
Consider any two points $w, w' \in S_{N,\epsilon}$. We show that any mechanism that gets an $H$-approximation has $\pi_1(w) =  \pi_1(w')$ and $\pi_2(w) =  \pi_2(w')$  for all such $w$ and $w'$. Suppose not.
\begin{enumerate}
\item  
Lemma~\ref{lem:Pi2MinusPi1IndependentOfW1} says that $\pi_2(w) - \pi_1(w) = \pi_2(w') - \pi_1(w')$. This means that either $\pi_2(w') > \pi_2(w)$ and $\pi_1(w') > \pi_1(w)$, or $\pi_2(w') < \pi_2(w)$ and $\pi_1(w') < \pi_1(w)$. W.l.o.g. we assume that we are in the former case, namely, $\pi_2(w') > \pi_2(w)$ and $\pi_1(w') > \pi_1(w)$.
\item The former statement implies that there exists a point $v \in \R_+^2$ such that  $\pi_1(w) < v_1 < \pi_1(w')$ and $\pi_2(w) < v_2 < \pi_2(w')$, and $v_2 - v_1 > \pi_2(w) - \pi_1(w)$. We also have $\max\{\pi_1(w),\pi_1(w')\} < \epsilon H$ from \prettyref{lem:item2_always_allocated} (point $A$ in Figure \ref{fig:const2}). 
\item Such a point $v$ is included in $T_R^V(w)$ and $B_L^V(w')$. Thus, as the bid of bidder $W$ changes from $w$ to $w'$, by Corollary~\ref{cor:AllocationRegions}, the allocation of bidder $V$ for such a point $v$ changes from $\{1,2\}$ (while in $T_R^V$) to $\emptyset$ (while in $B_L^V$).
\item Correspondingly, the allocation for bidder $W$ when changing his bid from $w$ to $w'$ could have changed from $\emptyset$ to $\emptyset$, or $\emptyset$ to $\{1\}$, or $\emptyset$ to $\{2\}$ or $\emptyset$ to $\{1,2\}$. 
From \Cref{lem:Pi1NotZero,lem:item2_always_allocated}, we should always allocate both items for the bid ranges we have considered. Thus, the only possible allocation change for bidder $W$ as his bid changes from $w$ to $w'$ is from $\emptyset$ to $\{1,2\}$. 
\item Based on the definition of the $4$ sets it follows immediately that 
when $w_2$ is held a constant and $w_1$ is varied,  the point $w$ moves either between $B_L^W$ and $B_R^W$ or between $T_L^W$ and $T_R^W$, or between $T_L^W$ and $B_R^W$. 
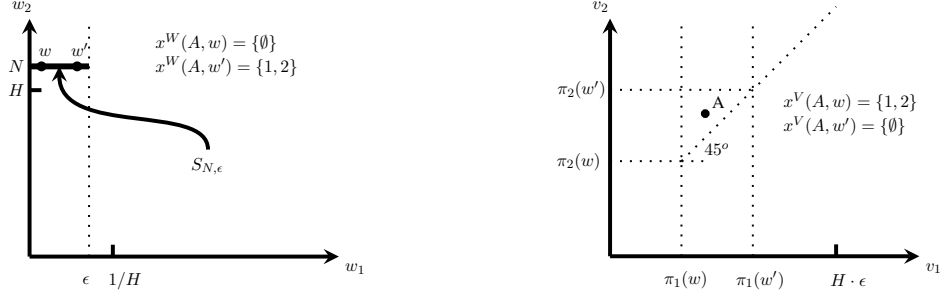
\begin{figure}[h]
\centering
\begin{subfigure}[b]{0.45\textwidth}
\centering
\ifx\du\undefined
  \newlength{\du}
\fi
\setlength{\du}{15\unitlength}
\begin{tikzpicture}
\pgftransformxscale{1.000000}
\pgftransformyscale{-1.000000}
\definecolor{dialinecolor}{rgb}{0.000000, 0.000000, 0.000000}
\pgfsetstrokecolor{dialinecolor}
\definecolor{dialinecolor}{rgb}{1.000000, 1.000000, 1.000000}
\pgfsetfillcolor{dialinecolor}
\pgfsetlinewidth{0.100000\du}
\pgfsetdash{}{0pt}
\pgfsetdash{}{0pt}
\pgfsetbuttcap
{
\definecolor{dialinecolor}{rgb}{0.000000, 0.000000, 0.000000}
\pgfsetfillcolor{dialinecolor}
\pgfsetarrowsend{stealth}
\definecolor{dialinecolor}{rgb}{0.000000, 0.000000, 0.000000}
\pgfsetstrokecolor{dialinecolor}
\draw (4.000000\du,17.000000\du)--(17.000000\du,17.000000\du);
}
\pgfsetlinewidth{0.100000\du}
\pgfsetdash{}{0pt}
\pgfsetdash{}{0pt}
\pgfsetbuttcap
{
\definecolor{dialinecolor}{rgb}{0.000000, 0.000000, 0.000000}
\pgfsetfillcolor{dialinecolor}
\pgfsetarrowsend{stealth}
\definecolor{dialinecolor}{rgb}{0.000000, 0.000000, 0.000000}
\pgfsetstrokecolor{dialinecolor}
\draw (4.000000\du,17.000000\du)--(4.000000\du,7.000000\du);
}
\pgfsetlinewidth{0.0500000\du}
\pgfsetdash{{\pgflinewidth}{0.200000\du}}{0cm}
\pgfsetdash{{\pgflinewidth}{0.200000\du}}{0cm}
\pgfsetbuttcap
{
\definecolor{dialinecolor}{rgb}{0.000000, 0.000000, 0.000000}
\pgfsetfillcolor{dialinecolor}
\definecolor{dialinecolor}{rgb}{0.000000, 0.000000, 0.000000}
\pgfsetstrokecolor{dialinecolor}
\draw (6.500000\du,17.000000\du)--(6.500000\du,7.000000\du);
}
\definecolor{dialinecolor}{rgb}{0.000000, 0.000000, 0.000000}
\pgfsetstrokecolor{dialinecolor}
\node[anchor=west] at (5.000000\du,10.000000\du){};
\definecolor{dialinecolor}{rgb}{0.000000, 0.000000, 0.000000}
\pgfsetstrokecolor{dialinecolor}
\node[anchor=west] at (17.000000\du,17.500000\du){$w_1$};
\definecolor{dialinecolor}{rgb}{0.000000, 0.000000, 0.000000}
\pgfsetstrokecolor{dialinecolor}
\node[anchor=west] at (3.000000\du,6.500000\du){$w_2$};
\pgfsetlinewidth{0.150000\du}
\pgfsetdash{}{0pt}
\pgfsetdash{}{0pt}
\pgfsetbuttcap
{
\definecolor{dialinecolor}{rgb}{0.000000, 0.000000, 0.000000}
\pgfsetfillcolor{dialinecolor}
\definecolor{dialinecolor}{rgb}{0.000000, 0.000000, 0.000000}
\pgfsetstrokecolor{dialinecolor}
\draw (4.000000\du,9.000000\du)--(6.500000\du,9.000000\du);
}
\definecolor{dialinecolor}{rgb}{0.000000, 0.000000, 0.000000}
\pgfsetstrokecolor{dialinecolor}
\node[anchor=west] at (5.900000\du,18.000000\du){$\epsilon$};
\definecolor{dialinecolor}{rgb}{0.000000, 0.000000, 0.000000}
\pgfsetstrokecolor{dialinecolor}
\node[anchor=west] at (2.800000\du,9.000000\du){$N$};
\definecolor{dialinecolor}{rgb}{0.000000, 0.000000, 0.000000}
\pgfsetstrokecolor{dialinecolor}
\node[anchor=west] at (7.000000\du,18.000000\du){$1/H$};
\definecolor{dialinecolor}{rgb}{0.000000, 0.000000, 0.000000}
\pgfsetstrokecolor{dialinecolor}
\node[anchor=west] at (2.8000000\du,10.000000\du){$H$};
\definecolor{dialinecolor}{rgb}{0.000000, 0.000000, 0.000000}
\pgfsetfillcolor{dialinecolor}
\pgfpathellipse{\pgfpoint{6.000000\du}{9.000000\du}}{\pgfpoint{0.125000\du}{0\du}}{\pgfpoint{0\du}{0.125000\du}}
\pgfusepath{fill}
\pgfsetlinewidth{0.100000\du}
\pgfsetdash{}{0pt}
\pgfsetdash{}{0pt}
\definecolor{dialinecolor}{rgb}{0.000000, 0.000000, 0.000000}
\pgfsetstrokecolor{dialinecolor}
\pgfpathellipse{\pgfpoint{6.000000\du}{9.000000\du}}{\pgfpoint{0.125000\du}{0\du}}{\pgfpoint{0\du}{0.125000\du}}
\pgfusepath{stroke}
\definecolor{dialinecolor}{rgb}{0.000000, 0.000000, 0.000000}
\pgfsetstrokecolor{dialinecolor}
\node[anchor=west] at (4.100000\du,8.400000\du){$w$};
\definecolor{dialinecolor}{rgb}{0.000000, 0.000000, 0.000000}
\pgfsetstrokecolor{dialinecolor}
\node[anchor=west] at (5.4500000\du,8.300000\du){$w'$};
\pgfsetlinewidth{0.100000\du}
\pgfsetdash{}{0pt}
\pgfsetdash{}{0pt}
\pgfsetbuttcap
{
\definecolor{dialinecolor}{rgb}{0.000000, 0.000000, 0.000000}
\pgfsetfillcolor{dialinecolor}
\definecolor{dialinecolor}{rgb}{0.000000, 0.000000, 0.000000}
\pgfsetstrokecolor{dialinecolor}
\draw (7.500000\du,17.000000\du)--(7.500000\du,16.500000\du);
}
\pgfsetlinewidth{0.100000\du}
\pgfsetdash{}{0pt}
\pgfsetdash{}{0pt}
\pgfsetmiterjoin
\pgfsetbuttcap
{
\definecolor{dialinecolor}{rgb}{0.000000, 0.000000, 0.000000}
\pgfsetfillcolor{dialinecolor}
\pgfsetarrowsend{stealth}
\definecolor{dialinecolor}{rgb}{0.000000, 0.000000, 0.000000}
\pgfsetstrokecolor{dialinecolor}
\pgfpathmoveto{\pgfpoint{11.500000\du}{12.500000\du}}
\pgfpathcurveto{\pgfpoint{11.500000\du}{10.500000\du}}{\pgfpoint{5.250000\du}{12.000000\du}}{\pgfpoint{5.250000\du}{9.000000\du}}
\pgfusepath{stroke}
}
\definecolor{dialinecolor}{rgb}{0.000000, 0.000000, 0.000000}
\pgfsetstrokecolor{dialinecolor}
\node[anchor=west] at (10.500000\du,13.200000\du){$S_{N,\epsilon}$};
\pgfsetlinewidth{0.100000\du}
\pgfsetdash{}{0pt}
\pgfsetdash{}{0pt}
\pgfsetbuttcap
{
\definecolor{dialinecolor}{rgb}{0.000000, 0.000000, 0.000000}
\pgfsetfillcolor{dialinecolor}
\definecolor{dialinecolor}{rgb}{0.000000, 0.000000, 0.000000}
\pgfsetstrokecolor{dialinecolor}
\draw (4.500000\du,10.000000\du)--(4.000000\du,10.000000\du);
}
\definecolor{dialinecolor}{rgb}{0.000000, 0.000000, 0.000000}
\pgfsetfillcolor{dialinecolor}
\pgfpathellipse{\pgfpoint{4.500000\du}{9.000000\du}}{\pgfpoint{0.125000\du}{0\du}}{\pgfpoint{0\du}{0.125000\du}}
\pgfusepath{fill}
\pgfsetlinewidth{0.100000\du}
\pgfsetdash{}{0pt}
\pgfsetdash{}{0pt}
\definecolor{dialinecolor}{rgb}{0.000000, 0.000000, 0.000000}
\pgfsetstrokecolor{dialinecolor}
\pgfpathellipse{\pgfpoint{4.500000\du}{9.000000\du}}{\pgfpoint{0.125000\du}{0\du}}{\pgfpoint{0\du}{0.125000\du}}
\pgfusepath{stroke}
\node[anchor=west] at (9.000000\du,8.000000\du){$x^W(A,w)=\{\emptyset\}$};
\node[anchor=west] at (9.000000\du,9.000000\du){$x^W(A,w')=\{1,2\}$};
\end{tikzpicture}
\end{subfigure}
\begin{subfigure}[b]{0.45\textwidth}
\centering
\ifx\du\undefined
  \newlength{\du}
\fi
\setlength{\du}{15\unitlength}
\begin{tikzpicture}
\pgftransformxscale{1.000000}
\pgftransformyscale{-1.000000}
\definecolor{dialinecolor}{rgb}{0.000000, 0.000000, 0.000000}
\pgfsetstrokecolor{dialinecolor}
\definecolor{dialinecolor}{rgb}{1.000000, 1.000000, 1.000000}
\pgfsetfillcolor{dialinecolor}
\pgfsetlinewidth{0.100000\du}
\pgfsetdash{}{0pt}
\pgfsetdash{}{0pt}
\pgfsetbuttcap
{
\definecolor{dialinecolor}{rgb}{0.000000, 0.000000, 0.000000}
\pgfsetfillcolor{dialinecolor}
\pgfsetarrowsend{stealth}
\definecolor{dialinecolor}{rgb}{0.000000, 0.000000, 0.000000}
\pgfsetstrokecolor{dialinecolor}
\draw (4.000000\du,17.000000\du)--(17.000000\du,17.000000\du);
}
\pgfsetlinewidth{0.100000\du}
\pgfsetdash{}{0pt}
\pgfsetdash{}{0pt}
\pgfsetbuttcap
{
\definecolor{dialinecolor}{rgb}{0.000000, 0.000000, 0.000000}
\pgfsetfillcolor{dialinecolor}
\pgfsetarrowsend{stealth}
\definecolor{dialinecolor}{rgb}{0.000000, 0.000000, 0.000000}
\pgfsetstrokecolor{dialinecolor}
\draw (4.000000\du,17.000000\du)--(4.000000\du,7.000000\du);
}
\pgfsetlinewidth{0.0500000\du}
\pgfsetdash{{\pgflinewidth}{0.200000\du}}{0cm}
\pgfsetdash{{\pgflinewidth}{0.200000\du}}{0cm}
\pgfsetbuttcap
{
\definecolor{dialinecolor}{rgb}{0.000000, 0.000000, 0.000000}
\pgfsetfillcolor{dialinecolor}
\definecolor{dialinecolor}{rgb}{0.000000, 0.000000, 0.000000}
\pgfsetstrokecolor{dialinecolor}
\draw (7.000000\du,17.000000\du)--(7.000000\du,7.000000\du);
}
\definecolor{dialinecolor}{rgb}{0.000000, 0.000000, 0.000000}
\pgfsetstrokecolor{dialinecolor}
\node[anchor=west] at (6.000000\du,18.000000\du){$\pi_1(w)$};
\definecolor{dialinecolor}{rgb}{0.000000, 0.000000, 0.000000}
\pgfsetstrokecolor{dialinecolor}
\node[anchor=west] at (5.000000\du,10.000000\du){};
\definecolor{dialinecolor}{rgb}{0.000000, 0.000000, 0.000000}
\pgfsetstrokecolor{dialinecolor}
\node[anchor=west] at (17.000000\du,17.500000\du){$v_1$};
\definecolor{dialinecolor}{rgb}{0.000000, 0.000000, 0.000000}
\pgfsetstrokecolor{dialinecolor}
\node[anchor=west] at (3.000000\du,6.500000\du){$v_2$};
\pgfsetlinewidth{0.0500000\du}
\pgfsetdash{{\pgflinewidth}{0.200000\du}}{0cm}
\pgfsetdash{{\pgflinewidth}{0.200000\du}}{0cm}
\pgfsetbuttcap
{
\definecolor{dialinecolor}{rgb}{0.000000, 0.000000, 0.000000}
\pgfsetfillcolor{dialinecolor}
\definecolor{dialinecolor}{rgb}{0.000000, 0.000000, 0.000000}
\pgfsetstrokecolor{dialinecolor}
\draw (4.000000\du,13.000000\du)--(7.000000\du,13.000000\du);
}
\definecolor{dialinecolor}{rgb}{0.000000, 0.000000, 0.000000}
\pgfsetstrokecolor{dialinecolor}
\node[anchor=west] at (1.500000\du,13.000000\du){$\pi_2(w)$};
\pgfsetlinewidth{0.0500000\du}
\pgfsetdash{{\pgflinewidth}{0.200000\du}}{0cm}
\pgfsetdash{{\pgflinewidth}{0.200000\du}}{0cm}
\pgfsetbuttcap
{
\definecolor{dialinecolor}{rgb}{0.000000, 0.000000, 0.000000}
\pgfsetfillcolor{dialinecolor}
\definecolor{dialinecolor}{rgb}{0.000000, 0.000000, 0.000000}
\pgfsetstrokecolor{dialinecolor}
\draw (7.000000\du,13.000000\du)--(13.500000\du,6.500000\du);
}
\pgfsetlinewidth{0.0500000\du}
\pgfsetdash{{\pgflinewidth}{0.200000\du}}{0cm}
\pgfsetdash{{\pgflinewidth}{0.200000\du}}{0cm}
\pgfsetbuttcap
{
\definecolor{dialinecolor}{rgb}{0.000000, 0.000000, 0.000000}
\pgfsetfillcolor{dialinecolor}
\definecolor{dialinecolor}{rgb}{0.000000, 0.000000, 0.000000}
\pgfsetstrokecolor{dialinecolor}
\draw (7.000000\du,13.000000\du)--(8.000000\du,13.000000\du);
}
\definecolor{dialinecolor}{rgb}{0.000000, 0.000000, 0.000000}
\pgfsetstrokecolor{dialinecolor}
\node[anchor=west] at (7.700000\du,12.500000\du){$45^o$};
\definecolor{dialinecolor}{rgb}{0.000000, 0.000000, 0.000000}
\pgfsetstrokecolor{dialinecolor}
\node[anchor=west] at (13.000000\du,18.000000\du){$H\cdot \epsilon$};
\pgfsetlinewidth{0.100000\du}
\pgfsetdash{}{0pt}
\pgfsetdash{}{0pt}
\pgfsetbuttcap
{
\definecolor{dialinecolor}{rgb}{0.000000, 0.000000, 0.000000}
\pgfsetfillcolor{dialinecolor}
\definecolor{dialinecolor}{rgb}{0.000000, 0.000000, 0.000000}
\pgfsetstrokecolor{dialinecolor}
\draw (13.500000\du,17.000000\du)--(13.500000\du,16.500000\du);
}
\definecolor{dialinecolor}{rgb}{0.000000, 0.000000, 0.000000}
\pgfsetstrokecolor{dialinecolor}
\node[anchor=west] at (1.500000\du,10.000000\du){$\pi_2(w')$};
\definecolor{dialinecolor}{rgb}{0.000000, 0.000000, 0.000000}
\pgfsetstrokecolor{dialinecolor}
\node[anchor=west] at (9.000000\du,18.000000\du){$\pi_1(w')$};
\pgfsetlinewidth{0.0500000\du}
\pgfsetdash{{\pgflinewidth}{0.200000\du}}{0cm}
\pgfsetdash{{\pgflinewidth}{0.200000\du}}{0cm}
\pgfsetbuttcap
{
\definecolor{dialinecolor}{rgb}{0.000000, 0.000000, 0.000000}
\pgfsetfillcolor{dialinecolor}
\definecolor{dialinecolor}{rgb}{0.000000, 0.000000, 0.000000}
\pgfsetstrokecolor{dialinecolor}
\draw (4.000000\du,10.000000\du)--(10.000000\du,10.000000\du);
}
\pgfsetlinewidth{0.0500000\du}
\pgfsetdash{{\pgflinewidth}{0.200000\du}}{0cm}
\pgfsetdash{{\pgflinewidth}{0.200000\du}}{0cm}
\pgfsetbuttcap
{
\definecolor{dialinecolor}{rgb}{0.000000, 0.000000, 0.000000}
\pgfsetfillcolor{dialinecolor}
\definecolor{dialinecolor}{rgb}{0.000000, 0.000000, 0.000000}
\pgfsetstrokecolor{dialinecolor}
\draw (10.000000\du,17.000000\du)--(10.000000\du,7.000000\du);
}
\definecolor{dialinecolor}{rgb}{0.000000, 0.000000, 0.000000}
\pgfsetfillcolor{dialinecolor}
\pgfpathellipse{\pgfpoint{8.000000\du}{11.000000\du}}{\pgfpoint{0.10000\du}{0\du}}{\pgfpoint{0\du}{0.10000\du}}
\pgfusepath{fill}
\pgfsetlinewidth{0.100000\du}
\pgfsetdash{}{0pt}
\pgfsetdash{}{0pt}
\definecolor{dialinecolor}{rgb}{0.000000, 0.000000, 0.000000}
\pgfsetstrokecolor{dialinecolor}
\pgfpathellipse{\pgfpoint{8.000000\du}{11.000000\du}}{\pgfpoint{0.10000\du}{0\du}}{\pgfpoint{0\du}{0.10000\du}}
\pgfusepath{stroke}
\definecolor{dialinecolor}{rgb}{0.000000, 0.000000, 0.000000}
\pgfsetstrokecolor{dialinecolor}
\node[anchor=west] at (8.000000\du,10.500000\du){A};
\definecolor{dialinecolor}{rgb}{0.000000, 0.000000, 0.000000}
\pgfsetstrokecolor{dialinecolor}
\node[anchor=west] at (11.000000\du,10.500000\du){$x^V(A,w)=\{1,2\}$};
\definecolor{dialinecolor}{rgb}{0.000000, 0.000000, 0.000000}
\pgfsetstrokecolor{dialinecolor}
\node[anchor=west] at (11.000000\du,11.500000\du){$x^V(A,w')=\{\emptyset\}$};
\end{tikzpicture}
\end{subfigure}
\caption{If the thresholds for player $v$ change as the value of $W$ ranges over $S_{N,\epsilon}$, e.g. from $w$ to $w'$, then when $V$ has type $A$
the allocation of $W$ switches from the $\emptyset$, under $w$, to winning both items, under $w'$. The latter is impossible, given our characterization of feasible allocation functions.}\label{fig:impos2}
\end{figure}
\item Our characterization in Corollary~\ref{cor:AllocationRegions} shows that only one of the three moves mentioned above could possibly result in an allocation change between $\emptyset$ and $\{1,2\}$, namely from $B_L^W$ to $B_R^W$. But $B_R^W$ has an allocation of $\{1,2\}$ only when $\phi_2 = \phi_1$. The point $w$ being in $B_L^W$ and $w'$ in $B_R^W$ means that $w_1 < \phi_1 = \phi_2 < w_1'$. But note that $w_1, w_1' < \epsilon$, where as $w_2 = w_2' \geq H >> \epsilon$. On the other hand since $w \in B_L^W$ it means that $\phi_2 > w_2 \geq H$. It is impossible to have the following conditions satisfied:
\begin{itemize}
\item $\phi_2 = \phi_1$ 
\item $w_2 = w_2' \geq H$ and $0 < w_1, w_1' < \epsilon$
\item $w_2 < \phi_2$ and $w_1 < \phi_1 = \phi_2 < w_1'$
\end{itemize}

\item This means that our contrary assumption can't hold. This proves the Lemma, namely, $\pi_2(w) = \pi_2(w')$ and $\pi_1(w) = \pi_1(w')$. 
\end{enumerate}
\end{proof}

\begin{proofof}{Main Theorem}
From Corollary~\ref{cor:AllocationRegions} it is intuitively clear that once $\pi_2(w)$ and $\pi_1(w)$ are determined, they almost entirely determine the allocation function in the type-space of bidder $V$. But now Lemma~\ref{lem:PisAreFrozen} (basically Equation~\ref{eq:PisAreFrozen}) shows that $\pi_2$ and $\pi_1$ remain unchanged when $w$ moves in the set $S_{N,\epsilon}$. This therefore means, when $w$ moves in the set $S_{N,\epsilon}$, the allocation to bidder $V$ for any given bid $v \in (0,H\epsilon)\times \R_+$ (except for the indifference points in the boundary region) remains unchanged. Therefore, the allocation of bidder $W$, which is simply the complement of the allocation to bidder $V$, also remains unchanged, as $w$ moves in $S_{N,\epsilon}$. This leads to an unbounded approximation factor. 
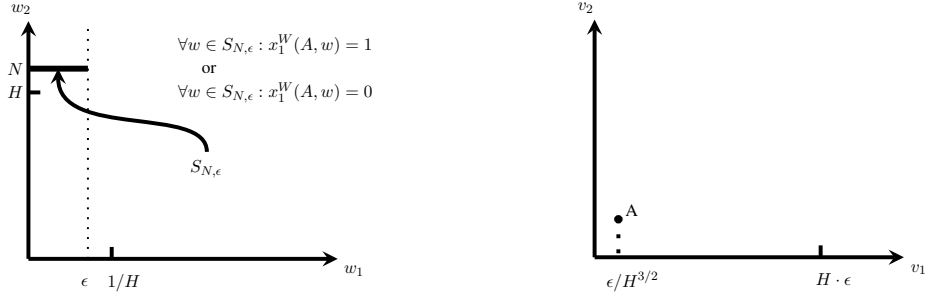
\begin{figure}[h]
\centering
\begin{subfigure}[b]{0.45\textwidth}
\centering
\ifx\du\undefined
  \newlength{\du}
\fi
\setlength{\du}{15\unitlength}
\begin{tikzpicture}
\pgftransformxscale{1.000000}
\pgftransformyscale{-1.000000}
\definecolor{dialinecolor}{rgb}{0.000000, 0.000000, 0.000000}
\pgfsetstrokecolor{dialinecolor}
\definecolor{dialinecolor}{rgb}{1.000000, 1.000000, 1.000000}
\pgfsetfillcolor{dialinecolor}
\pgfsetlinewidth{0.100000\du}
\pgfsetdash{}{0pt}
\pgfsetdash{}{0pt}
\pgfsetbuttcap
{
\definecolor{dialinecolor}{rgb}{0.000000, 0.000000, 0.000000}
\pgfsetfillcolor{dialinecolor}
\pgfsetarrowsend{stealth}
\definecolor{dialinecolor}{rgb}{0.000000, 0.000000, 0.000000}
\pgfsetstrokecolor{dialinecolor}
\draw (4.000000\du,17.000000\du)--(17.000000\du,17.000000\du);
}
\pgfsetlinewidth{0.100000\du}
\pgfsetdash{}{0pt}
\pgfsetdash{}{0pt}
\pgfsetbuttcap
{
\definecolor{dialinecolor}{rgb}{0.000000, 0.000000, 0.000000}
\pgfsetfillcolor{dialinecolor}
\pgfsetarrowsend{stealth}
\definecolor{dialinecolor}{rgb}{0.000000, 0.000000, 0.000000}
\pgfsetstrokecolor{dialinecolor}
\draw (4.000000\du,17.000000\du)--(4.000000\du,7.000000\du);
}
\pgfsetlinewidth{0.0500000\du}
\pgfsetdash{{\pgflinewidth}{0.200000\du}}{0cm}
\pgfsetdash{{\pgflinewidth}{0.200000\du}}{0cm}
\pgfsetbuttcap
{
\definecolor{dialinecolor}{rgb}{0.000000, 0.000000, 0.000000}
\pgfsetfillcolor{dialinecolor}
\definecolor{dialinecolor}{rgb}{0.000000, 0.000000, 0.000000}
\pgfsetstrokecolor{dialinecolor}
\draw (6.500000\du,17.000000\du)--(6.500000\du,7.000000\du);
}
\definecolor{dialinecolor}{rgb}{0.000000, 0.000000, 0.000000}
\pgfsetstrokecolor{dialinecolor}
\node[anchor=west] at (5.000000\du,10.000000\du){};
\definecolor{dialinecolor}{rgb}{0.000000, 0.000000, 0.000000}
\pgfsetstrokecolor{dialinecolor}
\node[anchor=west] at (17.000000\du,17.500000\du){$w_1$};
\definecolor{dialinecolor}{rgb}{0.000000, 0.000000, 0.000000}
\pgfsetstrokecolor{dialinecolor}
\node[anchor=west] at (3.000000\du,6.500000\du){$w_2$};
\pgfsetlinewidth{0.150000\du}
\pgfsetdash{}{0pt}
\pgfsetdash{}{0pt}
\pgfsetbuttcap
{
\definecolor{dialinecolor}{rgb}{0.000000, 0.000000, 0.000000}
\pgfsetfillcolor{dialinecolor}
\definecolor{dialinecolor}{rgb}{0.000000, 0.000000, 0.000000}
\pgfsetstrokecolor{dialinecolor}
\draw (4.000000\du,9.000000\du)--(6.500000\du,9.000000\du);
}
\definecolor{dialinecolor}{rgb}{0.000000, 0.000000, 0.000000}
\pgfsetstrokecolor{dialinecolor}
\node[anchor=west] at (5.900000\du,18.000000\du){$\epsilon$};
\definecolor{dialinecolor}{rgb}{0.000000, 0.000000, 0.000000}
\pgfsetstrokecolor{dialinecolor}
\node[anchor=west] at (2.800000\du,9.000000\du){$N$};
\definecolor{dialinecolor}{rgb}{0.000000, 0.000000, 0.000000}
\pgfsetstrokecolor{dialinecolor}
\node[anchor=west] at (7.000000\du,18.000000\du){$1/H$};
\definecolor{dialinecolor}{rgb}{0.000000, 0.000000, 0.000000}
\pgfsetstrokecolor{dialinecolor}
\node[anchor=west] at (2.8000000\du,10.000000\du){$H$};
\definecolor{dialinecolor}{rgb}{0.000000, 0.000000, 0.000000}
\pgfsetfillcolor{dialinecolor}
\pgfsetlinewidth{0.100000\du}
\pgfsetdash{}{0pt}
\pgfsetdash{}{0pt}
\definecolor{dialinecolor}{rgb}{0.000000, 0.000000, 0.000000}
\pgfsetstrokecolor{dialinecolor}
\pgfusepath{stroke}
\definecolor{dialinecolor}{rgb}{0.000000, 0.000000, 0.000000}
\pgfsetstrokecolor{dialinecolor}
\definecolor{dialinecolor}{rgb}{0.000000, 0.000000, 0.000000}
\pgfsetstrokecolor{dialinecolor}
\pgfsetlinewidth{0.100000\du}
\pgfsetdash{}{0pt}
\pgfsetdash{}{0pt}
\pgfsetbuttcap
{
\definecolor{dialinecolor}{rgb}{0.000000, 0.000000, 0.000000}
\pgfsetfillcolor{dialinecolor}
\definecolor{dialinecolor}{rgb}{0.000000, 0.000000, 0.000000}
\pgfsetstrokecolor{dialinecolor}
\draw (7.500000\du,17.000000\du)--(7.500000\du,16.500000\du);
}
\pgfsetlinewidth{0.100000\du}
\pgfsetdash{}{0pt}
\pgfsetdash{}{0pt}
\pgfsetmiterjoin
\pgfsetbuttcap
{
\definecolor{dialinecolor}{rgb}{0.000000, 0.000000, 0.000000}
\pgfsetfillcolor{dialinecolor}
\pgfsetarrowsend{stealth}
\definecolor{dialinecolor}{rgb}{0.000000, 0.000000, 0.000000}
\pgfsetstrokecolor{dialinecolor}
\pgfpathmoveto{\pgfpoint{11.500000\du}{12.500000\du}}
\pgfpathcurveto{\pgfpoint{11.500000\du}{10.500000\du}}{\pgfpoint{5.250000\du}{12.000000\du}}{\pgfpoint{5.250000\du}{9.000000\du}}
\pgfusepath{stroke}
}
\definecolor{dialinecolor}{rgb}{0.000000, 0.000000, 0.000000}
\pgfsetstrokecolor{dialinecolor}
\node[anchor=west] at (10.500000\du,13.200000\du){$S_{N,\epsilon}$};
\pgfsetlinewidth{0.100000\du}
\pgfsetdash{}{0pt}
\pgfsetdash{}{0pt}
\pgfsetbuttcap
{
\definecolor{dialinecolor}{rgb}{0.000000, 0.000000, 0.000000}
\pgfsetfillcolor{dialinecolor}
\definecolor{dialinecolor}{rgb}{0.000000, 0.000000, 0.000000}
\pgfsetstrokecolor{dialinecolor}
\draw (4.500000\du,10.000000\du)--(4.000000\du,10.000000\du);
}
\definecolor{dialinecolor}{rgb}{0.000000, 0.000000, 0.000000}
\pgfsetfillcolor{dialinecolor}
\pgfsetlinewidth{0.100000\du}
\pgfsetdash{}{0pt}
\pgfsetdash{}{0pt}
\definecolor{dialinecolor}{rgb}{0.000000, 0.000000, 0.000000}
\pgfsetstrokecolor{dialinecolor}
\pgfusepath{stroke}

\definecolor{dialinecolor}{rgb}{0.000000, 0.000000, 0.000000}
\pgfsetstrokecolor{dialinecolor}
\node[anchor=west] at (10.000000\du,8.000000\du){$\forall w\in S_{N,\epsilon}: x_1^W(A,w)=1$};
\definecolor{dialinecolor}{rgb}{0.000000, 0.000000, 0.000000}
\pgfsetstrokecolor{dialinecolor}
\node[anchor=west] at (11.000000\du,9.000000\du){or};
\definecolor{dialinecolor}{rgb}{0.000000, 0.000000, 0.000000}
\pgfsetstrokecolor{dialinecolor}
\node[anchor=west] at (10.000000\du,10.000000\du){$\forall w\in S_{N,\epsilon}: x_1^W(A,w)=0$};
\end{tikzpicture}
\end{subfigure}
\begin{subfigure}[b]{0.45\textwidth}
\centering
\ifx\du\undefined
  \newlength{\du}
\fi
\setlength{\du}{15\unitlength}
\begin{tikzpicture}
\pgftransformxscale{1.000000}
\pgftransformyscale{-1.000000}
\definecolor{dialinecolor}{rgb}{0.000000, 0.000000, 0.000000}
\pgfsetstrokecolor{dialinecolor}
\definecolor{dialinecolor}{rgb}{1.000000, 1.000000, 1.000000}
\pgfsetfillcolor{dialinecolor}
\pgfsetlinewidth{0.100000\du}
\pgfsetdash{}{0pt}
\pgfsetdash{}{0pt}
\pgfsetbuttcap
{
\definecolor{dialinecolor}{rgb}{0.000000, 0.000000, 0.000000}
\pgfsetfillcolor{dialinecolor}
\pgfsetarrowsend{stealth}
\definecolor{dialinecolor}{rgb}{0.000000, 0.000000, 0.000000}
\pgfsetstrokecolor{dialinecolor}
\draw (4.000000\du,17.000000\du)--(17.000000\du,17.000000\du);
}
\pgfsetlinewidth{0.100000\du}
\pgfsetdash{}{0pt}
\pgfsetdash{}{0pt}
\pgfsetbuttcap
{
\definecolor{dialinecolor}{rgb}{0.000000, 0.000000, 0.000000}
\pgfsetfillcolor{dialinecolor}
\pgfsetarrowsend{stealth}
\definecolor{dialinecolor}{rgb}{0.000000, 0.000000, 0.000000}
\pgfsetstrokecolor{dialinecolor}
\draw (4.000000\du,17.000000\du)--(4.000000\du,7.000000\du);
}
\definecolor{dialinecolor}{rgb}{0.000000, 0.000000, 0.000000}
\pgfsetstrokecolor{dialinecolor}
\node[anchor=west] at (5.000000\du,10.000000\du){};
\definecolor{dialinecolor}{rgb}{0.000000, 0.000000, 0.000000}
\pgfsetstrokecolor{dialinecolor}
\node[anchor=west] at (17.000000\du,17.500000\du){$v_1$};
\definecolor{dialinecolor}{rgb}{0.000000, 0.000000, 0.000000}
\pgfsetstrokecolor{dialinecolor}
\node[anchor=west] at (3.000000\du,6.500000\du){$v_2$};
\definecolor{dialinecolor}{rgb}{0.000000, 0.000000, 0.000000}
\pgfsetstrokecolor{dialinecolor}
\node[anchor=west] at (13.000000\du,17.900000\du){$H\cdot \epsilon$};
\pgfsetlinewidth{0.100000\du}
\pgfsetdash{}{0pt}
\pgfsetdash{}{0pt}
\pgfsetbuttcap
{
\definecolor{dialinecolor}{rgb}{0.000000, 0.000000, 0.000000}
\pgfsetfillcolor{dialinecolor}
\definecolor{dialinecolor}{rgb}{0.000000, 0.000000, 0.000000}
\pgfsetstrokecolor{dialinecolor}
\draw (13.500000\du,17.000000\du)--(13.500000\du,16.500000\du);
}
\definecolor{dialinecolor}{rgb}{0.000000, 0.000000, 0.000000}
\pgfsetfillcolor{dialinecolor}
\pgfpathellipse{\pgfpoint{5.000000\du}{15.400000\du}}{\pgfpoint{0.095134\du}{0\du}}{\pgfpoint{0\du}{0.089598\du}}
\pgfusepath{fill}
\pgfsetlinewidth{0.100000\du}
\pgfsetdash{}{0pt}
\pgfsetdash{}{0pt}
\definecolor{dialinecolor}{rgb}{0.000000, 0.000000, 0.000000}
\pgfsetstrokecolor{dialinecolor}
\pgfpathellipse{\pgfpoint{5.000000\du}{15.400000\du}}{\pgfpoint{0.095134\du}{0\du}}{\pgfpoint{0\du}{0.089598\du}}
\pgfusepath{stroke}
\pgfsetlinewidth{0.100000\du}
\pgfsetdash{{\pgflinewidth}{0.200000\du}}{0cm}
\pgfsetdash{{\pgflinewidth}{0.280000\du}}{0cm}
\pgfsetbuttcap
{
\definecolor{dialinecolor}{rgb}{0.000000, 0.000000, 0.000000}
\pgfsetfillcolor{dialinecolor}
\definecolor{dialinecolor}{rgb}{0.000000, 0.000000, 0.000000}
\pgfsetstrokecolor{dialinecolor}
\draw (5.000000\du,15.400000\du)--(5.000000\du,17.000000\du);
}
\definecolor{dialinecolor}{rgb}{0.000000, 0.000000, 0.000000}
\pgfsetstrokecolor{dialinecolor}
\node[anchor=west] at (4.200000\du,18.000000\du){$\epsilon/H^{3/2}$};
\definecolor{dialinecolor}{rgb}{0.000000, 0.000000, 0.000000}
\pgfsetstrokecolor{dialinecolor}
\node[anchor=west] at (5.000000\du,15.000000\du){A};
\end{tikzpicture}
\end{subfigure}
\caption{Our characterizations imply that the whether player $W$ wins item $1$ or not, remains fixed for all $w\in S_{N,\epsilon}$ for any $v$ with $v_1\leq H\cdot \epsilon$ (in particular for point $A$). This leads to unbounded approximation factor when only item $1$ arrives.}\label{fig:char_final}
\end{figure}
We will now make this formal by first showing the following. 
\begin{enumerate}
\item For any set $S$ of $w$'s where $\pi_2(w)$ and $\pi_1(w)$ remain unchanged, let $\mathcal{Q}(S) = \{v \in \R_+^2: v_1, v_2 >0, v_1 < H \epsilon, \ v_1 \neq \pi_1, v_2 \neq \pi_2, v_2 - \pi_2 \neq v_1 - \pi_1\}$ be the set of $v$ with $v_1\leq H\epsilon$, excluding the indifference points in bidder $V$'s type space (the $\pi$'s are defined w.r.t. some arbitrary $w \in S$). The allocation for indifference points could swing arbitrarily as $w$ moves and therefore we will not focus on them. We show that:

\begin{align}
\label{eq:AllocationVIsFrozen}
&\forall N \geq 2H, \forall \epsilon < \frac{1}{H}, \forall w \in S_{N,\epsilon} = \{ 0 < w_1 < \epsilon,\ w_2 = N\}, \forall v \in \mathcal{Q}(S_{N,\epsilon}) \text{ we have: }\notag\\
&\qquad\qquad\qquad\qquad\allocv_1(v,w) = f_{1,N}(v),\ and,\ \allocv_2(v,w) = f_{2,N}(v).
\end{align}

To prove Equation~\ref{eq:AllocationVIsFrozen}, note that Corollary~\ref{cor:AllocationRegions} says that the only region in bidder $V$'s type-space where $\pi_2$ and $\pi_1$ don't accurately determine the allocation functions for bidder $V$ is the region $B_R^V$ where the allocation could be $\{1\}$ or $\{1,2\}$ in the case where $\pi_2 = \pi_1$. Except for this region, since $\pi_2$ and $\pi_1$ entirely determine the allocation of bidder $V$, Equation~\ref{eq:PisAreFrozen} directly implies Equation~\ref{eq:AllocationVIsFrozen}. We now show that this corner case of $\pi_2 = \pi_1$ can't happen by proving that 

$$\forall N \geq 2H, \forall \epsilon < \frac{1}{H}, \forall w \in S_{N,\epsilon} = \{ 0 < w_1 < \epsilon,\ w_2 = N\}, \text{ we have } \pi_2(w) > \pi_1(w).$$ 

To see this, note that it is necessary that $\pi_2 \geq 1$ because otherwise, the approximation ratio for giving away item $2$ with $w_2 \geq 2H$ to bidder $V$ with $v_2 < 1$ is at least $\frac{2H}{1+\epsilon} > H$ (the $\epsilon$ in the denominator is for capturing item $1$'s welfare of at most $\epsilon$). On the other hand, as discussed in point 2 of the proof of Lemma~\ref{lem:Pi2MinusPi1IndependentOfW1}, we have $\pi_1 < H\epsilon < 1$. Thus $\pi_1 < 1 \leq \pi_2$, eliminating the corner case we hoped to remove.

\item Notice that the allocation to bidder $W$ is simply the complement of allocation to bidder $V$ because the mechanism allocates both items, according to \Cref{lem:Pi1NotZero,lem:item2_always_allocated}. Thus,

\begin{align}
\label{eq:AllocationWIsFrozen}
&\forall N \geq 2H, \forall \epsilon < \frac{1}{H}, \forall w \in S_{N,\epsilon} = \{ 0 < w_1 < \epsilon,\ w_2 = N\}, \forall v \in \mathcal{Q}(S_{N,\epsilon}) \text{ we have: }\notag\\
&\qquad\qquad\qquad\qquad\allocw_1(v,w) = 1 - f_{1,N}(v),\ and,\ \allocw_2(v,w) = 1 - f_{2,N}(v).
\end{align}

Equation~\ref{eq:AllocationWIsFrozen} leads to an unbounded approximation factor. To see this, consider the $S_{2H,\epsilon} = \{w \in \R_+^2: 0 < w_1 < \epsilon, w_2 = 2H\}$. Fix a $v$ s.t. $v_1 = \frac{\epsilon}{H^{3/2}}$ (if such a $v_1$ happens to be a boundary point w.r.t. $S_{2H,\epsilon}$, pick a $v_1$ arbitrarily close to that point). Equation ~\ref{eq:AllocationWIsFrozen} implies that for all $w \in S_{2H,\epsilon}$, we have $\allocw_1(v,w) = 1$ or $\allocw_1(v,w) = 0$, i.e., always $1$ or always $0$. 
\begin{enumerate}
\item Suppose $\allocw_1(v,w) = 1$ for all $w \in S_{2H,\epsilon}$. Then consider a point $(w_1 = \frac{\epsilon}{H^3}, w_2 = 2H)$, and focus on the case where item $2$ never arrives. Giving item $1$ to bidder $W$ results in approximation factor of $\frac{\epsilon/H^{3/2}}{\epsilon/H^3} = H^{3/2} > H$. Thus a $H$-approximation factor is not possible in this case.

\item Suppose $\allocw_1(v,w) = 0$ for all $w \in S_{2H,\epsilon}$. Then consider a point $(w_1 = \epsilon/2, w_2 = 2H)$ and focus on the case where item $2$ never arrives. Giving item $1$ to bidder $V$ results in approximation factor of $\frac{\epsilon/2}{\epsilon/H^{3/2}} = \frac{H^{3/2}}{2} > H$. Thus a $H$-approximation factor is not possible in this case.
\end{enumerate}
This means that a deterministic DSIC+IR mechanism cannot obtain a $H$-approximation for any $H$. This proves the theorem. 
\end{enumerate}
\end{proofof}

\section{Conclusion and Open Questions}\label{sec:conclusion}
In the largely unexplored space of truthful mechanisms for multi-parameter 
bidders with online supply, we ask the first question that anyone would ask, 
and show a strong impossibility result.  On the technical side, our results 
shed ample light on the nature of restrictions that truthfulness and online 
supply place on the kinds of allocation profiles possible. These restrictions 
are quite non-trivial and do not follow from simple weak-monotonicity/cyclic 
monotonicity conditions. On the conceptual side, our results point toward what 
might be reasonably expected in terms of truthfulness in an online supply 
setting. For instance, it says that there must be some source of randomness for 
a mechanism to be able to guarantee any reasonable social welfare. In this 
regard, there are several interesting directions that our work opens up.
\paragraph{Does randomness help?}
 There are three prominent sources of randomness: a) Randomized mechanisms b) 
 Stochastic arrival and c) Bayesian setting (values drawn from a distribution). 
For each of these sources of randomness, without using the other sources of randomness, what is the optimal approximation factor one can get? Let $n$ be the number of bidders and $m$ be the number of items that could possibly arrive (lesser number of items could arrive too). 
\paragraph{Randomized mechanisms}
It is easy to get a $\min\{n,m\}$ approximation with a randomized mechanism. 
Choosing one of the $n$ bidders uniformly at random and allocating all items to 
her gives an $n$ approximation. Choosing one of the $m$ items uniformly at 
random, and selling that item using a second price auction, throwing away all 
other items gives a factor $m$ approximation (note that the adversary who 
decides the supply doesn't know the mechanism's coin toss). Is there a 
mechanism that gets \emph{anything better} than this trivial $\min\{m,n\}$ 
approximation? What is the best approximation factor one can get?  Logarithmic? 
Constant? Note that the constant factor impossibility result 
of~\cite{BabaioffBR15} doesn't apply here because we allow payments to be 
computed after knowing the entire supply. We conjecture that nothing better 
than the $\min\{n,m\}$ approximation with a randomized mechanism. 

\paragraph{Stochastic arrival}
What happens if we allow arrival to be stochastic, according to a \emph{known} distribution, but ask for mechanisms to be deterministic? For the $2$ bidder, $2$ item setting we studied, here is a stochastic arrival model. Item $1$ arrives for sure, and item $2$ arrives with a known probability $p$. If all we ask for is a truthful-in-expectation mechanism, then one can get the optimal welfare by just implementing the VCG mechanism, namely, compute $\max\{v_1 + pw_2, w_1 + pv_2\}$. If the former is larger, bidder $V$ gets $1$ and $W$ gets $2$, and if the latter is larger, bidder $V$ gets $2$ and $W$ gets $1$. Truthful payments are easy to compute. 

But what if we ask for universal truthfulness over the stochasticity in arrival? For the simple setting of $2$ bidders and $2$ items, there is a $2$ approximation mechanism that allocates the first item to bidder $V$ if $v_1 \geq \max\{w_1, pw_2\}$ or bidder $W$ if $w_1 \geq \max\{v_1, pv_2\}$. When second item arrives, offer it to bidder $V$ at a price of $\max\{\frac{w_1}{p}, w_2\} + (v_1 - \max\{w_1, pw_2\})^+$ and similarly offer it to bidder $W$ at a price of $\max\{\frac{v_1}{p}, v_2\} + (w_1 - \max\{v_1, pv_2\})^+$. Whoever wins the second item pays the posted price and nothing more, even if they won the first item. If bidder $V$ wins the first item alone (either because the second item doesn't arrive, or because he lost it), he pays $\max\{w_1, pw_2\}$ and similarly when bidder $W$ wins the first item alone, he pays $\max\{v_1, pv_2\}$. It is easy to check that this mechanism is truthful, and gets a $2$ approximation. Basically, the mechanism is guaranteed to get a welfare of $\max\{v_1, w_1, pv_2, pw_2\}$, which gives a $2$ approximation. Is there a natural generalization of this mechanism to $n$ bidders and $m$ items that gives a $\min\{m,n\}$ approximation? Are there are better approximations?

\paragraph{Bayesian setting}

What if bidders' values are drawn from a (known) joint distribution? If the bidder's distributions are independent, how good an approximation factor can we get with deterministic mechanisms? With randomized mechanisms can we get much better than the trivial $\min\{m, n\}$ approximation factor? With the further restriction that the distributions across items are independent too, what are the best factors possible?  We currently don't have answers to these questions even for the 2 bidder 2 item case.

\bibliographystyle{plainnat}
\bibliography{poa_survey}

\appendix
\section{Proof of missing lemmas and theorems}
\label{app:missing_proofs}
\begin{lemma}[Restatement of Lemma~\ref{lem:VerticalLine}]
For every deterministic DSIC+IR mechanism, there must exist a threshold 
function $\pi_1(w)$  such that:
\begin{align*}
&\text{$\forall w \in \R_{+}^2$, $\forall v \in \R_{+}^2:$}\ \  \allocv_1(v,w) 
= 
\begin{cases}
1 & \text{ if } v_1 > \pi_1(w)\\
0 & \text{ if } v_1 < \pi_1(w)
\end{cases}
\comment{
&\text{$\forall v \in \R_{+}^2$, $\forall w \in \R_{+}^2:$}\ \ 
\allocw_1(v,w) = 
\begin{cases}
1 & \text{ if } w_1 > \phi_1(v)\\
0 & \text{ if } w_1 < \phi_1(v)
\end{cases}
}
\end{align*}
\end{lemma}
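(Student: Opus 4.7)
My plan is to restrict attention to the scenario in which only item $1$ arrives, since $\allocv_1$ must be committed in that moment regardless. In this scenario, bidder $V$'s realized utility from a report $\tilde{v}=(\tilde{v}_1,\tilde{v}_2)$ is $\allocv_1(\tilde{v},w)\cdot v_1 - p_1(\tilde{v},w)$, where $p_1(\cdot,\cdot)$ denotes the payment function in the one-item realization. Notice that V's true $v_2$ does not appear in this utility, although the \emph{reported} $\tilde{v}_2$ can still influence both the allocation and the payment. DSIC demands that truth-telling be optimal in this realization for every fixed true type $(v_1,v_2)$ and every $w$.

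The first step is to fix $v_2$ and $w$ and apply a standard Myerson-style single-parameter argument to the restricted mechanism $v_1 \mapsto (\allocv_1((v_1,v_2),w),\, p_1((v_1,v_2),w))$. Pairwise DSIC between reports $(v_1,v_2)$ and $(v_1',v_2)$ forces the allocation to be monotone in $v_1$ with some threshold $\pi_1(v_2,w)$; the winning payment $P_1(v_2,w)$ and losing payment $P_0(v_2,w)$ to each be constant in $v_1$; and these to satisfy $P_1(v_2,w)=\pi_1(v_2,w)+P_0(v_2,w)$, with $P_0(v_2,w)\le 0$ by IR.

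The main obstacle is to show that this threshold does not depend on the reported $v_2$ coordinate, i.e. $\pi_1(v_2,w)=\pi_1(v_2',w)$ for all $v_2,v_2'$. I will argue by contradiction: assume WLOG $\pi_1(v_2,w)<\pi_1(v_2',w)$ and pick any $v_1$ strictly between them. Then the true type $(v_1,v_2)$ wins under truthful reporting (utility $v_1-P_1(v_2,w)$) and would lose under the cross-dimensional deviation $(v_1,v_2')$ (utility $-P_0(v_2',w)$), so DSIC gives $v_1-P_1(v_2,w)\ge -P_0(v_2',w)$; symmetrically, the deviation by true type $(v_1,v_2')$ to $(v_1,v_2)$ yields the reverse inequality. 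Combining, $v_1 = P_1(v_2,w)-P_0(v_2',w)$ must hold for \emph{every} $v_1$ in the open interval $(\pi_1(v_2,w),\pi_1(v_2',w))$, which is impossible since the right-hand side does not depend on $v_1$.

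Hence $\pi_1(v_2,w)$ is independent of $v_2$; calling this common value $\pi_1(w)$ yields exactly the claimed characterization $\allocv_1(v,w)=1$ for $v_1>\pi_1(w)$ and $\allocv_1(v,w)=0$ for $v_1<\pi_1(w)$, for all $v\in\R_+^2$.
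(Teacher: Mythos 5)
Your proposal is correct and follows essentially the same route as the paper: restrict to the realization where only item~$1$ arrives, invoke single-parameter (Myerson) monotonicity in $v_1$ to obtain a threshold $\pi_1(v_2,w)$, and then show the threshold cannot actually depend on $v_2$. The one place you are a bit more careful than the paper is the $v_2$-independence step: the paper argues via the critical-price characterization (``winner pays $\pi_1$''), which implicitly normalizes the losing payment to zero, while you keep the losing transfer $P_0(v_2,w)$ general and instead derive, from the pair of DSIC inequalities between $(v_1,v_2)$ and $(v_1,v_2')$, an equality $v_1 = P_1(v_2,w) - P_0(v_2',w)$ that must hold for a continuum of $v_1$, which is impossible. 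This exploits the same key observation the paper leans on (in the one-item realization, utility depends only on the true $v_1$, so the $v_2$ coordinate is a pure manipulation variable), but closes the loop without any normalization assumption; it is a clean refinement rather than a different proof.
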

\begin{proof}
Consider the case where item $1$ alone arrives. In this case, DSIC implies that 
$\allocv_1(v,w)$ is monotone in $v_1$. I.e., for a fixed $w$ and $v_2$, there 
exists a threshold $\pi_1$ that is independent of $v_1$ such that 
$\allocv_1(v,w) = 1$ when $v_1 > \pi_1$ and $\allocv_1(v,w) = 0$ when $v_1 < 
\pi_1$.  Furthermore, when $v_1 > \pi_1$ it must be that $\pricev(v,w) = \pi_1$ 
for DSIC and IR to hold (i.e., bidder $V$ pays the ``critical'' or minimal 
payment required to get the allocation he gets). This implies that $\pi_1$ is 
independent of $v_2$, because otherwise bidder $A$ will have an incentive to 
report the $v_2$ for which $\pi_1$ is the smallest. 
Thus $\pi_1$ can be a function only of $w$. This proves the lemma. 
\end{proof}

\begin{lemma}[Restatament of Lemma~\ref{lem:HorizontalLine}]
For every deterministic DSIC+IR mechanism, there must exist a  threshold 
function $\pi_2(w)$ such that:
\begin{align*}
\text{$\forall w \in \R_+^2$, } \forall v \in \R_{+}^2 \text{ s.t. } v_1 < 
\pi_1(w):   &~~\allocv_2(v,w) = 
\begin{cases}
1 & \text{ if  } v_2 > \pi_2(w) \\
0 & \text{ if } v_2 < \pi_2 (w)
\end{cases}\\
&~~
\pricev(v,w) = 
\begin{cases}
\pi_2(w) & \text{ if  } v_2 > \pi_2(w) \\
0  & \text{ if } v_2 < \pi_2(w)
\end{cases}
\end{align*}
\end{lemma}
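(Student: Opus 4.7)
Fix $w \in \mathbb{R}_+^2$ throughout. By Lemma~\ref{lem:VerticalLine}, for every $v$ with $v_1 < \pi_1(w)$ bidder $V$ loses item~$1$, so the only part of $V$'s allocation that can vary with her report in this slab is her allocation of item~$2$. Since item~$2$'s allocation is only meaningful when the second item actually arrives, I restrict attention to that arrival scenario; all payment claims refer to that scenario (payments in the one-item scenario are already pinned down by Lemma~\ref{lem:VerticalLine} and Lemma~\ref{lem:1sPayPi1}). The plan is to first treat $V$'s subproblem as a single-parameter one in $v_2$ (with $v_1$ held fixed), and then argue that the resulting threshold does not in fact depend on $v_1$ as long as $v_1 < \pi_1(w)$.

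First I would fix $v_1 < \pi_1(w)$ and vary only $v_2$. Because $V$'s allocation of item~$1$ is identically $0$ on this ray, $V$'s utility is $v_2 \cdot \allocv_2(v,w) - \pricev(v,w)$, so DSIC along this ray is exactly the standard single-parameter DSIC problem for one item. The classical Myerson argument then gives a threshold $\tau(v_1)$ (depending also on $w$) such that $\allocv_2(v,w) = 1$ if $v_2 > \tau(v_1)$, $= 0$ if $v_2 < \tau(v_1)$, and whenever $V$ wins item~$2$ she is charged the critical value $\tau(v_1)$. When $v_2 < \tau(v_1)$, bidder $V$ gets nothing, so IR forces $\pricev(v,w) \leq 0$; combined with the standard no-positive-transfers convention, $\pricev(v,w) = 0$ on this sub-ray.

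Next I would show that $\tau(v_1)$ does not depend on $v_1$ on the interval $[0,\pi_1(w))$, and call the common value $\pi_2(w)$. Suppose for contradiction that there exist $v_1, v_1' \in [0,\pi_1(w))$ with $\tau(v_1) < \tau(v_1')$. Pick any $v_2 \in (\tau(v_1), \tau(v_1'))$. A bidder of true type $(v_1', v_2)$, under truthful reporting, loses item~$2$ (and item~$1$, since $v_1' < \pi_1(w)$) and obtains utility $0$. By misreporting to $(v_1, v_2)$ she still loses item~$1$ (because $v_1 < \pi_1(w)$, again by Lemma~\ref{lem:VerticalLine}) but now wins item~$2$ at price $\tau(v_1) < v_2$, obtaining strictly positive utility, contradicting DSIC. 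Hence $\tau(v_1)$ is constant in $v_1$ on $[0,\pi_1(w))$, and setting $\pi_2(w)$ equal to this constant gives the stated characterization of both the allocation and the payment.

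The only subtle step is the last one: one must verify that the alleged beneficial deviation from $(v_1',v_2)$ to $(v_1,v_2)$ really does leave $V$'s item~$1$ outcome (and hence the only coupling between the two items) unchanged. This is precisely where the hypothesis $v_1,v_1' < \pi_1(w)$ is used together with Lemma~\ref{lem:VerticalLine}. Everything else is a direct application of single-parameter DSIC+IR to a one-dimensional slice of the type space.
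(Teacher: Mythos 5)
Your proof is correct and follows essentially the same two-step decomposition as the paper's: first characterize the item-$2$ allocation and payment along a vertical slice $\{v_1\} \times \R_+$ for fixed $v_1 < \pi_1(w)$, then show the resulting threshold cannot depend on $v_1$. The only cosmetic differences are that you invoke the single-parameter Myerson characterization directly rather than re-deriving it (the paper argues via constancy of payment on the winning set first), and that your $v_1$-independence argument uses a losing type deviating to become a winner, whereas the paper's uses a winning type deviating to pay a smaller critical price. One small observation: you explicitly invoke a no-positive-transfers normalization to pin down the loser's payment at $0$; the paper's proof leaves this implicit even though the lemma statement asserts $\pricev = 0$ for losers, so being explicit here is a mild improvement rather than a gap.
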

\begin{proof}
	Fix a $w$. Consider the set $S = \{v \in \R_{+}^2: v_1 < \pi_1, 
	\allocv_2(v,w) = 1\}$.
	\begin{enumerate}
		\item Lemma~\ref{lem:VerticalLine} says that $\allocv_1(v,w) = 0$ for 
		all $v \in S$. This means that for all $v \in S$ ,we have 
		$\allocv_1(v,w) = 0$ and $\allocv_2(v,w) = 1$, i.e., for all $v \in S$, 
		the allocation for bidder $V$ remains fixed. DSIC therefore implies 
		that $\pricev(v,w)$ remains fixed for all $v \in S$. 
		\item $\pricev(v,w)$ being fixed for all $v \in S$, together with DSIC 
		means that there exists a threshold function $\pi_2$ independent of 
		$v_2$ (but possibly dependent on $v_1$, and of course $w$) such that 
		the set $S$ is sandwiched as $S_{-} \subseteq S \subseteq S_+$, where 
		$S_{-} = \{v: v_1 < \pi_1, v_2 > \pi_2\}$, and $S_+ = \{v: v_1 < \pi_1, 
		v_2 \geq \pi_2\}$.  I.e., for each fixed $v_1$, the allocation of item 
		$2$ for bidder $V$ along the $v_2$ axis jumps from $0$ to $1$ when $v_2 
		> \pi_2(v_1, w)$ and stays $1$ there after. 
		\item But does $\pi_2(v_1,w)$ really depend on $v_1$? Point 2 above, 
		along with DSIC and IR implies that for all $v \in S_{-}$, we have 
		$\pricev(v,w) = \pi_2(v_1,w)$ (i.e., DSIC demands that bidder $V$ pay 
		the minimal payment to get the allocation he gets). This means that for 
		all $v: v_1 < \pi_1$, we need $\pi_2$ to be independent of $v_1$ 
		because otherwise a $v \in S_{-}$ has an incentive to report a bid in 
		$\argmin_{v \in S_{-}} \pi_2(v_1, w)$. Thus $\pi_2$ is purely a 
		function of $w$, proving the lemma.
	\end{enumerate}
\end{proof}

\begin{lemma}[Restatement of Lemma~\ref{lem:1sPayPi1}]
For every deterministic DSIC+IR mechanism,  
$$\forall w \in \R_+^2, \forall v \in \R_{+}^2  \text{ s.t. } \{\allocv_1(v,w) 
= 1, \ \allocv_2(v,w) = 0\},\text{ we have } \pricev(v,w) = \pi_1$$ 
\end{lemma}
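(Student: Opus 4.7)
The plan is to first show, via a standard dominant-strategy argument, that the payment $p := \pricev(v,w)$ is the same for every type $v$ for which bidder $V$ is allocated exactly item $1$: if two such types had different payments, the one charged more could profitably mimic the other (the mechanism would still allocate just item $1$, but at a strictly lower price). Denote the common value by $p$; the rest of the proof pins it down by establishing $p \geq \pi_1$ and $p \leq \pi_1$ separately.

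For the lower bound $p \geq \pi_1$, I would take a type $v' = (v'_1, v'_2)$ with $v'_1 < \pi_1$ whose outcome is pinned down by Lemmas~\ref{lem:VerticalLine} and~\ref{lem:HorizontalLine} (for instance $v'_2 < \pi_2$, giving empty allocation and zero utility; the degenerate case $\pi_2 = 0$ is handled identically by pushing $v'_2 \downarrow 0$ and observing the utility stays arbitrarily close to $0$). If $p < \pi_1$, picking $v'_1 \in (p, \pi_1)$ would let $v'$ profitably misreport as $v$ and obtain allocation $\{1\}$ at price $p$ for utility $v'_1 - p > 0$, contradicting DSIC. Taking $v'_1$ arbitrarily close to $\pi_1$ from below yields $p \geq \pi_1$.

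For the upper bound $p \leq \pi_1$, I would consider the boundary-type $v'' = (\pi_1 + \delta, 0)$ for small $\delta > 0$. By Lemma~\ref{lem:VerticalLine}, $\allocv_1(v'',w) = 1$, so $v''$ receives either $\{1\}$ or $\{1,2\}$. In the first subcase, Step~1 forces its payment to equal $p$, and IR immediately yields $p \leq v''_1 = \pi_1 + \delta$. In the second subcase, let $p''$ denote its payment; unit-demand IR gives $p'' \leq \max(v''_1,v''_2) = \pi_1 + \delta$, and applying DSIC to the deviation from true type $v$ to report $v''$ yields $v_1 - p \geq \max(v_1, v_2) - p'' \geq v_1 - p''$, whence $p \leq p'' \leq \pi_1 + \delta$. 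Letting $\delta \downarrow 0$ completes the bound, so $p = \pi_1$.

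The main subtlety lies in the upper-bound step: I cannot rule out that $v''$ is actually allocated both items, since at this stage of the paper the allocation in the strip $v_1 > \pi_1$ with small $v_2$ is unconstrained (this is precisely the region later christened $B_R^V$). The trick is that even in the ``both items'' branch, unit-demand utility still caps the payment at $\max(v''_1, v''_2)$, and then a single DSIC deviation from true type $v$ to report $v''$ transfers that cap onto $p$ itself.
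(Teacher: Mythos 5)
Your proof is correct and follows essentially the same strategy as the paper's: DSIC forces a common price $p$ for all types in $S = \{v : \allocv_1(v,w) = 1,\ \allocv_2(v,w) = 0\}$, and a boundary type near $(\pi_1, 0)$ pins $p$ down to $\pi_1$. Your genuine refinement is the explicit case split when $v'' = (\pi_1 + \delta, 0)$ might be allocated $\{1,2\}$ rather than $\{1\}$ alone: the paper's writeup reasons about a type $(v_1, 0)$ with $v_1 \geq \pi_1$ as if it necessarily lay in $S$, but at this stage of the development nothing constrains the allocation of item $2$ in the strip that is later christened $B_R^V$, so such a type could in principle receive both items. Your bridge --- unit-demand IR bounds $p''$ by $\max(v''_1, v''_2) = \pi_1 + \delta$, then DSIC for a true type $v \in S$ deviating to $v''$ gives $v_1 - p \geq \max(v_1, v_2) - p'' \geq v_1 - p''$, hence $p \leq p''$ --- closes that gap without needing any extra structure. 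Making the degenerate $\pi_2 = 0$ case explicit in the lower bound is also a small tidying; both proofs implicitly rely on the (harmless) convention that the statement is vacuous when $S$ is empty.
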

\begin{proof}
Fix a $w$. Let $S = \{v: \allocv_1(v,w) = 1, \allocv_2(v,w) = 0\}$. DSIC 
implies that $\pricev(v,w)$ remains the same for all $v \in S$. Consider a $v$ 
s.t. $v_1 \geq \pi_1$ and $v_2 = 0$. We claim that for such a $v$, 
$\pricev(v,w) = \pi_1$. Suppose not and let $\pricev(v,w) = p$. 
	\begin{enumerate}
		\item If $ p < \pi_1$, a bidder $V$ with value $(\frac{p+\pi_1}{2}, 0)$ 
		will deviate to report a $v$ s.t. $v_1 > \pi_1$ and $v_2 = 0$ to get a 
		positive utility instead of the $0$ utility he currently gets. 
		\item If on the other-hand $p > \pi_1$, then a bidder $V$ with value 
		$(\frac{p+\pi_1}{2}, 0)$ will deviate to report a $v$ s.t. $v_1 < 
		\pi_1$ and $v_2 = 0$, to get a $0$ utility instead of the negative 
		utility he currently gets. This means that $p = \pi_1$. 
	\end{enumerate}
\end{proof}
\end{document}